\renewcommand{\[}{\begin{equation}}
\renewcommand{\]}{\end{equation}}
\newcommand{\ket}[1]{|#1\rangle}
\newcommand{\bra}[1]{\langle#1|}
\newcommand{\braket}[2]{\langle#1|#2\rangle}
\newcommand{\pro}[2]{|#1\rangle\langle#2|}
\newcommand{\tr}{\mathrm{tr}}
\newcommand{\norm}[1]{\left\lvert\left\lvert#1\right\rvert\right\rvert}
\newcommand{\R}{{\hat{\rho}}}
\renewcommand{\S}{{\hat{\sigma}}}
\newcommand{\I}{{\hat{I}}}
\newcommand{\C}{{\mathcal{C}}}
\newcommand{\Cvn}{{\C}}
\newcommand{\Cp}{{\C_{B_\mathrm{projective}}}}
\newcommand{\bC}{{\boldsymbol{\mathcal{C}}}}
\renewcommand{\P}{\hat{P}}
\newcommand{\K}{\hat{K}}
\newcommand{\bi}{{\boldsymbol{i}}}
\newcommand{\bj}{{\boldsymbol{j}}}
\renewcommand{\bm}{{\boldsymbol{m}}}
\newcommand{\hPi}{{\hat{\Pi}}}
\newcommand{\PO}{{\mathcal{A}}}
\newcommand{\HS}{\mathcal{H}}
\newcommand{\U}{{\hat{U}}}
\newcommand{\bx}{{\boldsymbol{x}}}
\definecolor{mygray}{gray}{0.6}
\theoremstyle{definition}
\newtheorem{definition}{Definition}
\newtheorem{theorem}{Theorem}
\newtheorem{lemma}{Lemma}
\definecolor{dfcol}{cmyk}{1, 0.2108, 0.13, 0.3}
\newcommand{\df}[1]{\ifthenelse{\boolean{}}{\textcolor{dfcol}{[{\bf DF}: #1]}}{}}
\begin{document}

\title{Quantifying Information Extraction using Generalized Quantum Measurements
}


\author{Dominik \v{S}afr\'{a}nek}
\email{dsafranekibs@gmail.com}
\affiliation{Center for Theoretical Physics of Complex Systems, Institute for Basic Science (IBS), Daejeon 34126, Republic of Korea}
\affiliation{SCIPP and Department of Physics, University of California, Santa Cruz, California 95064, USA}

\author{Juzar Thingna}
\email{jythingna@gmail.com}
\affiliation{Center for Theoretical Physics of Complex Systems, Institute for Basic Science (IBS), Daejeon 34126, Republic of Korea}
\affiliation{Basic Science Program, University of Science and Technology (UST), Daejeon 34113, Republic of Korea}

\date{07/03/2022}

\begin{abstract}
Observational entropy is interpreted as the uncertainty an observer making measurements associates with a system. So far, properties that make such an interpretation possible rely on the assumption of ideal projective measurements. We show that the same properties hold even when considering generalized measurements. Thus, the interpretation still holds: Observational entropy is a well-defined quantifier determining how influential a given series of measurements is in information extraction. This generalized framework allows for the study of the performance of indirect measurement schemes, which are those using a probe. Using this framework, we first analyze the limitations of a finite-dimensional probe. Then we study several scenarios of the von Neumann measurement scheme, in which the probe is a classical particle characterized by its position. Finally, we discuss observational entropy as a tool for quantum state inference. Further developed, this framework could find applications in quantum information processing. For example, it could help in determining the best read-out procedures from quantum memories and to provide adaptive measurement strategies alternative to quantum state tomography.
\end{abstract}
\maketitle

\section{Introduction}
Entropy is a fundamental concept spanning diverse fields such as thermodynamics~\cite{parrondo2015thermodynamics}, machine learning~\cite{rubinstein2013cross}, cosmology \cite{Almheiri2021}, and many more. It is also one of the most misunderstood concepts due to the abundance of definitions. Some definitions come from \emph{thermodynamics}, such as Clausius-~\cite{clausius1867the}, microcanonical- (surface)~\cite{dunkel2014consistent,hilbert2014thermodynamic,buonsante2016dispute}, and Gibbs-entropy~\cite{jaynes1965gibbs,goldstein2019gibbs}. Whereas, some from \emph{information-theoretic} perspective, such as Shannon-~\cite{shannon1948mathematical}, von Neumann-~\cite{von1927thermodynamik}, or entanglement-entropy~\cite{wiseman2003entanglement, bennett2011postulates,plenio2014introduction,girolami2017quantifying}. Among these, various permutations can be made by using different types of \emph{coarse-grainings}, such as Boltzmann~\cite{boltzmann2003further,goldstein2004boltzmann,goldstein2019gibbs}, coarse-grained~\cite{von2010proof,wehrl1978general,gemmer2014entropy,safranek2019short,safranek2019long,engelhardt2019coarse,Almheiri2021,safranek2021brief}, or Kolmogorov-Sinai-entropy~\cite{farmer1982information,Latora1999kolmogorov,frigg2004sense,jost2006dynamical}. 

These notions often overlap under certain circumstances. For instance, Boltzmann entropy defined for a general coarse-graining reduces to microcanonical entropy for energy coarse-graining. 
In general, since each definition corresponds to a different physical scenario, it needs to be carefully applied to the problem at hand. Despite these subtleties, there is one common denominator to all of these entropies: An increase in every type of entropy represents loss of a \emph{specific} type of information, and a decrease represents an information gain~\footnote{Thermodynamic entropies encode information about the energy eigenstate (microstate) of the system due to an increase in energy. Whereas, information-theoretic entropies capture details about the state of the system due to its interaction with an external system.}. 

In this work, we will focus on a coarse-graining-based type of entropy, called \emph{observational entropy} [see Append.~\ref{app:History} for a historical overview]. 
It has been studied in relation to black holes~\cite{Almheiri2021,schindler2021unitarity}, big bang~\cite{lautaro2021quantum}, high-resolution measurement setups~\cite{uzdin2021passivity}, correlated finite baths~\cite{riera2021quantum}, nonequilibrium entropy growth~\cite{safranek2019long,safranek2019classical,safranek2021brief},
entanglement entropy~\cite{faiez2020typical}, 
historically standard entropies~\cite{goldstein2019gibbs}, and various other scenarios~\cite{gemmer2014entropy,hu2019generalized,anza2017information,lent2019quantum,yoshida2020boltzmann,schmidt2020framework}. Moreover, in open quantum systems, it has also been used to systematically bind the first and second law of quantum thermodynamics~\cite{strasberg2021first}. All of these works consider only projective coarse-grainings, i.e., those given by a complete set of projectors, which represent a projective measurement. Projective measurements are a natural first step, since they can be cleanly connected to macrostates. This is simply because each projector corresponds to a Hilbert space subspace and defines a set of states that give the same value of a macroscopic observable, i.e., a macrostate.

In this work, we aim to go beyond the restraints of projective measurements and extend the properties of observational entropy to generalized measurements given by a set of quantum operations, called positive operator-valued measure (POVM)\footnote{Here, we are connecting with well-known notion of POVM at the expense of being slightly imprecise. A POVM can uniquely identify the probability of a measurement outcome given the density matrix, but it does not identify the resulting state after the measurement. The generalized measurement, which we will describe in the next section, can do both, and thus also defines the corresponding POVM.}.
In these generalized measurements, the measurement probe (aka a pointer) can briefly interact with the quantum system collecting information about the measured observable. The pointer stores this information and can be read out at any time without the need to measure the system directly. Within such generalized measurements, the measured observable need not commute at different times or with the system Hamiltonian (unlike quantum nondemolition measurement~\cite{braginsky1980quantum,nakajima2019quantum}). Also, such generalized measurements cannot always be described by rank-1 POVMs~\cite{kraus1983states}. Moreover, the pointer can even be reused to store the sum of values of an observable~\cite{thingna2020}, without any constraints on how the system dynamically interacts with the pointer.

The notion of such generalized measurements is not new and dates back to original works by von Neumann~\cite{von1955mathematical}. The idea has been implemented to study unambiguous quantum state discrimination~\cite{huttner1996unambigious,clarke2001experimental,biggerstaff2009experiments,becerra2013implementation,zhao2015experimental}, measuring non-commuting observables~\cite{chen2019experimental}, certification and device independence testing~\cite{,tavakoli2020self,smania2020experimental}, post-selection of experimental outcomes~\cite{lu2014experimental}, and information-theoretic uncertainty relations~\cite{sponar2021experimental}. 

Given the generality of these measurements and their experimental relevance, we re-introduce observational entropy within the framework of generalized measurements. The resulting observational entropy measures the remaining uncertainty about the initial state of the system that was not uncovered by a sequence of given (possibly non-commuting) generalized measurements. In other words, its value measures how much information these measurements extract -- the smaller the value, the more extracted information. Since this entropy can be calculated for different coarse-grainings, representing different measurements, its value serves as a performance quantifier of different (and completely general) measurement schemes in information extraction. 
Overall, our work provides a fundamental backbone to information extraction: i) it presents a quantifier for how much information we can extract, justified by the presented theorems, and ii) it is both computable and measurable by experimentalists performing any generalized measurements on quantum systems. 

\subsection{Contributions and assumptions}

The paper is organized as follows: In Sec.~\ref{sec:properties}, we first review the known results on observational entropy with projective coarse-grainings. Then we extend this concept to generalized measurements and show that with the new definition, the corresponding three fundamental theorems on observational entropy follow. These confirm the interpretation of this quantity as a measure of observers' ignorance given that they can perform several, but not all types of measurements. This represents our main result. The measurements we consider are the most general ones: thus, no further generalization of observational entropy with regard to a type of measurement is possible. Section~\ref{sec:app_indirect_measurement} considers a the scenario of an indirect measurement scheme. Specifically, it aims to answer how well an indirect measurement performs as compared to a direct measurement, based on general considerations such as the dimension of our auxiliary system (pointer/probe). Section~\ref{sec:von_neumann} shows a specific example of an indirect measurement scheme called the von Neumann measurement scheme. In this scheme, the probe is a massive particle whose position is measured. In Sec.~\ref{sec:numerics}, we perform a number of numerical simulations of this scheme and illustrate how our general theory can be applied to \emph{quantitatively} describe which measurement strategy outperforms others in information extraction. In Sec.~\ref{sec:quantum_tomography}, we show that if a measurement is found such that its corresponding observational entropy is equal to the von Neumann entropy, it is possible to reconstruct the quantum state of the system. Thus, we provide an explicit algorithm for quantum state inference. This opens an exciting new direction, potentially providing an alternative to quantum tomography. Finally, in Sec.~\ref{sec:conclusion}, we summarize our results and provide possible future directions and applications.


\section{Observational entropy with generalized measurements: Bounds on extracted information}\label{sec:properties}

In this section, we first review previously found analytical properties of observational entropy. These properties inspired the interpretation of observational entropy as a measure of the observer's uncertainty about a quantum system. Until now, however, observational entropy was defined only for coarse-grainings given by a projective measurement. To see whether this interpretation holds in general, we extend this notion to include completely general measurements. As the main result of this section, we prove that all properties of the original definition translate also to this general case. This shows that its interpretation as a measure of observer's uncertainty is still valid. Also, these results will illuminate the way on how to use this quantity in its full generality.

\subsection{Observational entropy with projective measurements: Review}

First, we review the basic ideas governing observational entropy for projective coarse grainings that have been extensively discussed in Refs.~\cite{safranek2019classical, safranek2019short, safranek2019long, safranek2021brief}.

Defining $\P_i$ as the projector onto a subspace $\HS_i$, we collect these projectors into a set $\C=\{\P_i\}$, which we call a \emph{coarse-graining}. Projectors in this set are Hermitian ($\P_i^\dag=\P_i$), orthogonal ($\P_i \P_j = \P_i \delta_{ij}$), and they satisfy the completeness relation ($\sum_i\P_i=\hat{I}$). Conversely, any coarse-graining $\C=\{\P_i\}$ with the above properties defines a decomposition of the Hilbert space $\HS=\bigoplus_i\HS_i$, in which a subspace $\HS_i$ (macrostate) is spanned by eigenvectors of $\P_i$. Thus, in this construction, we can either decompose the unity operator into orthonormal projectors ($\hat{I}=\sum_i\P_i$), or equivalently decompose the Hilbert space into subspaces ($\HS=\bigoplus_i\HS_i$).

Each observable defines a coarse-graining $\C_{\hat{A}}=\{\P_a\}$ through its spectral decomposition $\hat{A}=\sum_aa \P_a$. Thus it also defines the corresponding decomposition of the Hilbert space $\HS=\bigoplus_a\HS_a$. Here, each eigenvalue $a$ represents the macroscopic property and is one of the measurement outcomes obtained when measuring that observable. Generally, any coarse-graining can be viewed as a type of measurement. It does not need to be complete, i.e., it does not have to project onto a pure state (which corresponds to rank-1 projectors). Instead, we allow projectors $\P_a$ to have an arbitrary rank. By definition, this rank is the same as the dimension of the subspace $\HS_a$ the projector projects onto.

Given a single coarse-graining, \emph{observational entropy} (also known as coarse-grained entropy~\cite{von1955mathematical,wehrl1978general,Almheiri2021}) is defined as~\cite{safranek2019short,safranek2019long,safranek2019classical,faiez2020typical,schindler2020entanglement,strasberg2021first},
\[
\label{eq:def1}
S_{\C}\equiv-\sum_{i}p_i\ln \frac{p_i}{V_i},
\]
where $p_i=\tr[\P_i\R]$ denotes the probability of finding the state in macrostate $\HS_i$. The volume of that macrostate $V_i=\tr[\P_i]=\dim \HS_i$ is the number of orthogonal states in it. Equivalently, we can call $p_i$ the probability of obtaining a measurement outcome $i$, when measuring in a basis given by coarse-graining $\C$.

This can be generalized to multiple coarse-grainings, for which observational entropy is defined as~\cite{safranek2019long},
\[
\label{eq:def2}
S_{\C_1,\dots,\C_n}\equiv-\sum_{\bi}p_\bi\ln \frac{p_\bi}{V_\bi}.
\]
Above, $\bi=(i_1,\dots,i_n)$ is a vector of outcomes (properties of the system), $p_\bi=\tr[\P_{i_n}\cdots\P_{i_1}\R\P_{i_1}\cdots\P_{i_n}]$ is the probability of obtaining these outcomes in the given order, $V_\bi=\tr[\P_{i_n}\cdots\P_{i_1}\cdots\P_{i_n}]$ is an (ordered) volume of the corresponding macrostates. Note that here, the equivalence between the projectors and subspaces has already been lost: volume $V_\bi$ can be a fraction, and it does not, in general, correspond to a dimension of any subspace (the correspondence remains only if all of the coarse-grainings commute, in which case a joint subspace exists).

Observational entropy satisfies two important properties~\cite{safranek2019long},
\begin{align} 
S_{\mathrm{vN}}\leq S_{\C_1,\dots,\C_n}\leq \ln \mathrm{dim}\HS,\label{eq:bounds}\\ 
S_{\C_1,\dots,\C_n,\C_{n+1}}\leq  S_{\C_1,\dots,\C_{n}}. \label{eq:nonincreasing} 
\end{align}
The first property shows that observational entropy $S_{\C_1,\dots,\C_n}$ is upper bounded by the maximal uncertainty allowed by the size of the system, and lower bounded by the uncertainty inherent to the system (measured by von Neumann entropy $S_{\mathrm{vN}}=-\tr[\R\ln\R]$). The second property shows that every additional measurement can only decrease the entropy. These properties justify interpreting observational entropy as a measure of uncertainty an observer making measurements associates with a system. In more detail, observational entropy is the uncertainty an observer would associate to the initial state of the system if they had infinitely many copies of this state, and performed sequential measurements in bases $\C_1,\dots,\C_n$ on each copy. This would allow them to build up statistics of measurement outcomes, and thus determine this entropy exactly.

\begin{table*}
  \caption{Different types of measurements and their corresponding coarse-grainings.}\label{tab:typesofmeasurements}
  \begin{tabularx}{\textwidth}{llll}
  \hline
    Measurement  & Operator & Superoperator form \\
     & form &  \\
     \hline
    Projective  & $\C=\{\P_i\}$ & $\C=\{\P_i(\bullet)\P_i\}$ \\
    Kraus Rank-1 Generalized  & $\C=\{\K_i\}$ & $\C=\{\K_i(\bullet)\K_i^\dag\}$ \\
    Generalized  & None  & $\C=\{\PO_i\}$,\quad\ $\PO_i=\sum_m\K_{im}(\bullet) \K_{im}^\dag$ \\
    Multiple Generalized & None  & $\bC=(\C_1,\dots,\C_n)$,\ \ $\C_k=\{\PO_{i_k}\}$,\ \ \ $\PO_{i_k}=\sum_{m_k}\K_{i_km_k}(\bullet) \K_{i_km_k}^\dag$\\
    &  & This is equivalent to a single coarse-graining with \\
    &  & vector-labeled elements, $\bC=\{\PO_\bi\}$, $\PO_\bi=\PO_{i_n}\cdots\PO_{i_2}\PO_{i_1}$.\\
  \end{tabularx}
\end{table*}

\subsection{Observational entropy with generalized measurements}
The definition and the related properties above include only projective measurements which are not the most general measurement an observer can perform. A generalized measurement includes the case of a projective measurement, but it also includes situations in which the system is allowed to interact with an auxiliary system (probe), and then a joint projective measurement is performed on the system plus the probe. Further, it includes cases where indirect measurements are performed only on the probe. Such measurements are described by a set of linear superoperators (\emph{quantum operations}/\emph{quantum instruments}) $\{\PO_i\}$. Each element of this set can be expressed in terms of its Kraus decomposition\footnote{Operators $\hPi_i=\sum_m\K_{im}^\dag\K_{im}$ are called POVM elements. Their collection $\{\hPi_i\}$ a POVM (\emph{positive operator-valued measure}) in the literature. Unlike the general measurement given by $\{\PO_i\}$, POVM by itself cannot determine the post-measurement state~\cite{kraus1983states}. 
},
\[\label{eq:kraus_dec}
\PO_i(\R)=\sum_{m}\K_{im}\R\K_{im}^\dag,
\]
in which the Kraus operators satisfy the completeness relation\footnote{Number of $m$ for each $i$ in decomposition~\eqref{eq:kraus_dec} is not unique, and the number of $m$ for each $i$ can vary. Minimal number of $m$ such that the decomposition holds is called the \emph{Kraus rank} of $\PO_i$. Projective measurements are a special type of Kraus rank-1 measurements defined by $\PO_i(\hat{X})\equiv \P_i\hat{X}\P_i$.} ,
\[\label{eq:completeness}
\sum_{i}\sum_{m}\K_{im}^\dag\K_{im}=\I.
\]
Upon obtaining a measurement outcome ``$i$'', the density matrix of the system is projected onto
\[
\R\xrightarrow{\text{``$i$''}}\frac{\PO_i(\R)}{p_i}
\]
with probability $p_i=\tr[\PO_i(\R)]$. 

In order to define observational entropy to include generalized measurements, we define each coarse-graining as a set of quantum operations $\C_k=\{\PO_{i_k}\}$. 
Correspondingly, observational entropy is defined as,
\[
\label{eq:def3}
S_\bC\equiv S_{\C_1,\dots,\C_n}\equiv-\sum_{\bi}p_\bi\ln \frac{p_\bi}{V_\bi},
\]
where $\bC=(\C_1,\dots,\C_n)$ is a vector of coarse-grainings. The probability $p_\bi$ of obtaining a sequence of outcomes $\bi$ is obtained by combining the superoperators\footnote{For better visibility, we dropped excessive parentheses, and superoperator on the left is understood as acting on the superoperator on the right, such as in $\PO_{i_2}\PO_{i_1}(\R)\equiv \PO_{i_2}(\PO_{i_1}(\R))$.} ${\PO_\bi(\bullet)\equiv\PO_{i_n}\dots\PO_{i_2}\PO_{i_1}}(\bullet)$
such that
\[\label{eq:defp}
p_\bi=\tr[\PO_\bi(\R)].
\]
Further, we define
\[\label{eq:defV}
V_\bi=\tr[\PO_\bi(\I)]
\]
as the corresponding volume of a macrostate ($\I$ being the identity matrix). From Eqs.~\eqref{eq:defp} and~\eqref{eq:defV}, we view the vector of coarse-grainings as a single coarse-graining with vector-labeled elements $\bC\equiv(\C_1,\dots,\C_n)\equiv\{\PO_\bi\}$. The original definition is obtained by considering coarse-grainings made of projective superoperators\footnote{By $\C=\{\P_i(\bullet)\P_i\}$ we mean that $\C=\{\PO_i\}$, where $\PO_i(\hat{X})=\P_i\hat{X}\P_i$ for an operator $\hat{X}$.} $\C=\{\P_i(\bullet)\P_i\}$. Therefore, each projective coarse-graining is lifted into its superoperator form and we will use these two notations interchangeably ($\C=\{\P_i\}=\{\P_i(\bullet)\P_i\}$). See Table~\ref{tab:typesofmeasurements} for the overview of different types of measurements and their corresponding coarse-grainings. 

\subsection{Main results: Properties of observational entropy with generalized measurements}
What if, let us say, Eq.~\eqref{eq:bounds} or~\eqref{eq:nonincreasing} did not hold for the general case described above? For example, consider that there is a generalized measurement, such as an indirect measurement using a probe, that can push observational entropy below the inherent uncertainty of the system, violating Eq.~\eqref{eq:bounds}. Or performing a measurement results in an increase in the observers' uncertainty (making them ``forget'' something about the system), violating Eq.~\eqref{eq:nonincreasing}? Such outcomes would imply that observational entropy is not a good measure of the observers' uncertainty. However, in this section, we show that even with the inclusion of generalized measurements, the two properties [Eqs.~\eqref{eq:bounds} and \eqref{eq:nonincreasing}] still hold.

To state the theorems that are proved in Append.~\ref{app:proofs}, we first define:
\begin{definition}\label{def:c_g_by_observable} (Coarse-graining defined by an observable -- a Hermitian operator): Assuming spectral decomposition of a Hermitian operator $\hat{A}=\sum_a a\P_{a}$ (where $a$' are assumed to be distinct), we define coarse-graining given by the Hermitian operator as $\C_{\hat{A}}=\{\P_a(\bullet)\P_a\}$.
\end{definition}

Second, we define POVM elements given a coarse-graining, i.e.,
\begin{definition}\label{def:POVM_elements} For a vector of coarse-graining $\bC\equiv(\C_1,\dots,\C_n)=\{\PO_\bi\}$, where
$\PO_{\bi}=\sum_{\bm}\K_{\bi\bm}(\bullet)\K_{\bi\bm}^\dag$ and $\K_{\bi\bm}=\K_{i_nm_n}\cdots\K_{i_1m_1}$, we define POVM elements $\{\hPi_\bi\}$ as
\[
\hPi_\bi=\sum_\bm\K_{\bi\bm}^\dag\K_{\bi\bm}.
\]
This allows us to write the probabilities of outcomes and related volumes as $p_\bi=\tr[\hPi_\bi\R]$ and $V_\bi=\tr[\hPi_\bi]$.
\end{definition}

Third, we define the notion of finer coarse-graining. Intuitively, the finer coarse-graining is such that it provides at least as much information as a coarse-graining, irrespective of the specific state of the system. The following Theorem will show that we can also view the finer coarse-graining as that which never increases the observational entropy. The definition goes as,
\begin{definition}\label{def:finer_set_coarse_graining}~(Finer vector of coarse-grainings)\footnote{Alternative version: ${\bC}\hookleftarrow {\widetilde{\bC}}$ iff there exists a function $\mathcal{J}:\bi\rightarrow \bj$ such that for all $\bj$, $\hPi_\bj=\sum_\bi \delta_{\bj \mathcal{J}(\bi)}\hPi_\bi$, where $\delta_{\bj \tilde{\bj}}$ denotes the Kronecker delta.}: We say that a vector of coarse-grainings $\bC=\{\PO_\bi\}$ is finer than a vector of coarse-grainings $\widetilde{\bC}=\{\PO_\bj\}$ (and denote ${\bC}\hookleftarrow {\widetilde{\bC}}$ or ${\widetilde{\bC}}\hookrightarrow {\bC}$), when the POVM elements of $\widetilde{\bC}$ can be built from the POVM elements of $\bC$, i.e., when we can write
\[
\hPi_\bj=\sum_{\bi\in I^{(\bj)}}\hPi_\bi,\quad \quad \quad \quad (\forall \bj),
\]
where $I^{(\bj)}$ are disjoint index sets whose union is the set of all indices $\bigcup_\bj I^{(\bj)}=\{\bi\}$.
\end{definition}

It is clear from this definition that for ${\bC}\hookleftarrow {\widetilde{\bC}}$, the probabilities of outcomes and related volumes can be also written as sums, $p_\bj=\sum_{\bi\in I^{(\bj)}}p_\bi$ and $V_\bj=\sum_{\bi\in I^{(\bj)}}V_\bi$. This creates an impression that the finer coarse-graining offers the lens by which the state of the system is studied. This is because the Hilbert space is cut into smaller volumes with a finer coarse-graining, and each of this volume has its own associated probability $p_\bi$. On the other hand, the coarser coarse-graining just adds up together to create $p_\bj$, ignoring this finer structure. With this definition, it is straightforward to realize that there are some coarse-grainings that are equivalent in the sense that both ${\bC}\hookleftarrow {\widetilde{\bC}}$ and ${\bC}\hookrightarrow {\widetilde{\bC}}$. For example, such cases are $\bC=\{\PO_\bi\}$ and ${\widetilde{\bC}}=\{\widetilde\PO_\bi\}$, where $\PO_\bi=\sum_\bm\K_{\bi\bm}(\bullet)\K_{\bi\bm}^\dag$ and $\widetilde\PO_\bi=\sum_\bm \U_{\bi\bm}\K_{\bi\bm}(\bullet)\K_{\bi\bm}^\dag \U_{\bi\bm}^\dag$. The only difference between these two coarse-grainings is that with the second coarse-graining, the state was unitarily evolved with unitary operators $\U_{\bi\bm}$ after the measurement. These two coarse-grainings have the same POVM elements and thus are equivalent.

This definition of finer coarse-graining, while simple-looking, represents a major stepping stone in our understanding. It looks different from the previously introduced Definitions 2 and 6 in Ref.~\cite{safranek2019long}, which considered vectors of only projective coarse-grainings. However, as we show in Append.~\ref{app:equivalence_definitions}, it turns out that this definition is equivalent to the older definition when applied on the same restricted set, and therefore represents its direct generalization.

Generalizations of Theorems 7 and 8 from Ref.~\cite{safranek2019long} [represented here by Eqs.~\eqref{eq:bounds} and~\eqref{eq:nonincreasing}] follow.

\begin{theorem}\label{thm:bounded_multiple} Observational entropy~\eqref{eq:def3} with multiple coarse-grainings is bounded, i.e.,
\[
S_{\mathrm{vN}}\leq S_\bC\leq \ln \mathrm{dim}\HS,
\]
for any vector of coarse-grainings $\bC=(\C_1,\dots,\C_n)$ and any density matrix $\R$. $S_{\mathrm{vN}}= S_\bC$ if and only if $\bC\hookleftarrow \C_{\R}$. $S_\bC=\ln \mathrm{dim}\HS$ if and only if $\forall \bi$, $p_{\bi}=V_{\bi}/\mathrm{dim}\HS$.
\end{theorem}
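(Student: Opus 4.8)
The plan is to establish the two inequalities and their equality conditions separately, treating the chain of coarse-grainings $\bC=\{\PO_\bi\}$ as a single coarse-graining with vector-labeled outcomes, so that $p_\bi=\tr[\Pi_\bi\R]$ and $V_\bi=\tr[\Pi_\bi]$ with $\sum_\bi\Pi_\bi=\I$ by the completeness relation~\eqref{eq:completeness}.

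For the \emph{upper bound} $S_\bC(\R)\le\ln\dim\HS$: I would write $S_\bC(\R)=-\sum_\bi p_\bi\ln p_\bi+\sum_\bi p_\bi\ln V_\bi$. Using $\sum_\bi p_\bi=1$ and Jensen's inequality for the concave function $\ln$, $\sum_\bi p_\bi\ln(V_\bi/p_\bi)\le\ln\sum_\bi V_\bi=\ln\sum_\bi\tr[\Pi_\bi]=\ln\tr[\I]=\ln\dim\HS$. Strict concavity of $\ln$ gives equality iff $V_\bi/p_\bi$ is the same constant for all $\bi$ with $p_\bi>0$; combined with $\sum V_\bi=\dim\HS$ and $\sum p_\bi=1$ (and handling $p_\bi=0$ terms, which force $V_\bi=0$ as well since $0\le\Pi_\bi$ and... actually one must be a little careful here—$p_\bi=0$ does not immediately give $V_\bi=0$, so the clean statement is the one in the theorem: equality iff $p_\bi=V_\bi/\dim\HS$ for all $\bi$), this yields the stated condition.

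For the \emph{lower bound} $S_{\mathrm{vN}}(\R)\le S_\bC(\R)$: the natural route is to exhibit a channel or a classical-quantum comparison. I would define the map $\Phi(\R)=\sum_\bi \frac{\tr[\Pi_\bi\R]}{V_\bi}\,\hat{Q}_\bi$ where $\hat{Q}_\bi$ is a suitable reference state supported on a piece of an enlarged space of dimension $V_\bi$ (or work directly with the classical distribution $p_\bi$ against the ``volume'' distribution $V_\bi/\dim\HS$). Concretely, one shows $S_\bC(\R)-S_{\mathrm{vN}}(\R)$ equals a relative entropy $S(\R\|\omega)$ or $D(p\|q)$-type quantity that is manifestly nonnegative; the cleanest version uses the monotonicity of quantum relative entropy under the CPTP map that performs the measurement and records the outcome, applied to $S(\R\,\|\,\I/\dim\HS)=\ln\dim\HS-S_{\mathrm{vN}}(\R)$, which is data-processed into $\sum_\bi p_\bi\ln(p_\bi\dim\HS/V_\bi)=\ln\dim\HS - S_\bC(\R)$; rearranging gives the bound. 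Equality then follows from the equality case of monotonicity of relative entropy (sufficiency/reversibility of the channel), and one must translate that into the condition $\bC\hookleftarrow\C_\R$, i.e.\ that the POVM $\{\Pi_\bi\}$ refines the spectral coarse-graining of $\R$.

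The main obstacle I anticipate is the equality condition $S_{\mathrm{vN}}(\R)=S_\bC(\R)\iff\bC\hookleftarrow\C_\R$. The forward implication (from the equality case of the Petz/monotonicity theorem, or from a direct Schur-convexity argument) requires showing that each $\Pi_\bi$ must be a nonnegative combination of the spectral projectors of $\R$ in exactly the refinement pattern demanded by Definition~\ref{def:finer_set_coarse_graining}, and some care is needed when $\R$ is not full rank (the $p_\bi=0$ sectors) and when eigenvalues of $\R$ are degenerate. The reverse implication should be a direct computation: if $\Pi_\bj=\sum_{\bi\in I^{(\bj)}}\Pi_\bi$ aligns with eigenspaces of $\R$, then $p_\bi$ and $V_\bi$ are proportional within each eigenspace, the logarithms collapse, and $S_\bC(\R)$ telescopes down to $-\sum\lambda_k\ln\lambda_k=S_{\mathrm{vN}}(\R)$. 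I would also double-check that the finer-than relation interacts correctly with the monotonicity statement~\eqref{eq:nonincreasing} so that it suffices to prove the bound for the finest (rank-1 Kraus) representative and then invoke $S_\bC\ge S_{\bC'}$ type reasoning if helpful.
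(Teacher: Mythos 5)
Your two inequalities are sound. The upper bound is essentially the paper's own argument (Jensen on $\ln$, with the same care about the $p_\bi=0$ terms: the second inequality forces $p_\bi=0\Rightarrow V_\bi=0$ at equality, which combines with the constant-ratio condition to give $p_\bi=V_\bi/\dim\HS$). For the lower bound you take a genuinely different route: the paper works directly with Jensen applied to $f(x)=-x\ln x$, using coefficients $a_x^{(\bi)}=\bra{x}\Pi_\bi\ket{x}/V_\bi$ built from the spectral decomposition of $\R$, whereas you invoke monotonicity of quantum relative entropy under the measure-and-record channel $\Phi(\hat X)=\sum_\bi\tr[\Pi_\bi\hat X]\pro{\bi}{\bi}$ applied to $S(\R\,\|\,\I/\dim\HS)$. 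Your data-processing computation is correct ($S(\Phi(\R)\|\Phi(\I/\dim\HS))=\ln\dim\HS-S_\bC(\R)$), and it buys a shorter, more conceptual proof of the inequality itself; the paper's more pedestrian Jensen bookkeeping buys something else, namely direct access to the equality condition in a form that can be pushed all the way to the refinement statement.

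That equality condition is where your proposal has a genuine gap. The claim $S_{\mathrm{vN}}(\R)=S_\bC(\R)\iff\bC\hookleftarrow\C_\R$ is not a corollary you can wave at: in your framework the Petz/sufficiency equality condition delivers recoverability, concretely $\R=\sum_\bi\frac{p_\bi}{V_\bi}\Pi_\bi$ for this particular channel and reference state, and you still owe the argument that this (or whatever form of the equality case you use) is equivalent to the combinatorial condition of Definition~\ref{def:finer_set_coarse_graining}, i.e.\ that every spectral projector satisfies $\P_\rho=\sum_{\bi\in I^{(\rho)}}\Pi_\bi$ for disjoint index sets. This translation --- showing each $\Pi_\bi$ is supported inside a single eigenspace of $\R$ (including the kernel), assigning to each $\bi$ a unique eigenvalue $\rho_\bi$, and then using completeness $\sum_\bi\Pi_\bi=\I$ together with orthogonality of the $\P_\rho$ to assemble the eigenprojectors --- is the bulk of the paper's proof (its Eqs.~\eqref{eq:first_eq_condition_old_2}--\eqref{eq:finalfinercg}, plus the separate converse direction), and it is also exactly what the quantum-state-inference algorithm of Sec.~\ref{sec:quantum_tomography} relies on. Your reverse implication sketch ($\bC\hookleftarrow\C_\R\Rightarrow$ equality, since $\Pi_\bi\le\P_{\rho_\bi}$ gives $p_\bi=\rho_\bi V_\bi$ and the sum telescopes to $S_{\mathrm{vN}}$) is fine, but the forward implication is named rather than proven, including precisely the degenerate-eigenvalue and rank-deficient cases you yourself flag. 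Finally, your closing remark about reducing to a rank-1 Kraus representative is unnecessary: $S_\bC$ depends only on the POVM elements $\Pi_\bi$, not on the Kraus decomposition, so nothing is gained there.
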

The meaning of the first equality condition in Theorem~\ref{thm:bounded_multiple}, $\bC\hookleftarrow \C_{\R}$, can be inferred from Definition~\ref{def:c_g_by_observable} applied to Hermitian operator $\R$. It means that for every eigenvalue $\rho$ of the density matrix $\R=\sum_\rho\rho\P_\rho$, the corresponding projector onto an eigenspace can be written using the POVM elements as \[\label{eq:rhoandPi}
\P_\rho=\sum_{\bi\in I^{(\rho)}}\hPi_\bi.
\]
This means that an informationally complete measurement is any measurement that is finer than the measurement in the eigenbasis of the density matrix. Moreover, the above identity represents a connection between observational entropy and state identification, since it can be used to infer the quantum state in case when one can find a coarse-graining $\C$ for which $S_{\mathrm{vN}}= S_\bC$. This connection is described in detail in Sec.~\ref{sec:quantum_tomography}.

\begin{theorem}\label{thm:non-increase} Observational entropy~\eqref{eq:def3} is non-increasing with each added coarse-graining, i.e.,
\[
S_{\bC,\C_{n+1}}\leq S_{\bC},
\]
for any vector of coarse-grainings $(\bC,\C_{n+1})$ and any density matrix $\R$. The inequality becomes an equality if and only if $\forall \bi,i_{n+1}$, $p_{\bi,i_{n+1}}=(V_{\bi,i_{n+1}}/V_{\bi})p_{\bi}$.
\end{theorem}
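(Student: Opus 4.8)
The plan is to compute the difference $S_{\bC}(\R)-S_{\bC,\C_{n+1}}(\R)$ directly and recognize it as a nonnegative mixture of classical relative entropies. First I would isolate the one structural fact that makes this work: because the Kraus operators of $\C_{n+1}=\{\PO_{i_{n+1}}\}$ satisfy the completeness relation~\eqref{eq:completeness}, for \emph{any} operator $\hat{X}$ one has $\sum_{i_{n+1}}\tr[\PO_{i_{n+1}}(\hat{X})]=\tr[\hat{X}\sum_{i_{n+1},m_{n+1}}\K_{i_{n+1}m_{n+1}}^\dag\K_{i_{n+1}m_{n+1}}]=\tr[\hat{X}]$. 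Applying this with $\hat{X}=\PO_{\bi}(\R)$ and with $\hat{X}=\PO_{\bi}(\I)$ yields the marginalization relations $\sum_{i_{n+1}}p_{\bi,i_{n+1}}=p_{\bi}$ and $\sum_{i_{n+1}}V_{\bi,i_{n+1}}=V_{\bi}$. This is exactly where the generalized (as opposed to projective) nature of the appended coarse-graining enters, and it is the only place.

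Next I would dispose of the degenerate indices. If $V_{\bi}=0$ then $\PO_{\bi}(\I)=\sum_{\bm}\K_{\bi\bm}\K_{\bi\bm}^\dag$ is a positive operator of zero trace, hence vanishes, hence every $\K_{\bi\bm}=0$, so $p_{\bi}=0$ and $p_{\bi,i_{n+1}}=V_{\bi,i_{n+1}}=0$; such indices contribute nothing to either entropy and the asserted equality condition reads $0=0$ for them. For the remaining indices $V_{\bi}>0$, and the convention $0\ln 0=0$ handles $p_{\bi}=0$ (and its descendants $p_{\bi,i_{n+1}}$, which are then also $0$ by nonnegativity). In particular the forbidden case $p_{\bi}>0$, $V_{\bi}=0$ never arises, so every logarithm below is legitimate.

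Then comes the main computation. Using $p_{\bi}=\sum_{i_{n+1}}p_{\bi,i_{n+1}}$ to re-expand the $S_{\bC}$ sum over the finer index set,
\[
S_{\bC}(\R)-S_{\bC,\C_{n+1}}(\R)=\sum_{\bi,i_{n+1}}p_{\bi,i_{n+1}}\ln\frac{p_{\bi,i_{n+1}}/V_{\bi,i_{n+1}}}{p_{\bi}/V_{\bi}}.
\]
Grouping by $\bi$ and factoring out $p_{\bi}$, the conditional distributions $q_{i_{n+1}}=p_{\bi,i_{n+1}}/p_{\bi}$ and $r_{i_{n+1}}=V_{\bi,i_{n+1}}/V_{\bi}$ are both normalized over $i_{n+1}$ by the marginalization relations, so the right-hand side equals $\sum_{\bi}p_{\bi}\,D(q\|r)$ with $D(q\|r)=\sum_{i_{n+1}}q_{i_{n+1}}\ln(q_{i_{n+1}}/r_{i_{n+1}})$. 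Since each $D(q\|r)\ge 0$ by Jensen's inequality (equivalently the log-sum inequality applied termwise, $\sum_{i_{n+1}}p_{\bi,i_{n+1}}\ln\tfrac{p_{\bi,i_{n+1}}}{V_{\bi,i_{n+1}}}\ge p_{\bi}\ln\tfrac{p_{\bi}}{V_{\bi}}$) and $p_{\bi}\ge 0$, we conclude $S_{\bC,\C_{n+1}}(\R)\le S_{\bC}(\R)$.

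Equality then holds if and only if $D(q\|r)=0$ for every $\bi$ with $p_{\bi}>0$, i.e. $q_{i_{n+1}}=r_{i_{n+1}}$, which is precisely $p_{\bi,i_{n+1}}=\tfrac{V_{\bi,i_{n+1}}}{V_{\bi}}p_{\bi}$ for all $i_{n+1}$; for $\bi$ with $p_{\bi}=0$ (or $V_{\bi}=0$) this identity is automatic, so the stated ``$\forall\,\bi,i_{n+1}$'' condition is exactly the equality condition. I do not expect a genuine obstacle here: this is the standard ``conditioning cannot increase entropy'' argument, and the only items demanding care are (i) checking that trace preservation of the appended coarse-graining survives the passage from projectors to arbitrary quantum operations — which it does, purely by the completeness relation — and (ii) bookkeeping the $V_{\bi}=0$ and $p_{\bi}=0$ conventions so that no logarithm is evaluated out of domain.
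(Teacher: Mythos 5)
Your proposal is correct and follows essentially the same route as the paper's proof: both rest on the marginalization relations $p_{\bi}=\sum_{i_{n+1}}p_{\bi,i_{n+1}}$, $V_{\bi}=\sum_{i_{n+1}}V_{\bi,i_{n+1}}$ followed by a per-$\bi$ Jensen-type concavity step (the paper applies Jensen to $f(x)=-x\ln x$ with weights $V_{\bi,i_{n+1}}/V_{\bi}$, which is exactly your log-sum/relative-entropy inequality in different packaging), and the equality conditions coincide, including the handling of $V_{\bi,i_{n+1}}=0\Rightarrow p_{\bi,i_{n+1}}=0$. Your explicit treatment of the degenerate indices is, if anything, slightly more careful than the paper's.
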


The equality condition in Theorem~\ref{thm:non-increase} is satisfied, among other cases, when the sequence of measurements $\bC$ projects onto a pure state (meaning that all information about the initial state was depleted by the first $n$ measurements and there is no information left in the remaining state). Another case when the equality condition is satisfied is when $\bC^\dag\hookleftarrow \C_{n+1}^{\dag}$, expressing that the information the $(n+1)$th measurement could provide was already provided by the first $n$ measurements, making the $(n+1)$th measurement redundant. Above, we have defined $\bC^\dag\equiv(\C_n^\dag,\dots,\C_1^\dag)=\{\PO_\bi^\dag\}$, 
where $\PO_{\bi}^\dag\equiv\sum_{\bm}\K_{\bi\bm}^\dag(\bullet)\K_{\bi\bm}$
(see the end of the proof in Sec.~\ref{sec:proof2}).

Finally, we generalize an elegant and intuitive Theorem~2 from Ref.~\cite{safranek2019long}, which we also foreshadowed here when motivating the definition of finer coarse-graining:

\begin{theorem}\label{thm:monotonic} Observational entropy~\eqref{eq:def3} is a monotonic function of the coarse-graining.
If ${\bC}\hookleftarrow {\widetilde{\bC}}$ then
\[
S_{\bC}\leq S_{\widetilde{\bC}}.
\]
The inequality becomes an equality if and only if $\forall \bj,\forall \bi\in I^{(\bj)},p_{\bi}=(V_\bi /V_{\bj})p_{\bj}$.
\end{theorem}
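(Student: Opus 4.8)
The plan is to recognize that observational entropy~\eqref{eq:def3} is, up to sign, an \emph{unnormalized Kullback--Leibler divergence} between the probability vector $(p_\bi)$ and the volume vector $(V_\bi)$, and that the hypothesis ${\bC}\hookleftarrow{\widetilde{\bC}}$ makes $S_{\widetilde{\bC}}$ a coarse-grained version of $S_{\bC}$ in which both vectors are grouped by the index sets $I^{(\bj)}$. Concretely, from $\Pi_\bj=\sum_{\bi\in I^{(\bj)}}\Pi_\bi$ and linearity of the trace one has $p_\bj=\sum_{\bi\in I^{(\bj)}}p_\bi$ and $V_\bj=\sum_{\bi\in I^{(\bj)}}V_\bi$ (already noted below Definition~\ref{def:finer_set_coarse_graining}). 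So the theorem should reduce to a statement about how grouping affects the quantity $\sum x\ln(x/y)$.

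First I would record the elementary tool, the \emph{log-sum inequality}: for nonnegative reals $a_1,\dots,a_r$ and positive reals $b_1,\dots,b_r$,
\[
\sum_{k=1}^{r} a_k\ln\frac{a_k}{b_k}\;\ge\;\Big(\sum_{k=1}^{r}a_k\Big)\ln\frac{\sum_{k=1}^{r}a_k}{\sum_{k=1}^{r}b_k},
\]
with equality if and only if the ratio $a_k/b_k$ is the same for all $k$. This is just Jensen's inequality for the strictly convex function $t\mapsto t\ln t$, applied with the weights $b_k/\sum_l b_l$ at the points $a_k/b_k$; I would include the one-line derivation together with the usual conventions $0\ln 0=0$ and $0\ln(0/b)=0$, so that indices with $p_\bi=0$ cause no trouble. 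Indices with $V_\bi=0$ force the positive operator $\Pi_\bi$ to vanish, hence $p_\bi=0$ as well, so such terms can simply be deleted from every sum.

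Next I would apply the log-sum inequality inside each block, with $a_k=p_\bi$ and $b_k=V_\bi$ ranging over $\bi\in I^{(\bj)}$ (keeping only $V_\bi>0$),
\[
\sum_{\bi\in I^{(\bj)}}p_\bi\ln\frac{p_\bi}{V_\bi}\;\ge\; p_\bj\ln\frac{p_\bj}{V_\bj},
\]
and sum this over all $\bj$. Since the $I^{(\bj)}$ partition the full index set $\{\bi\}$, the left-hand side becomes $\sum_\bi p_\bi\ln(p_\bi/V_\bi)=-S_{\bC}(\R)$ and the right-hand side becomes $-S_{\widetilde{\bC}}(\R)$; multiplying by $-1$ gives $S_{\bC}(\R)\le S_{\widetilde{\bC}}(\R)$. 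For the equality clause, equality in the summed inequality is equivalent to equality in each block, which by the equality case of the log-sum inequality means $p_\bi/V_\bi$ is constant on $I^{(\bj)}$; using $p_\bj=\sum_{\bi\in I^{(\bj)}}p_\bi$ and $V_\bj=\sum_{\bi\in I^{(\bj)}}V_\bi$ pins that constant to $p_\bj/V_\bj$, i.e. $p_\bi=\tfrac{V_\bi}{V_\bj}p_\bj$ for every $\bj$ and every $\bi\in I^{(\bj)}$, exactly as claimed.

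I do not expect a serious obstacle: the theorem is essentially a repackaging of the log-sum inequality, and the structural input — that refinement of the POVM makes both $p$ and $V$ split additively over the blocks — is built into Definition~\ref{def:finer_set_coarse_graining}. The only points needing care are the bookkeeping of degenerate terms ($p_\bi=0$ or $V_\bi=0$) and stating the equality case of the log-sum inequality precisely. A minor remark worth including is that $(\bC,\C_{n+1})\hookleftarrow\bC$ always holds — sum the POVM elements of $(\bC,\C_{n+1})$ over $i_{n+1}$ and use the completeness relation~\eqref{eq:completeness} of $\C_{n+1}$ — so Theorem~\ref{thm:monotonic} already yields the inequality of Theorem~\ref{thm:non-increase} as a special case, the sharper equality condition of the latter being the only part that still requires a separate argument.
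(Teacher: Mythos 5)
Your proposal is correct and takes essentially the same route as the paper: the block-wise log-sum inequality you invoke is exactly the paper's application of Jensen's theorem to $f(x)=-x\ln x$ with weights $a_\bi=V_\bi/V_{\bj}$ and points $b_\bi=p_\bi/V_\bi$ for $\bi\in I^{(\bj)}$, using the same additive splittings $p_\bj=\sum_{\bi\in I^{(\bj)}}p_\bi$, $V_\bj=\sum_{\bi\in I^{(\bj)}}V_\bi$, and the equality analysis (ratio constant on each block, pinned to $p_\bj/V_\bj$) coincides with the paper's. Your explicit handling of the degenerate terms $V_\bi=0\Rightarrow\Pi_\bi=0\Rightarrow p_\bi=0$ and the remark that $(\bC,\C_{n+1})\hookleftarrow\bC$ recovers the inequality of Theorem~\ref{thm:non-increase} are correct refinements that the paper leaves implicit.
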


The validity of these three theorems means that considering observational entropy as a measure of an observers uncertainty about the initial state of the system is conceptually justified. We will further consolidate this justification in Sec.~\ref{sec:quantum_tomography}, in connection with quantum state identification.

While the framework for projective and generalized measurements seem very similar and the main theorems hold for both, there are several non-trivialities that one encounters while trying to generalize observational entropy. The particular choice for the definitions delineated above is justified a-posteriori. That is, by showing that equivalent theorems can be proven when considering these particular definitions. We elaborate on these difficulties and other key differences between observational entropy with projective and general coarse-grainings in Append.~\ref{app:nontrivialities}.

\section{Application of observational entropy framework to general indirect measurement schemes
}\label{sec:app_indirect_measurement}

In this section, we apply our framework to find an optimal indirect measurement scheme by computing the observational entropy of a measurement that is performed on an auxiliary system (probe) that interacts with our system of interest. Assuming perfect control of the interaction between the system and the probe, we find that in order to extract all the information from the system, the dimension of the probe must be one dimension larger than the rank of the density matrix of the system.
Specifically, it is enough to use a two-dimensional probe to extract full information about any pure state of the system. This is in perfect alignment with previous findings of Ref.~\cite{PELLONPAA2017}, which studied very similar questions from the perspective of minimal normal measurement models. 

Consider a scenario where the system (system $A$) is first coupled to an auxiliary system (system $B$). As a consequence, the state of the auxiliary system (probe) is affected by the system. Alternatively, we could say that the probe collects some information about the system. Thus, measuring the probe provides information about the state of the system. This protocol can be schematically written as,
\[\label{eq:protocol}
\begin{split}
&\R\xrightarrow[]{}\R\otimes\hat{\sigma}\xrightarrow[]{}\U\R\otimes\hat{\sigma}\U^\dag\\
&\xrightarrow[]{\text{``$i$''}}\frac{(\I\!\otimes\!\P_i)\U\R\otimes\hat{\sigma}\U^\dag(\I\!\otimes\!\P_i)}{p_i}\xrightarrow[]{\mathrm{tracing\,out\,the\,probe}}\\
&\frac{\tr_B[(\I\!\otimes\!\P_i)\U\R\otimes\hat{\sigma}\U^\dag(\I\!\otimes\!\P_i)]}{p_i}\equiv\frac{\PO_i(\R)}{p_i}.
\end{split}
\]
Above, $\U$ is the unitary evolution operator that incorporates the interaction between the system and the probe, $\R$ is the initial density matrix of the system, and $\hat{\sigma}$ is that of the probe. Since only the probe is measured in this scenario, the projection operator $\P_i$ acts only on the probe, with the identity $\I$ being acted upon the system. The full Hilbert space is the tensor product of the system and the probe $\HS=\HS_A\otimes\HS_B$.

By tracing out the probe, this protocol gives an explicit form of the measurement superoperators acting on the system,
\[
\PO_i(\R)=\tr_A[(\I\otimes\P_i)\U\R\otimes\S \U^\dag(\I\otimes\P_i)].
\]
We denote coarse-graining consisting of these superoperators as 
\[
\Cvn={\{\PO_i\}}.
\]
This refers to coarse-graining given by indirect measurement scheme protocol, Eq.~\eqref{eq:protocol}, on the state of the system. In other words, this quantum operation acts solely on the density matrix of the system, and not on the probe.

By inserting the spectral decomposition of the probe $\S=\sum_{m}\sigma_m\pro{m}{m}$ we can also obtain the Kraus decomposition of the superoperator
\[\label{eq:superkraus}
\PO_i(\R)=\sum_{m,m'}\K_{imm'}\R\K_{imm'}^\dag,
\]
where $\K_{imm'}=\sqrt{\sigma_m}\bra{m'}(\I\otimes\P_i)\U\ket{m}$ are the Kraus operators.

While the Kraus operators can be useful in expressing the superoperator (and we will use this decomposition in our examples), we do not need this decomposition to calculate the observational entropy $S_{\Cvn}$. This is because $p_i$ and $V_i$ in Eq.~\eqref{eq:def3} depend only on the superoperator, thus we can obtain directly
\begin{align}
p_i&=\tr[\PO_i(\R)]=\tr[(\I\otimes\P_i)\U\R\otimes\hat{\sigma}\U^\dag(\I\otimes\P_i)],\nonumber\\
V_i&=\tr[\PO_i(\I)]=\tr[(\I\otimes\P_i)\U\I\otimes\hat{\sigma}\U^\dag(\I\otimes\P_i)].\label{eq:probabilityindirect}
\end{align}
Note that this $p_i$ is identical to the $p_i$ in the protocol expressed by Eq.~\eqref{eq:protocol}, and it represents the probability with which the outcome ``$i$'' was measured. Also, $V_i$ represents the corresponding volume in the Hilbert space of the system (\emph{not} the probe Hilbert space).

\subsection{Limitations of the finite-dimensional probe}\label{sec:optimal_indirect_measurement}

Now that we set up the general framework, we can ask the first natural question. Considering an indirect measurement, how low can the observational entropy get? Can it, for example, go as low as the ultimate minimum given by Theorem~\ref{thm:bounded_multiple}, that is, the von Neumann entropy, and therefore be as informative as a direct measurement that achieves this lower bound?

That of course depends on the amount of control over the system, as well as the size of the probe that is being used. In the case of perfect control, i.e., the ability to design any interaction unitary $\U$, the answer is fairly straightforward.

Note that the following protocols depend on the potentially unknown state of the system: a method that achieves these bounds would have to be applied adaptively, so these bounds could be reached in the limit of many copies of the initial state of the system. See more discussion on this in Sec.~\ref{sec:quantum_tomography}.

\subsubsection{High-dimensional probe}

First, we assume that the size of the probe is (at least) as big as the size of the system, i.e., $N\equiv \dim \HS_A=\dim \HS_B\equiv M$. We take the interaction to be a swap gate $\U(\R\otimes\S)=\S\otimes \R$, which defines $\U=\sum_{k,m=1}^{N}\pro{m,k}{k,m}$. This gives
\begin{align}
p_i&=\tr[(\I\otimes\P_i)\S\otimes\R(\I\otimes\P_i)]=\tr[\P_i\R],\\
V_i&=\tr[(\I\otimes\P_i)\S\otimes\I(\I\otimes\P_i)]=\tr[\P_i].
\end{align}
The corresponding observational entropy reduces to
\[
S_{\Cvn}=S_{\Cp},
\]
where $\Cp=\{\P_i\}=\{\P_i(\bullet)\P_i\}$ (in the operator and super-operator notation) denotes the projective coarse-graining on the probe. According to Theorem~\ref{thm:bounded_multiple}, taking $\Cp$ to be consisting of the projectors onto eigenvectors of the density matrix, $\Cp\hookleftarrow \C_\R$, leads to $S_{\Cp}=S_{\mathrm{vN}}$ and thus also $S_{\Cvn}=S_{\mathrm{vN}}$. Therefore, if we have a probe that is at least as big as the system, we can extract all the information by applying the swap gate and then measuring in the eigenbasis of the system density matrix. In other words, an indirect measurement is as good as a direct measurement in this case.

\subsubsection{Low-dimensional probe}

In the case when the probe is smaller than the system, $\dim \HS_B \equiv M\leq N \equiv \dim \HS_A$, we can obtain low entropy by diagonalizing the system and the probe density matrices and then performing a partial swap. We define $\U=\U_{\mathrm{partial\ swap}}(\U_{\mathrm{diag}\R}\otimes \U_{\mathrm{diag}\S})$, where
\begin{align}
\U_{\mathrm{diag}\R}\,\R\, \U_{\mathrm{diag}\R}^\dag&=\mathrm{diag}(\rho_1,\rho_2,\dots),\\ \U_{\mathrm{diag}\S}\,\S\, \U_{\mathrm{diag}\S}^\dag&=\mathrm{diag}(\sigma_1,\sigma_2,\dots),
\end{align}
with the eigenvalues being ordered as $\rho_1 (\sigma_1)\geq\rho_2(\sigma_2)\geq\dots$, and where
\[
\U_{\mathrm{partial\ swap}}\ket{k,m}=
\begin{cases}
\ket{m,k}& k\leq M,\\
\ket{k,m}& k>M.
\end{cases}
\]
In terms of eigenvectors this gives
\[
\U=\sum_{k,m=1}^{M}\pro{m,k}{k,m}+\sum_{k=M+1}^{N}\sum_{m=1}^M\pro{k,m}{k,m}.
\]
Above, $\R=\sum_{k=1}^N\rho_k\pro{k}{k}$ and $\S=\sum_{m=1}^M\sigma_m\pro{m}{m}$. This yields
\begin{align}
p_i&=\tr[\P_i(\R^{(\sigma)}+p\S)],\label{eq:pipartialswap}\\
V_i&=\tr[\P_i(\I+(N-M)\S)],\label{eq:Vipartialswap}
\end{align}
with $\R^{(\sigma)}=\sum_{m=1}^M \rho_m\pro{m}{m}$ being the part of the system density matrix that got swapped into the probe Hilbert space. The part that remains in the system Hilbert space is $\R^{(\mathrm{rest})}=\sum_{k=M+1}^N\rho_k\pro{k}{k}$, and $p=\tr[\R^{(\mathrm{rest})}]=\sum_{m=M+1}^N\rho_m$ is the sum of eigenvalues that remained in the system Hilbert space. Note that both $\R^{(\sigma)}+p\S$ and $\I+(N-M)\S$ are diagonal in the same basis $\{\ket{m}\}$.

Now that we specified our interaction, we need to investigate two additional factors over which we will optimize to obtain the lowest possible observational entropy: i) the state of the probe $\S$ and ii) the choice of measurement performed on the probe $\C_B=\{\P_i\}$.

First, we choose a probe that is in a completely mixed state $\S=\I/M$. In this case, the observational entropy is minimized when the coarse-graining on the probe Hilbert space is in the diagonal basis $\C_B=\{\P_i\}=\{\pro{m}{m}\}$\footnote{This corresponds to the system coarse-graining $\C=\{\tr_B[(\I\otimes\pro{m}{m}) \U(\bullet\otimes\S) \U^\dag (\I\otimes\pro{m}{m})]\}$.}. Using Eqs.~\eqref{eq:pipartialswap} and~\eqref{eq:Vipartialswap} we obtain
\[
S_{\Cvn}=-\sum_{i=1}^M\Big(\rho_i+\frac{p}{M}\Big)\ln\Big(\rho_i+\frac{p}{M}\Big)+\ln\frac{N}{M}.
\]
Depending on the size of the probe, this may be fairly close to the von Neumann entropy, and it will converge to it for $M\rightarrow N$ (since in this limit $p\rightarrow 0$).

Second, we consider a case when the probe is at least one dimension larger than the rank $R$ of the system density matrix, $M\geq R+1$. Here we we have $p=0$ (due to ordering $\rho_1\geq \rho_2\geq\dots$). We choose $\S=\pro{R+1}{R+1}$ and coarse-graining of the probe in the diagonal basis $\C_B=\{\P_i\}=\{\pro{m}{m}\}$ again. This yields
\[
p_i=\begin{cases}
\rho_i, &i\leq R,\\
0, &i> R,
\end{cases}\quad 
V_i=\begin{cases}
1, &i\!\neq\! R\!+\!1,\\
1\!+\!N\!-\!M, &i\!=\!R\!+\!1.
\end{cases}
\]
This results in
\[\label{eq:equalsvN}
S_{\Cvn}=-\sum_{i=1}^R\rho_i\ln \rho_i=S_{\mathrm{vN}}.
\]
Thus observational entropy achieves its minimum in this case.

\subsubsection{Takeaways from the analysis of a low-dimensional probe}
Equation~\eqref{eq:equalsvN} shows that when i) interaction $\U$ is chosen to be a partial swap gate, ii) the probe dimension is chosen as $R+1$, where $R$ is the rank of the system density matrix, and iii) the projective measurement performed on the probe is given by the eigenbasis of the system density matrix, then observational entropy equals the von Neumann entropy. We can intuitively rephrase this conclusion as follows:

\emph{The minimal uncertainty can be achieved by measuring a probe that is one dimension larger than the rank of the state of the system.}



As a special case, we conclude that any pure state of the system can be exactly determined by using a two-level probe.

Physically, this is fairly straightforward to comprehend. Making a projective measurement on a two-level probe leads to (at most) two possible outcomes. In an ideal case, when observational entropy is zero, we obtain the first outcome with $100\%$ probability, and we never see the second outcome. This is because the interaction unitary $\U$ translates the measurement on the probe into a measurement on the system as if one were to measure in the $\{\pro{\psi}{\psi},\I-\pro{\psi}{\psi}\}$ basis on the system, where $\R=\pro{\psi}{\psi}$ is the state of the system. In other words, eigenvectors of the density matrix with non-zero eigenvalues are mapped onto one state of the probe, and all the others onto the other state. 

The explicit algorithm that describes \emph{how} to infer the quantum state is detailed in Sec.~\ref{sec:quantum_tomography}.

Note that the findings of this section corroborate those of Ref.~\cite{PELLONPAA2017}, which also studied the minimal dimensionality of the auxiliary system for optimal measurements from the perspective of minimal normal measurement models.

\section{Application of the observational entropy framework to von Neumann measurement scheme}\label{sec:von_neumann}

In this section, we take an example of an indirect measurement scheme, called the von Neumann measurement scheme, and analyze it with the lens of the observational entropy framework. This is to illustrate the theory of observational entropy as a measure of observers' uncertainty and to demonstrate this measure as a useful figure of merit that helps in determining the type of measurement that performs best in information extraction. The derived expressions for the probabilities and corresponding volumes will be used for our simulations in the next section.

\subsection{Single measurement von Neumann scheme}

A special type of indirect measurement scheme, given by protocol of Eq.~\eqref{eq:protocol}, is the von Neumann measurement scheme (see Fig.~\ref{Fig:scheme} for illustration). In this scheme, the auxiliary system (\emph{probe}, also called \emph{pointer} specifically in this scheme), which is initially decoupled from the system, is assumed to be a heavy mass particle (although represented by a quantum state) initially localized at some position $x$ with some uncertainty $\Omega$. It interacts with the system through a unitary,
\[\label{eq:interactionU}
\U=\exp\big(-i\kappa\hat{M}\otimes \hat{p}\big).
\]
Above, $\hat{M}$ is an Hermitian system observable, and 
\[
\hat{p}=-i\frac{d}{dx}
\]
is the momentum operator of the probe~\footnote{The action of the unitary on the position-represented probe state is to translate it, i.e., $e^{-ix_0\hat{p}}\ket{\varphi_x}=\ket{\varphi_{x-x_0}}$.}. The parameter $\kappa=\lambda \Delta\tau$ is a product of interaction strength $\lambda$ and the time of interaction $\Delta\tau$. Using the eigenstates of the observable ($\hat{M}\ket{m} = \mu_m \ket{m}$), the operator $\hat{M}$ can be represented as $\hat{M}=\sum_\mu \mu \P_\mu$, where $\P_\mu=\sum_{\mu_m=\mu}\pro{m}{m}$ is a projector. Writing the state of the system in the eigenbasis of the measurement operator as $\ket{\psi}=\sum_m\alpha_m\ket{m}$, the interaction unitary acts as\footnote{Abusing the notation, what we mean by the present expressions is $\ket{\varphi_x}\equiv\int_{-\infty}^{\infty}dx\, \varphi_x\ket{x}$ and $\ket{\varphi_{x-\kappa\mu_ m}}\equiv\int_{-\infty}^{\infty}dx\, \varphi_{x-\kappa\mu m}\ket{x}$, when expressed in the position basis.}
\[\label{eq:r}
\U\ket{\psi}\ket{\varphi_x}=\sum_{m} \alpha_m\ket{m}\ket{\varphi_{x-\kappa \mu_m}}
\]
on a initial product state.

\begin{figure}[t]
\begin{center}
\includegraphics[width=1\hsize]{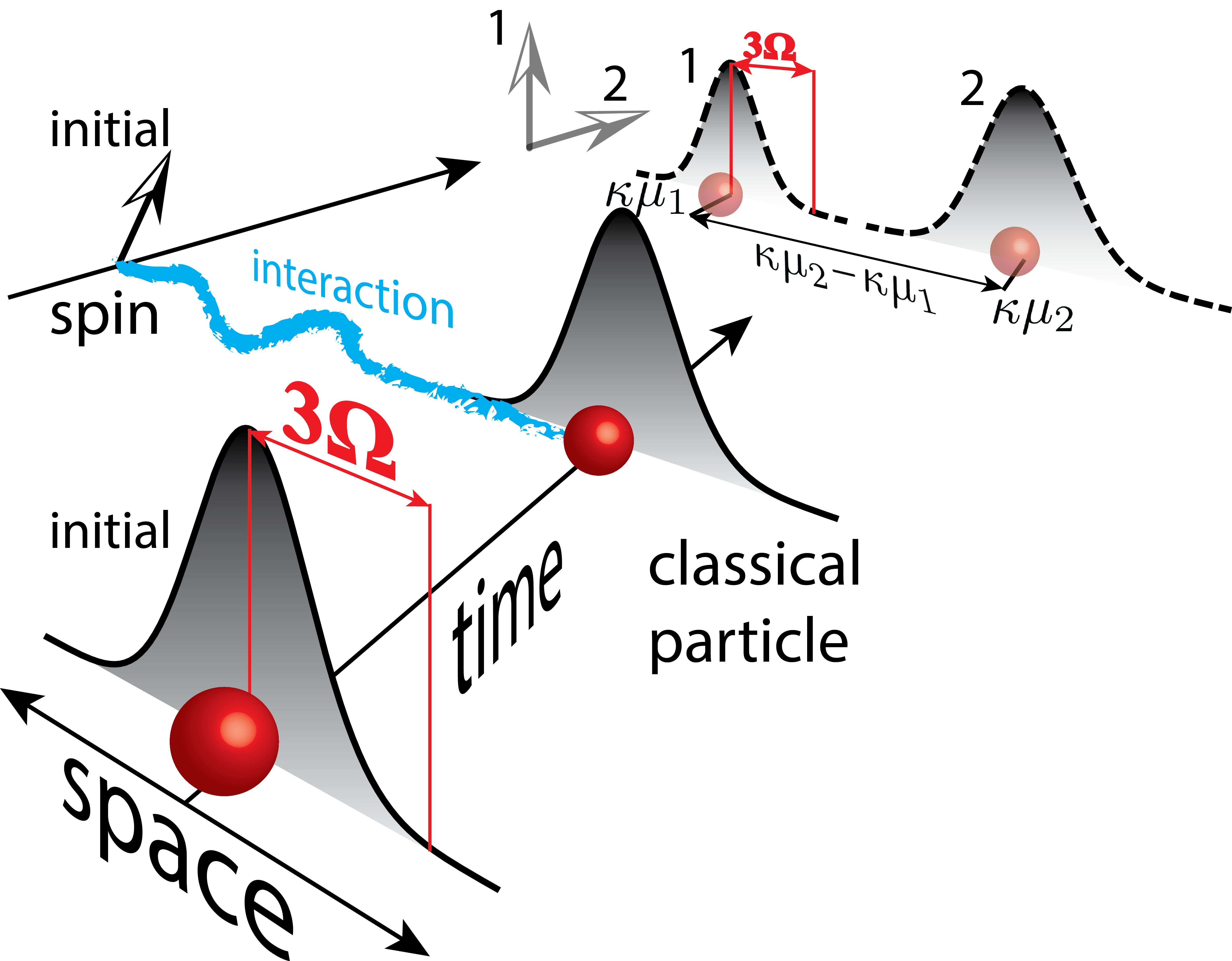}\\
\caption
{
Schematic picture of the von Neumann measurement scheme. A classical particle is localized at position $x=0$. Its position uncertainty is given by the Gaussian distribution with standard deviation $\Omega$. Through the interaction Eq.~\eqref{eq:interactionU}, the spin becomes correlated with the classical particle: if the spin is pointing up, the classical particle is moved by $\kappa \mu_1$, where $\mu_1$ is the first eigenvalue of the measurement operator $\hat{M}$ (in this case the spin operator). If the spin is pointed horizontally, the particle is moved by $\kappa \mu_2$ (with $\mu_2$ being the second eigenvalue). Thus, the spin information is translated into a position measurement of the classical particle.}
\label{Fig:scheme}
\end{center}
\end{figure}

For a general decoupled system and probe, we have
\[\label{eq:Uongeneralrho}
\begin{split}
&\U\R\otimes\pro{\varphi_x}{\varphi_x}\U^\dag\\
&=\sum_{m,m'}\rho_{mm'}\pro{m}{m'}\otimes \pro{\varphi_{x-\kappa \mu_m}}{\varphi_{x-\kappa \mu_{m'}}},    
\end{split}
\]
where $\R=\sum_{m,m'}\rho_{mm'}\pro{m}{m'}$ is the mixed initial state of the system.

After the interaction of the system and the probe, the position of probe is measured. This measurement corresponds to a coarse-graining, $
\C_B=\{\P_x\}$ $(\P_x=\pro{x}{x})$, on the probe. The von Neumann measurement scheme translates measuring of a system observable $\hat{M}$ into a problem of measuring the position of the probe. To illustrate that, consider an extreme case in which the probe is completely localized, i.e., its state is given by a position eigenstate $\ket{\varphi_x}=\ket{x}$. The outcome of a position measurement on the probe after the interaction is ``$x-\kappa \mu$'' with probability $p_\mu=\tr[\P_\mu\R]=\sum_{\mu_m=\mu}\rho_{mm}$. After obtaining this measurement outcome, the system state is projected onto the state $\P_\mu\R\P_\mu/p_\mu$. Thus, in such an idealized case in which the probe is completely localized, the von Neumann measurement scheme corresponds to a projective measurement of observable $\hat{M}$ on the system.

The protocol of the von Neumann measurement scheme can be summarized by a map
\[\label{eq:protocolvonNeumann}
\begin{split}
&\R\xrightarrow[]{\text{``$x$''}}\frac{\tr_B[(\I\!\otimes\!\P_x)\U\R\otimes\S \U^{\dag}(\I\!\otimes\!\P_x)]}{p_x}\equiv\frac{\PO_x(\R)}{p_x},
\end{split}
\]
where $\U=\exp(-i \kappa\hat{M}\otimes \hat{p})$, $p_x=\tr[\PO_x(\R)]$, and $\C=\{\PO_x\}$ defines the coarse-graining on the system. It is assumed that we can measure the probe exactly, so there is no uncertainty coming from measuring the position directly. However, typically the probe itself is assumed not to be fully localized (in order to imitate the quantum uncertainty in measuring the position); in this paper, we choose a probe initially prepared in a \emph{pure Gaussian} state,
\[\label{eq:gaussianancilla}
\S=\pro{\varphi_x}{\varphi_x},\quad \ket{\varphi_x}=\int dx \sqrt{g^\Omega_x}\ket{x},
\]
where 
\[
g^\Omega_x=\frac{1}{\sqrt{2\pi \Omega^2}}\exp\bigg(-\frac{x^2}{2\Omega^2}\bigg),
\]
and $\Omega>0$. A completely localized probe corresponds to $\Omega\rightarrow 0$. This probe state is a common choice in the literature~\cite{von1955mathematical,talkner2018measurement,thingna2020,son2021monitoring} due to its straightforward interpretation and possibility to realize such a state in an experiment~\cite{becerra2013implementation,roncaglia2014nonequilibrium}.

\subsection{Other types of von Neumann schemes}

The protocol that we described so far is called the \emph{single measurement} von Neumann scheme. There are other, more complicated types of von Neumann measurement schemes~\cite{thingna2020}. We specifically name \emph{repeated measurements}, in which multiple probes are used sequentially, and \emph{repeated contacts} scheme, in which the probe is reused, interacting with the system several times before being measured. We use observational entropy as a quantifier and illustrate the influence of different measurement schemes on information extraction from a quantum system.

To highlight the differences, let us summarize an observation from Ref.~\cite{thingna2020}: in situations with no free-time evolution in between measurements/contacts, both repeated measurements and repeated contacts can be used to extract the full available information. Repeated measurements have the advantage of collecting information about the system continuously (through many measurements) and therefore updating our information about the quantum system. However, due to the frequent measurements, the system dynamics are significantly affected by the back action for a large number of interactions, resulting in slower acquisition of information. 

In repeated contacts, we lose the ability to track the quantum system continuously, because we perform only one measurement in the end, however, the acquisition of information by the probe is faster, i.e., $N$ contacts provide more information than $N$ measurements. We will see this behavior exactly quantified by observational entropy in Sec.~\ref{sec:numerics}, as well as other interesting scenarios. For example, when the free time evolution does not commute with the measurement operator, there can be a back flow of information from the probe to the system. This information loss from the probe can be quantified using observational entropy. 

The different types of von Neumann measurement schemes considered in this work are illustrated in Fig.~\ref{Fig:types} and their mathematical descriptions are summarized in the next subsection and in Append.~\ref{app:explicit_forms}.

\begin{figure}[t]
\begin{center}
\includegraphics[width=1\hsize]{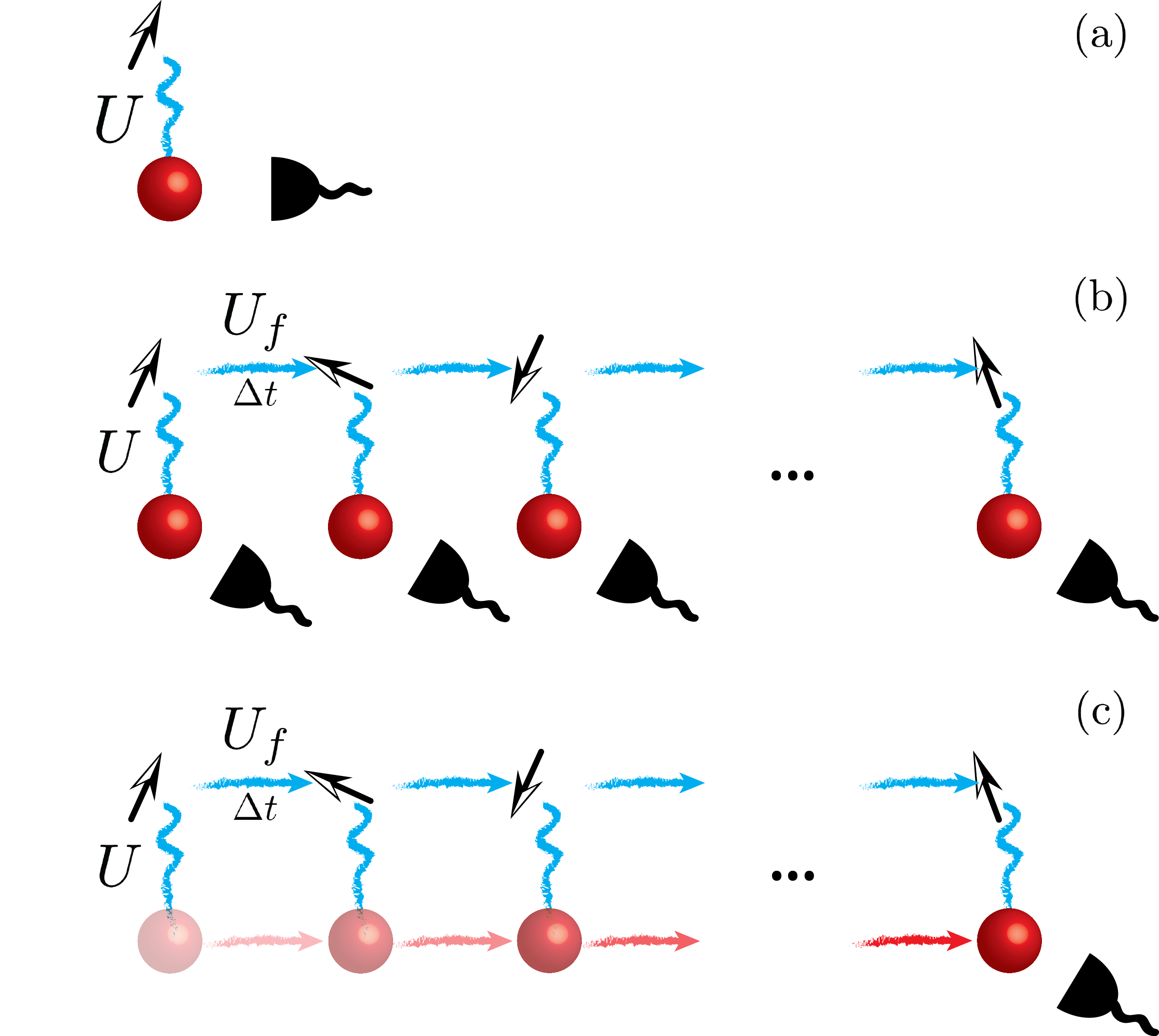}\\
\caption
{
Different types of von Neumann measurement schemes. (a) Single measurement: the quantum system (illustrated by an arrow denoting spin) interacts with a single probe (illustrated by a red circle) that is then measured, revealing information about the system. (b) Repeated measurements: the quantum system interacts with $N$ different probes. After each interaction, the probe that just interacted is measured. The system evolves with the free Hamiltonian for time $\Delta t$ in between the interactions. (c) Repeated contacts: the quantum system interacts $N$ times with the same single probe. This probe retains its state in between the interactions while the system evolves with the free Hamiltonian. At the very end, the probe is measured.}
\label{Fig:types}
\end{center}
\end{figure}

\subsection{Observational entropy in von Neumann measurement schemes}

\subsubsection{Projective measurement}

Before diving into observational entropy of von Neumann measurement schemes, we first define observational entropy of a projective measurement, which is performed directly on the system itself. The value of this entropy will serve as a reference to which we compare entropies of the different types of von Neumann measurement schemes. The probabilities of outcomes of a projective measurement and related volumes are given by
$p_\mu=\tr[\P_\mu\R]$ and $V_\mu=\tr[\P_\mu]$ respectively. This defines observational entropy of a projective measurement as $S^{\mathrm{pm}}\equiv S_{\C_{\hat{M}}}=-\sum_\mu p_\mu\ln(p_\mu/V_\mu)$.

\subsubsection{Single measurement}\label{sec:singlevN}

The evolution superoperator of a single von Neumann measurement will be derived by combining Eqs.~\eqref{eq:Uongeneralrho} and~\eqref{eq:gaussianancilla}, which gives
\begin{eqnarray}\label{eq:vonNeumannmap}
    \PO_x(\R)&=&\bra{x}\U\R\otimes\pro{\varphi_x}{\varphi_x}\U^\dag\ket{x} \nonumber \\
    &=&\sum_{m,m'}\rho_{mm'} \braket{x}{\varphi_{x-\kappa\mu_{m}}}\braket{\varphi_{x-\kappa\mu_{m'}}}{x} \pro{m}{m'} \nonumber \\
    &=&\sum_{m,m'}\rho_{mm'} \sqrt{g^\Omega_{x-\kappa\mu_{m}}}\sqrt{g^\Omega_{x-\kappa\mu_{m'}}} \pro{m}{m'}\nonumber \\
    &=&\K_x\R\K_x^\dag,
\end{eqnarray}
where $\K_x=\sum_m \sqrt{g^\Omega_{x-\kappa\mu_m}}\pro{m}{m}$. Clearly, from the normalization of Gaussian function we have $\int_{-\infty}^{+\infty}\K_x^\dag\K_x\,dx=\I$.
This further yields
\begin{align}
p_x&=\tr[\PO_x(\R)]=\sum_\mu\sum_{\mu_m=\mu}\rho_{mm}g^\Omega_{x-\kappa \mu}=\sum_\mu p_\mu g^\Omega_{x-\kappa\mu},\nonumber\\
V_x&=\tr[\PO_x(\I)]=\sum_\mu\sum_{\mu_m=\mu}g^\Omega_{x-\kappa \mu}=\sum_\mu V_\mu g^\Omega_{x-\kappa\mu}.\label{eq:Vx}
\end{align}
These probabilities and volumes will be used for calculating observational entropy of a single von Neumann measurement $S^{\mathrm{sm}}=-\int dx \ p_x\ln(p_x/V_x)$. Using Jensen's inequality, one can easily prove $S^{\mathrm{pm}}\leq S^{\mathrm{sm}} \leq \ln \dim \HS$, while the bounds $S^{\mathrm{pm}}$ and $\ln \dim \HS$ are reached by $S^{\mathrm{sm}}$ in limits $\Omega\rightarrow 0$ and $\Omega\rightarrow +\infty$, respectively. The results for this single measurement protocol are shown in Fig.~\ref{Fig:single} using a two-level system example whose details are described in Sec.~\ref{sec:genpar}.

\subsubsection{Repeated measurements}\label{sec:rep_measurements}
Repeated von Neumann measurement consists of $N$ interactions with $N$ different probes, all prepared in the same initial state~\eqref{eq:gaussianancilla}. The position of the probe is measured immediately after the system interacts with it. The time of interaction with each probe is $\Delta \tau=\kappa/\lambda$, chosen to be much smaller than the evolution time-scale of the system in between two interactions. When this process is repeated $N$ times with $N$ different probes, we obtain $N$ measurement outcomes of the positions of the probes, $\bx=(x_1,\dots,x_n)$. Assuming that the system freely evolves with Hamiltonian $\hat{H}$ in between the interactions, $\U_f=\exp(-i\hat{H}\Delta t)$ of the system, the total time of free evolution $T=N\Delta t$. Additionally, we assume that the probes themselves do not evolve outside of the interaction (the timescale of their evolution is much slower than the timescale of the experiment). The above description can be illustrated as follows,
\[\label{eq:protocolrepeatedinteractions}
\begin{split}
&\R\xrightarrow[]{\text{``$\bx$''}}\frac{\PO_\bx^{\mathrm{rm}}(\R)}{p_\bx},
\end{split}
\]
where, defining evolution superoperator $\mathcal{U}_f=\U_f(\bullet)\U_f^\dag$, the superoperator of repeated measurement is
\[\label{eq:risuper}
\PO_\bx^{\mathrm{rm}}(\bullet)=\mathcal{U}_f\PO_{x_N}\dots\mathcal{U}_f\PO_{x_1}(\bullet).
\]
Above, $\PO_{x_i}$ are defined by Eq.~\eqref{eq:vonNeumannmap}. The probabilities and volumes for the observational entropy are $p_\bx=\tr[\PO_\bx^{\mathrm{rm}}(\R)]$ and $V_\bx=\tr[\PO_\bx^{\mathrm{rm}}(\I)]$, which gives $S^{\mathrm{rm}}=-\int dx_1\dots dx_N \ p_\bx\ln(p_\bx/V_\bx)$.

\subsubsection{Repeated contacts}\label{sec:rep_contacts}
In sharp contrast to repeated measurements, repeated contacts consist of $N$ interactions with a single probe, \emph{without} resetting or measuring the probe after each interaction. The probe is measured only once with a position measurement at the very end. Similar to repeated measurements, here we also assume that the evolution time of the probe is very slow as compared to the timescale of the experiment and the interaction time interval $\Delta \tau = \kappa/\lambda$ is very short such that the system does not evolve within that time. It is described by 
\[
\begin{split} 
&\R\xrightarrow[]{\text{``$x$''}}\frac{\PO_x^{\mathrm{rc}}(\R)}{p_x}.
\end{split}
\]
The superoperator of repeated contacts reads
\[\label{eq:rcsuper}
\PO_x^{\mathrm{rc}}(\bullet)=\tr_B[(\I\otimes\P_x)(\mathcal{U}_f\mathcal{U})^N(\bullet\,\otimes\S) (\I\otimes\P_x)],
\]
where we have defined evolution superoperator $\mathcal{U}_f=(\U_f\otimes \I) (\bullet)(\U_f\otimes \I)^\dag$ and interaction superoperator $\mathcal{U}=\U (\bullet)\U^\dag$ with $\U$ being defined by Eq.~\eqref{eq:interactionU}.  Probabilities and volumes for the observational entropy are given by $p_x=\tr[\PO_x^{\mathrm{rc}}(\R)]$ and $V_x=\tr[\PO_x^{\mathrm{rc}}(\I)]$, which yields $S^{\mathrm{rc}}=-\int dx \ p_x\ln(p_x/V_x)$. 

There are several different scenarios of repeated contacts, depending on the time step $\Delta t$ in the evolution operator $\U_f$. In the case where the total time of interaction $T=N\Delta t$ and the ratio of interaction time and free evolution time $\kappa/\Delta t=\lambda \Delta \tau/\Delta t$ are fixed, we can solve this problem analytically for a large $N$. This solution is found to be
\[\label{eq:limit_PO}
\PO_x^{\mathrm{rc}}(\bullet)=\tr_B[(\I\!\otimes\!\P_x)\U_{\mathrm{limit}}(\bullet\,\otimes\S)\U_{\mathrm{limit}}^\dag (\I\!\otimes\!\P_x)].
\]
Above $\tr_B$ represents the trace over the probe and the unitary operator
\[
\U_{\mathrm{limit}}=\exp\big(-i\big(\hat{H}\otimes\I+R\,\hat{M}\otimes \hat{p}\big)T\big)
\]
was obtained from the Lie-Trotter product formula $e^{\hat{X}+\hat{Y}}=\lim_{N\rightarrow \infty}(e^{\hat{X}\frac{1}{N}}e^{\hat{Y}\frac{1}{N}})^N$. We denote the corresponding observational entropy as $S_{\mathrm{limit}}$. This particular value is plotted as a solid blue line in Fig.~\ref{Fig:limit}.

The explicit forms of the repeated measurement and repeated contact superoperators, which we used for our simulations in the next section, can be found in Append.~\ref{app:explicit_forms}.

\begin{figure}[t!]
\begin{center}
\includegraphics[width=1\hsize]{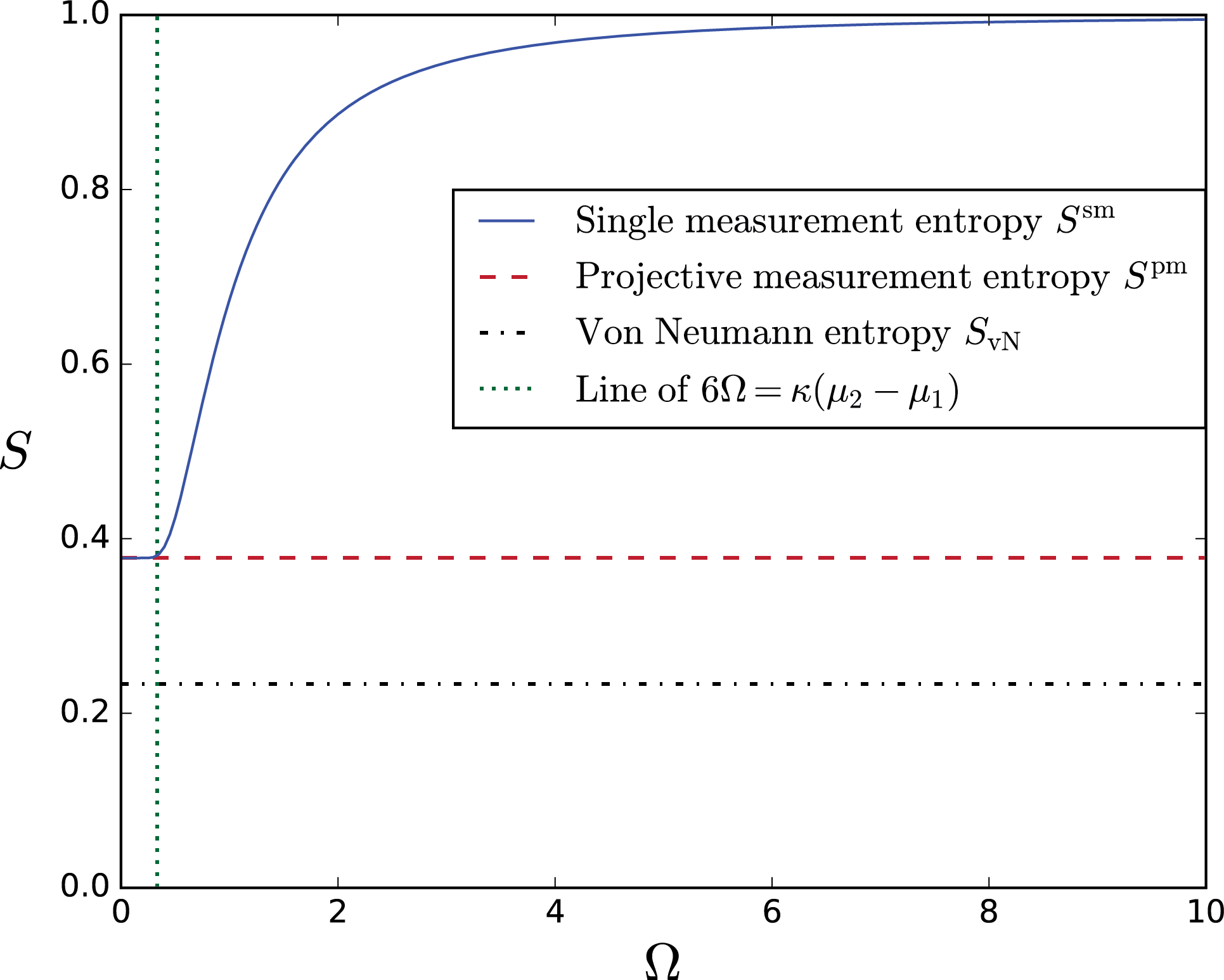}\\
\caption
{
Observational entropy of a single von Neumann measurement $S^{\mathrm{sm}}$ (blue curve) as a function of the standard deviation of the probe $\Omega$, compared to the observational entropy of related projective measurement $S^{\mathrm{pm}}$ (red dashed line) and the von Neumann entropy (black dot-dashed line). In the limit $\Omega\rightarrow 0$, $S^{\mathrm{sm}}=S^{\mathrm{pm}}$ [see discussion below Eq.~\eqref{eq:Vx}]. The green dotted vertical line marks the transition below which $S^{\mathrm{pm}} \approx S^{\mathrm{sm}}$. Within this regime, $6\Omega \le \kappa(\mu_2-\mu_1)$, the two Gaussian distributions of the probe, corresponding to the two measurement outcomes, barely overlap with each other (see Fig.~\ref{Fig:scheme}). This allows the observer to infer the quantum state of the system exactly. In other words, for highly (but not necessarily infinitely) localized probes, indirect measurement can extract as much information as a direct measurement. In the other limit, $S^{\mathrm{pm}}=1$ for $\Omega\rightarrow +\infty$ indicating that when the Gaussian probe is highly spread, inference about the system state is impossible, thus resulting in the maximal entropy. The initial state is chosen as $\R(0,\pi/16,\pi/16)$ [Eq.~\eqref{eq:initial_state}], $\kappa=1$, and the free evolution between contacts $\U_f = \I$.}
\label{Fig:single}
\end{center}
\end{figure}

\begin{figure*}[t]
\begin{center}
\includegraphics[width=1\hsize]{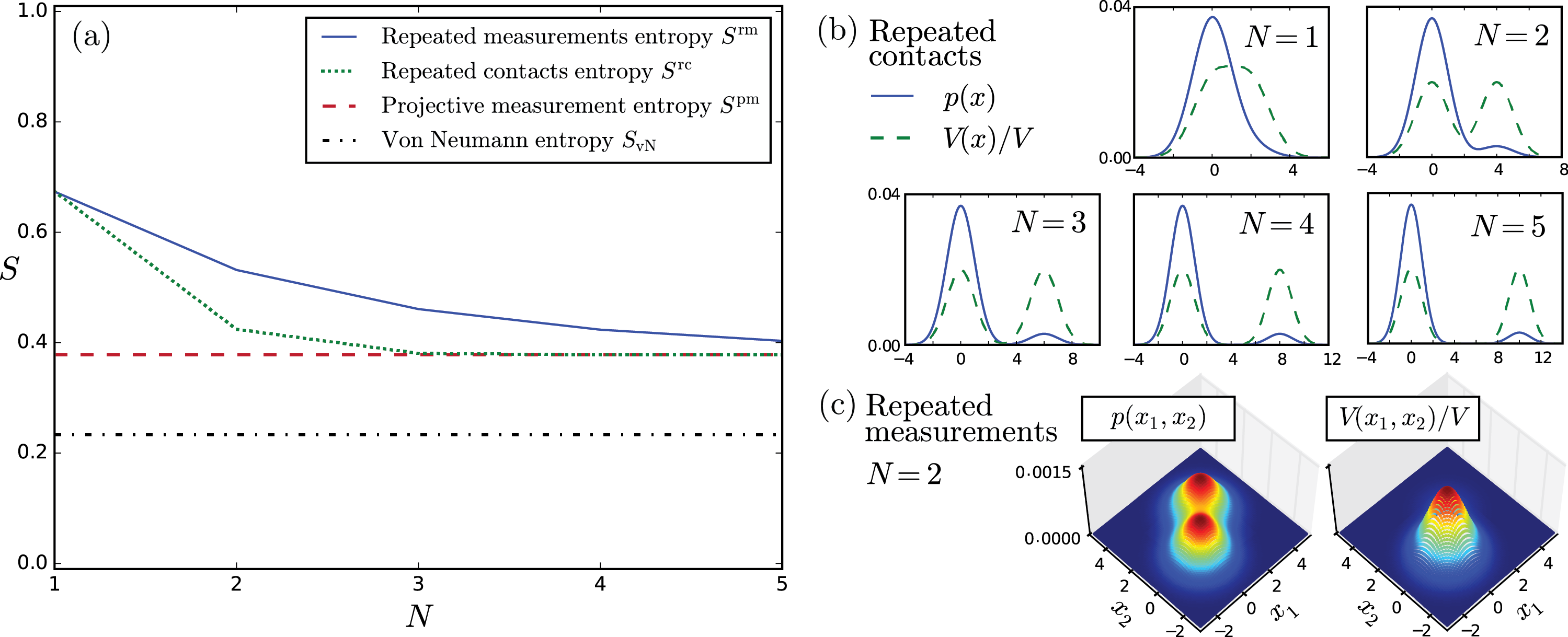}\\
\caption
{(a) The repeated measurement entropy $S^{\mathrm{rm}}$ (blue solid line) and the repeated contacts entropy $S^{\mathrm{rc}}$ (green dotted line) in comparison with the observational entropy given by the projective measurement $S^{\mathrm{pm}}$ (red dashed line), as a function of number of interactions or number of contacts $N$. The two-level system evolves with Hamiltonian~\eqref{eq:yesH} that commutes with the measurement operator $\hat{M}$. Both entropies converge to the projective measurement entropy for a large number of measurements/contacts. We choose the initial state to be $\R(0,\pi/16,\pi/16)$, Eq.~\eqref{eq:initial_state}, but the same qualitative behavior is observed for any initial state. For any initial state, for $N=1$ the corresponding observational entropies are equal, both converge to the projective measurement entropy, and the repeated contacts scheme always converges faster in $N$ than that of the repeated measurement scheme. The von Neumann entropy (black dash dotted line) is a lower bound to all other entropies, as expected from Theorem~\ref{thm:bounded_multiple}. (b) Corresponding probability distribution $p(x)$ for repeated contacts as a function of a position measurement outcome $x$, for various number of repeated contacts $N$, and the corresponding normalized macrostate volumes $V(x)/V$ (where $V=\dim \HS=2$). For increasing $N$ the distributions $p(x)$ and $V(x)/V$ become more distinguishable as expected from Eq.~\eqref{eq:KLdivergence}, which in this case specifically gives $S^{\mathrm{rc}}=\log_2 V-D_{\mathrm{KL}}(p(x)||V(x)/V)$. (c) The probability distribution of position measurements of two probes $p(x_1,x_2)$ and the corresponding weighted volumes [$N=1$ case is identical to repeated contacts as shown in (b). Outcomes $x_1$ and $x_2$ are correlated, but symmetric ($p(x_1,x_2)=p(x_2,x_1)$]. Also here we have $S^{\mathrm{rm}}=\log_2 V-D_{\mathrm{KL}}(p(\bx)||V(\bx)/V)$. In panel (a) we see that the repeated contacts entropy converges faster than the repeated measurements entropy. Intuitively, this is because it is easier to move the two Gaussians apart in one dimension to make them more distinguishable, as is shown in (b), rather than to increase the number of dimensions as shown in (c), in which case the overlap between $p(\bx)$ and $V(\bx)/V$ decays slower---expected to be roughly by a $\sqrt{N}$ slower (see the final paragraph of Sec.~\ref{sec:numerics}). For all panels we choose $\kappa=\Omega=1$, and the time step of free-time evolution between the measurements/contacts to be fixed at $\Delta t=1$. Thus the total time of free-time evolution is given by $T=N\Delta t$.}
\label{Fig:non}
\end{center}
\end{figure*}

\begin{figure*}[t!]
\begin{center}
\includegraphics[width=1\hsize]{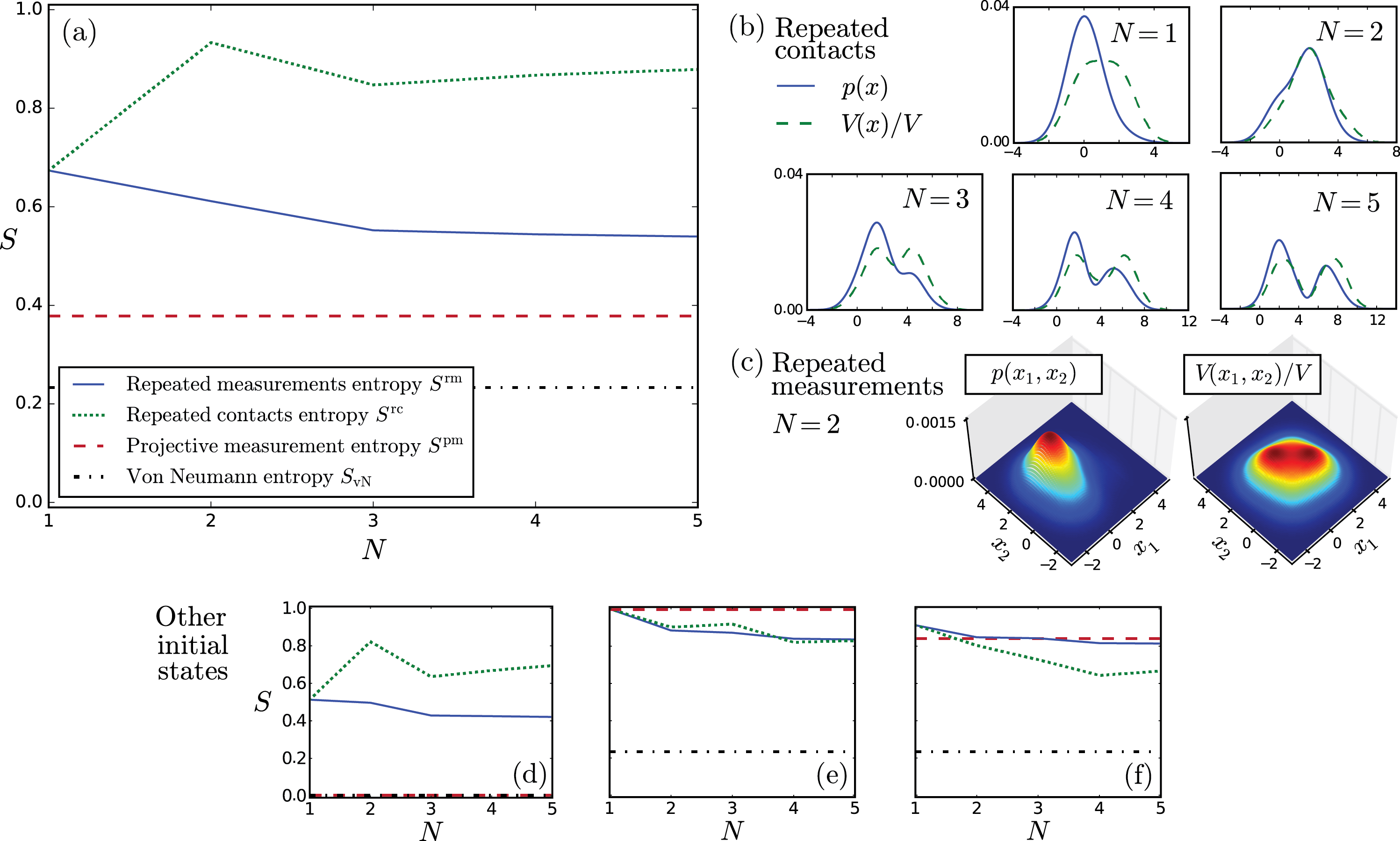}\\
\caption
{(a -- c) are similar to Fig.~\ref{Fig:non}(a -- c), respectively, with the same initial state $\R(0,\pi/16,\pi/16)$. However, here we choose Hamiltonian~\eqref{eq:nonH} that does not commute with the measurement operator $\hat{M}$. (d -- f) consider initial states $\R(0,0,0)$, $\R(0,\pi/4,\pi/16)$, $\R(\pi/3,\pi/3,\pi/16)$, Eq.~\eqref{eq:initial_state}, respectively, with other parameters identical to the case (a). (e) shows that repeated contact and repeated measurement entropies are not bounded by each other. In particular, unlike in the commuting case of Fig. 4, one cannot say that either repeated contacts or repeated measurement schemes perform better than the other (compare (a) and (d) versus (f)). Which performs better depends on the specific initial state, evolution, measurement operators, and the number of measurements/contacts (e). In (e) and (f), the evolution by a Hamiltonian that does not commute with the measurement operator pushes both the repeated measurement and repeated contact entropy below the projective measurement entropy. It does this by aligning the state better with the measurement basis through the evolution (making the eigenvectors of the transformed state more similar to the eigenvectors of the measurement operator $\hat{M}$ than it was before the evolution). All panels show that repeated measurement entropy always decreases with $N$ because each new measurement provides new information, and does not cause the observer to forget information they already obtained. This is not true for repeated contacts, because there is only a single measurement at the very end, so there may be times when the information leaks back from the probe to the system through the evolution, thus becoming inaccessible at the time of measurement. In all cases, entropies start from the same value, because for $N=1$ the repeated contacts scheme is equal to the repeated measurements scheme. (c) shows that unlike in the commuting case shown in Fig.~\ref{Fig:non}, in this non-commuting case, i.e., $[\hat{H},\hat{M}]\neq 0$, the probability distribution $p(x_1,x_2)$ is not symmetric. All other parameters are same as Fig.~\ref{Fig:non}.}
\label{Fig:int}
\end{center}
\end{figure*}

\begin{figure}[t!]
\begin{center}
\includegraphics[width=1\hsize]{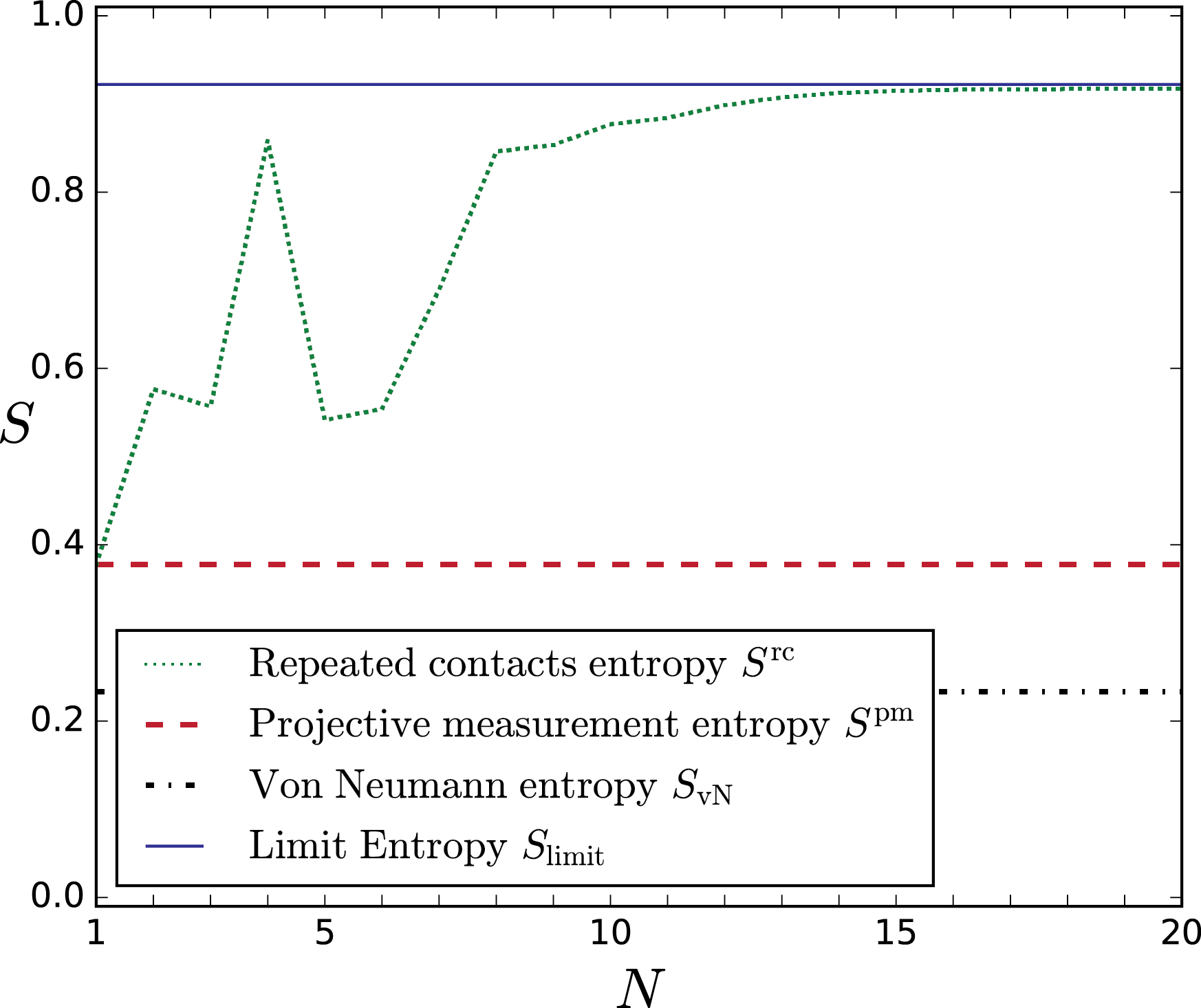}\\
\caption
{Behavior of repeated contact entropy $S^{\mathrm{rc}}$ as a function of the number of contacts $N$ where the total time of interaction $T=N\Delta t = 10$ is fixed (different limit as compared to Figs.~\ref{Fig:non} and~\ref{Fig:int} where $\Delta t$ was fixed). The ratio of the interaction and the free time evolution time is set as $\kappa/\Delta t=1$. The repeated contact entropy converges to the predicted theoretical value $S_{\mathrm{limit}}$ in the limit of large $N$, Eq.~\eqref{eq:limit_PO}.}
\label{Fig:limit}
\end{center}
\end{figure}

\section{Numerical results}\label{sec:numerics}

We perform several simulations (see Append.~\ref{app:numerics} for details) of von Neumann measurement schemes, to study how different types of schemes perform in information extraction. The quantifier of this performance will be observational entropy. Specifically, we show how good these indirect measurements are in comparison with direct measurements and with each other. This depends on several model parameters such as the amount of localization of the auxiliary system, the number of indirect measurements performed, and the number of contacts between the auxiliary and the measured system.

Due to the computational complexity of the given task, our analysis is restricted to a two-dimensional system. However, these simulations are mainly a demonstration of the general theory, proven for systems of arbitrary dimension.

\subsection{General parameters of our model}\label{sec:genpar}
The computational basis will be defined by the eigenbasis of the measurement operator,
\[\label{eq:M}
\hat{M}=
\begin{pmatrix}
0 & 0 \\
0 & 2 
\end{pmatrix},
\]
whose eigenvalues are ${\mu_1=0}$ and ${\mu_2=2}$. We evolve the system with two types of Hamiltonian: one that commutes with the measurement operator $\hat{M}$,
\[\label{eq:yesH}
\hat{H}=
\begin{pmatrix}
0 & 0 \\
0 & 2 
\end{pmatrix},
\]
and another that does not,
\[\label{eq:nonH}
\hat{H}=\begin{pmatrix}
0 & 1+i \\
1-i & 2 
\end{pmatrix}.
\]
The exact form of the commuting Hamiltonian is irrelevant. Due to $[\hat{H},\hat{M}]=0$, the Hamiltonian will not play a role in the probabilities ($p_{x}$ and $p_\bx$) and volumes ($V_{x}$ and $V_\bx$), thereby having no effect on the value of observational entropy.

We parametrize the initial qubit state as
\[\label{eq:initial_state}
\R(\phi,\theta,\alpha)=\U
\begin{pmatrix}
\cos^2\alpha & 0 \\
0 & \sin^2\alpha 
\end{pmatrix}
\U^\dag,
\]
where
\[
\U=\begin{pmatrix}
e^{i\phi} & 0 \\
0 & e^{-i\phi} 
\end{pmatrix}
\begin{pmatrix}
\cos \theta & \sin \theta \\
-\sin \theta & \cos \theta 
\end{pmatrix}.
\]
Moreover, to fix the maximum of observational entropy, we chose the basis of the logarithm for the observational entropy to be $\log_2$ instead of the natural logarithm in all of the examples (this changes only the scale).

The results are shown in the following figures, whose captions contain additional discussion: Figure~\ref{Fig:single}~depicts observational entropy of a single measurement scheme (Sec.~\ref{sec:singlevN}). Figures~\ref{Fig:non} and~\ref{Fig:int} show the case of repeated measurements (Sec.~\ref{sec:rep_measurements}) and repeated contacts (Sec.~\ref{sec:rep_contacts}) schemes, for an evolution Hamiltonian that does and does not commute with the measurement operator, respectively.
Figure~\ref{Fig:limit} depicts a different type of repeated contacts limit, wherein the ratio of interaction time and free evolution time $\kappa/\Delta t=\lambda \Delta \tau/\Delta t$ are fixed [see Eq.~\eqref{eq:limit_PO}], such that the large $N$ limit can be taken unlike previous figures.

\subsection{Relation to distinguishability of the observed and expected probabilities}
Panels (b) and (c) of Figs.~\ref{Fig:non} and~\ref{Fig:int} depict the evolution of the quantum system via the probabilities of position measurements and normalized volumes respectively. They also shed light on the behavior of observation entropy through the relation
\[\label{eq:KLdivergence}
S_\bC=\ln V-D_{\mathrm{KL}}(p_\bi\|V_\bi/V),
\]
where $V=\dim \HS$ is the dimension of the Hilbert space. This identity shows that observational entropy can be expressed in terms of Kullback-Leibler (KL) divergence, defined as
\[
D_{\mathrm{KL}}(p_i\|q_i)=\sum_ip_i\ln \frac{p_i}{q_i}.
\]
KL divergence is a measure of distinguishability of two classical probability distributions $\{p_i\}$ and $\{q_i\}$. Thus, we can interpret observational entropy as a measure of distinguishability between the observed probability $\{p_\bi\}$, and the probability of outcomes $\{V_\bi/V=\tr[\PO_\bi(\R_{\max})]\}$. The second probability is the one that would be produced if one were to measure on the maximally uncertain state $\R_{\max}=\I/V$, i.e., a quantum system that we have absolutely no information about. Thus the probability distribution produced by this state serves as our reference, and the KL divergence thus shows how far are we from having absolutely no information.

\subsection{Comparison of repeated measurements and repeated contacts}
Finally, let us comment on different scalings with which the repeated measurements and repeated contacts approach the projective entropy in Fig.~\ref{Fig:non}. In Append.~\ref{app:explicit_forms} we show that both $p$ and $V$ are a linear combination of Gaussians with peak positions given by eigenvalues of the measurement operator $\hat{M}$. In the repeated contact scheme, the peaks of the furthest Gaussians are $N(\mu_{\max}-\mu_{\min})$ far from each other. On the other hand, in repeated measurements the peaks form corners of a hypercube with $(\mu_{\min},\dots,\mu_{\min})$ and $(\mu_{\max},\dots,\mu_{\max})$ being the furthest points. Hence the Euclidean distance gives the result $\sqrt{N}(\mu_{\max}-\mu_{\min})$ for the distance between them. Relation~\eqref{eq:KLdivergence} tells us that the minimum of observational entropy is reached when the difference between distributions $p$ and $V$ is maximal and their overlap is minimized. Since the the standard deviation $\Omega$ of these Gaussians is fixed and does not change with $N$, to minimize the overlap one would ideally create a distribution $p$ with a peak that is the furthest possible distance from the peak of $V$. Since the scaling of the distance of the furthest peaks differs by a $\sqrt{N}$ we would expect that also the speed of convergence scales with the same factor. However, this argument does not represent proof, which we leave for future work.

\section{Observational entropy as a tool for quantum state inference}\label{sec:quantum_tomography}
In this section, we describe an algorithm through which we can infer the state of the quantum system, when observational entropy becomes equal to the von Neumann entropy. We will show that it is possible to use an indirect measurement scheme to determine the state of the system when
\begin{enumerate}
    \item we know the von Neumann entropy and
    \item we manage to find a measurement such that its corresponding observational entropy is equal to the von Neumann entropy,
\end{enumerate}
with two additional implicit assumptions, namely,
\begin{enumerate}
\setcounter{enumi}{2}
    \item we know probabilities $p_\bi$, for example, because we experimentally determined them and
    \item we know the coarse-grainings that defines the observational entropy.
\end{enumerate}
This finding holds for any type of measurement scheme (projective, indirect, or any other generalized measurement scheme). To show that, we will provide an explicit algorithm of \emph{how} to determine the quantum state when these conditions are satisfied. This \emph{quantum state inference} algorithm is based on the equality condition of Theorem~\ref{thm:bounded_multiple}, Eq.~\eqref{eq:rhoandPi} specifically.

\subsection{Algorithm for the quantum state inference}
Let us denote the coarse-graining for which observational entropy is equal to von Neumann entropy as $\C=\{\PO_{\bi}\}$ and its corresponding set of POVM elements as $\{\hPi_{\bi}\}$. We will also denote the index set of unused indices in the algorithm as $I_{\mathrm{un}}$. The algorithm goes as follows:
\begin{enumerate}
\setcounter{enumi}{-1}
    \item Initialize $\R=0$ (a null matrix at the beginning which will represent the inferred quantum state at the end of the algorithm). Initialize the set of unused indices (measurement outcomes) as the set of all indices, $I_{\mathrm{un}}=\{\bi\}$.
    \item Take an index $\bi\in I_{\mathrm{un}}$ that has not been used before in this algorithm. Are there any? (In other words, is the set $I_{\mathrm{un}}$ non-empty?)
    \begin{enumerate}
        \item YES: Update the set of unused indices by subtracting that index, $I_{\mathrm{un}}=I_{\mathrm{un}}\setminus \bi$. Continue to 2.
        \item NO: Return $\R$ (this is the inferred state of the system).
    \end{enumerate}
    \item Initialize $\P=\hPi_{\bi}$ and set (or reset) $I=\{\bi\}$ (a set of indexes that defines $\P$).
    \item Find all $\bi'$ that have not been used before such that $\hPi_{\bi'}\P\neq 0$. Are there any?
    \begin{enumerate}
        \item YES: Update $\P=\P+\sum_{\bi'}\hPi_{\bi'}$ and $I=I\cup\{\bi'\}_{\bi'}$. Update the unused index set by subtracting all those indices as $I_{\mathrm{un}}=I_{\mathrm{un}}\setminus \{\bi'\}_{\bi'}$. Go back to 3.
        \item NO: According to Lemma~\ref{lma:eigenprojector}, $\P$ is a projector. Continue to 4.
    \end{enumerate}
    \item Calculate $\rho=(\sum_{\bi\in I}p_\bi)/\tr[\P]$ (where $p_\bi$'s denote the probabilities of outcomes). Accodring the Lemma~\ref{lma:eigenvalue}, $\rho$ is an eigenvalue.
    \item Update $\R=\R+\rho\P$. Go back to 1.
\end{enumerate}
The fact that $\R$ generated this way is the state of the system comes from the following lemmas.

\begin{lemma}\label{lma:eigenprojector}
$\P$'s generated by steps 1--3 form a complete set of orthogonal projectors, and each $\P$ projects into an eigenspace of the state of the system.
\end{lemma}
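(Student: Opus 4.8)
The plan is to exploit the two conditions we are handed---$S_{\mathrm{vN}}(\R)=S_{\bC}(\R)$ together with the equality condition from Theorem~\ref{thm:bounded_multiple}, namely $\bC\hookleftarrow\C_\R$ in the form of Eq.~\eqref{eq:rhoandPi}---and show that the grouping of POVM indices produced by steps 1.--3.\ of the algorithm reproduces exactly the partition $\{I^{(\rho)}\}$ furnished by $\bC\hookleftarrow\C_\R$. Concretely: by hypothesis, for each eigenvalue $\rho$ of $\R=\sum_\rho\rho\P_\rho$ there is an index set $I^{(\rho)}$ with $\P_\rho=\sum_{\bi\in I^{(\rho)}}\Pi_\bi$, and the $I^{(\rho)}$ are disjoint with union all of $\{\bi\}$. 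The first step is to observe a \emph{support} relation: since each $\Pi_\bi$ is a positive operator and $\sum_{\bi\in I^{(\rho)}}\Pi_\bi=\P_\rho$ is a projector, every $\Pi_\bi$ with $\bi\in I^{(\rho)}$ has its range inside the eigenspace of $\P_\rho$; in particular $\Pi_\bi\P_{\rho'}=0$ whenever $\bi\in I^{(\rho)}$ and $\rho'\neq\rho$, and hence $\Pi_\bi\Pi_{\bj}=0$ whenever $\bi,\bj$ lie in \emph{different} blocks $I^{(\rho)},I^{(\rho')}$.

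The second step is to run the algorithm and track which indices get merged. Starting from a seed $\bi_0\in I^{(\rho)}$, step~3 collects all $\bi'$ (not yet used) with $\Pi_{\bi'}\P\neq0$ where $\P$ is the running sum. By the support relation just established, no index from another block $I^{(\rho')}$ can ever satisfy this, so the block grown by the algorithm is contained in $I^{(\rho)}$. For the reverse containment I would argue by connectivity: within a single block $I^{(\rho)}$, the positive operators $\{\Pi_\bi\}_{\bi\in I^{(\rho)}}$ sum to the projector $\P_\rho$ and none is zero, so the ``overlap graph'' on $I^{(\rho)}$ (edge between $\bi,\bj$ iff $\Pi_\bi\Pi_\bj\neq0$) is connected---if it split into two components $A\sqcup B$, then $\P_A=\sum_{\bi\in A}\Pi_\bi$ and $\P_B=\sum_{\bi\in B}\Pi_\bi$ would be nonzero positive operators with orthogonal ranges summing to the projector $\P_\rho$, hence each is itself a projector, but then $\P_\rho=\P_A+\P_B$ is a nontrivial splitting of an \emph{eigen}projector of $\R$ into two $\R$-invariant sub-projectors, which contradicts maximality of the eigenspace decomposition (the two pieces would correspond to the same eigenvalue $\rho$, which is fine for $\R$ but---and this is the point---they cannot be separated by the POVM overlap structure and still satisfy $\bC\hookleftarrow\C_\R$ with $I^{(\rho)}$ a single block; more carefully, one should phrase $\bC\hookleftarrow\C_\R$ via the \emph{coarsest} such partition, so that each block is overlap-connected by construction). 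Because step~3 keeps absorbing every overlapping unused index until none remain, connectivity forces the algorithm's block to be all of $I^{(\rho)}$, so $\P=\P_\rho$ when step~3 terminates.

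The third step is bookkeeping: since the $I^{(\rho)}$ partition all indices and the algorithm's outer loop (step~1) keeps picking unused indices until exhausted, the collection of $\P$'s produced is exactly $\{\P_\rho\}_\rho$, which is a complete orthogonal set of projectors projecting onto the eigenspaces of $\R$---precisely the claim. One also checks that the $\rho$ assigned in step~4, $(\sum_{\bi\in I^{(\rho)}}p_\bi)/\tr[\P_\rho]$, equals the eigenvalue: $\sum_{\bi\in I^{(\rho)}}p_\bi=\tr[\P_\rho\R]=\rho\,\tr[\P_\rho]$, giving back $\rho$ (this is more than Lemma~\ref{lma:eigenprojector} asks, but it is the natural companion and presumably the content of the next lemma).

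\textbf{Main obstacle.} The delicate point is the reverse containment in step two: showing the algorithm cannot \emph{stop early} within a block, i.e.\ that the overlap graph on each $I^{(\rho)}$ is connected. This rests on how precisely one reads the definition $\bC\hookleftarrow\C_\R$---if one only knows \emph{some} partition $\{I^{(\rho)}\}$ exists, a given block might be overlap-disconnected (then it would further decompose into several $\R$-eigenprojectors for the same eigenvalue, which the algorithm would happily find separately, and the conclusion still holds but the identification ``$\P=\P_\rho$'' must be restated as ``$\P$ is a sub-projector of some eigenspace''). I would handle this by either (i) invoking that Eq.~\eqref{eq:rhoandPi} can always be taken with the coarsest partition, equivalently defining the blocks \emph{as} the overlap-connected components, so that Lemma~\ref{lma:eigenprojector} is literally about eigenprojectors; or (ii) weakening the lemma's wording to ``each $\P$ projects into an eigenspace of $\R$ and the $\P$'s are complete and orthogonal,'' which is exactly what is stated and is all the quantum-state-inference algorithm in Sec.~\ref{sec:quantum_tomography} actually uses. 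Either way the nontrivial input is the positivity-plus-projector fact: \emph{a sum of nonzero positive operators that equals a projector forces each summand to have range inside that projector}, which I would prove in one line from $0\le\Pi_\bi\le\P_\rho\Rightarrow(\I-\P_\rho)\Pi_\bi(\I-\P_\rho)=0\Rightarrow\Pi_\bi^{1/2}(\I-\P_\rho)=0$.
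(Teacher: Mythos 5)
Your fallback reading (ii) is in fact exactly the paper's position, and with it your argument lands very close to the paper's own proof. The paper never claims that an algorithm-generated $\P$ equals a full eigenprojector $\P_\rho$: it shows, via its Lemma~\ref{lma:corresponding_to_same} (POVM elements with nonzero overlap belong to the same block of the partition furnished by $\bC\hookleftarrow\C_\R$), that all indices merged in step~3 lie in a single $I^{(\rho)}$; that the generated $\P$'s are complete (because $\sum_\bi\Pi_\bi=\I$) and mutually annihilating (by the stopping rule of step~3); that each eigenprojector $\P_\rho$ is a sum of some of the generated $\P$'s; and finally that each generated $\P$ is itself a projector, hence projects \emph{into} the eigenspace. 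So your worry is well placed: connectivity of the overlap graph within a block is false in general (a degenerate eigenspace can contain two orthogonal rank-one POVM elements), and your attempted contradiction via ``maximality of the eigenspace decomposition'' does not go through --- but your remedy (ii) is precisely what the lemma asserts and what the paper proves. Where your route genuinely differs, and arguably improves on the paper, is the overlap lemma itself: your positivity argument, $0\le\Pi_\bi\le\P_\rho\Rightarrow(\I-\P_\rho)\Pi_\bi(\I-\P_\rho)=0\Rightarrow\Pi_\bi=\Pi_\bi\P_\rho$, hence $\Pi_\bi\Pi_{\bi'}=0$ across distinct blocks, is cleaner than the paper's Kraus-operator manipulations and yields the same conclusion.

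The one step you assert but do not prove is the one your fallback actually needs: that each sum $\P=\sum_{\bi\in A}\Pi_\bi$ over an overlap-connected component is a projector (this is what ``orthogonal projectors'' in the lemma requires). It is true, and the paper supplies the missing one-liner: distinct components annihilate each other (pairwise $\Pi_\bi\Pi_{\bi'}=0$ gives $\P_A\P_B=0$), so expanding $\P_\rho=\P_\rho^2$ with $\P_\rho=\P_1+\cdots+\P_k$ kills the cross terms and gives $\P_1+\cdots+\P_k=\P_1^2+\cdots+\P_k^2$; multiplying by $\P_1$ yields $\P_1^2=\P_1^3$, and since $\P_1$ is Hermitian its nonzero eigenvalues satisfy $\lambda^2=\lambda^3$, i.e.\ $\lambda=1$, so $\P_1$ is a projector. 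Add that argument (your two-component version generalizes verbatim) and your proof is complete and equivalent to the paper's.
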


\begin{lemma}\label{lma:eigenvalue}
$\rho$ generated by step 4 is an eigenvalue of the state of the system corresponding to projector $\P$.
\end{lemma}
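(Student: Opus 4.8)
The plan is to extract everything from the equality condition of Theorem~\ref{thm:bounded_multiple}: since the algorithm is only run when $S_{\mathrm{vN}}(\R)=S_\bC(\R)$, we have $\bC\hookleftarrow\C_\R$, which by Eq.~\eqref{eq:rhoandPi} means that for each eigenvalue $\rho$ of $\R=\sum_\rho\rho\P_\rho$ there is an index set $I^{(\rho)}$ with $\P_\rho=\sum_{\bi\in I^{(\rho)}}\Pi_\bi$, and these sets partition the full index set $\{\bi\}$. So \emph{some} grouping of the POVM elements into eigenprojectors exists; the content of Lemma~\ref{lma:eigenprojector} is that the specific greedy grouping produced by steps~1.--3.\ (start from an unused $\bi$, repeatedly absorb every unused $\bi'$ with $\Pi_{\bi'}\P\neq 0$ until closure) recovers exactly this partition. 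First I would record the elementary facts that all $\Pi_\bi$ are positive semidefinite and $\sum_\bi\Pi_\bi=\I$ (from Definition~\ref{def:POVM_elements} and the completeness relation~\eqref{eq:completeness}), and that $\P_\rho\P_{\rho'}=\P_\rho\delta_{\rho\rho'}$, $\sum_\rho\P_\rho=\I$.

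The key structural step is a ``non-interaction across eigenspaces'' lemma: if $\bi\in I^{(\rho)}$ and $\bi'\in I^{(\rho')}$ with $\rho\neq\rho'$, then $\Pi_\bi\Pi_{\bi'}=0$. This holds because $0\leq\Pi_\bi\leq\P_\rho$ and $0\leq\Pi_{\bi'}\leq\P_{\rho'}$ as positive operators (each is one summand of a sum of positive operators equal to $\P_\rho$ resp.\ $\P_{\rho'}$), hence the support of $\Pi_\bi$ lies in the range of $\P_\rho$ and that of $\Pi_{\bi'}$ in the range of $\P_{\rho'}$, which are orthogonal subspaces; orthogonality of supports for positive operators forces the product to vanish. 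Consequently the relation ``$\Pi_\bi\Pi_{\bi'}\neq 0$'' only ever links indices within a common $I^{(\rho)}$. Now I would argue that the greedy closure in step~3.\ never leaves an $I^{(\rho)}$: whenever $\P$ is a partial sum $\sum_{\bi\in J}\Pi_\bi$ with $J\subseteq I^{(\rho)}$, any newly absorbed $\bi'$ satisfies $\Pi_{\bi'}\P\neq 0$, hence $\Pi_{\bi'}\Pi_\bi\neq 0$ for some $\bi\in J\subseteq I^{(\rho)}$, hence $\bi'\in I^{(\rho)}$ by the contrapositive of the non-interaction lemma. So the set $I$ built by the loop is contained in $I^{(\rho)}$. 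For the reverse inclusion I would show the loop cannot terminate early: within $\P_\rho=\sum_{\bi\in I^{(\rho)}}\Pi_\bi$, if $J\subsetneq I^{(\rho)}$ then there is a leftover $\bi^\ast\in I^{(\rho)}\setminus J$, and since $\sum_{\bi\in I^{(\rho)}}\Pi_\bi=\P_\rho$ is the identity on $\mathrm{range}(\P_\rho)$ while $\sum_{\bi\in J}\Pi_\bi=\P$ is a strictly smaller positive operator there, the complementary part $\P_\rho-\P=\sum_{\bi\in I^{(\rho)}\setminus J}\Pi_\bi\neq 0$; some term $\Pi_{\bi''}$ with $\bi''\in I^{(\rho)}\setminus J$ must have nonzero product with $\P$ — otherwise $\Pi_{\bi''}$ would be supported on $\mathrm{range}(\P_\rho)\ominus\mathrm{range}(\P)$ for every such $\bi''$, but then adding them back could never rebuild all of $\P_\rho$ unless $\P$ already had full range, contradiction. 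Hence the loop keeps absorbing until $I=I^{(\rho)}$ exactly, so $\P=\P_\rho$, which is a projector onto an eigenspace of $\R$. Because each pass of step~1.\ consumes a whole $I^{(\rho)}$ and the $I^{(\rho)}$ partition $\{\bi\}$, the collection of $\P$'s produced is precisely $\{\P_\rho\}_\rho$: orthogonal, and summing to $\I$. This is Lemma~\ref{lma:eigenprojector}.

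For Lemma~\ref{lma:eigenvalue}, once $\P=\P_\rho$ is identified I just compute: using $p_\bi=\tr[\Pi_\bi\R]$ and $\R=\sum_{\rho'}\rho'\P_{\rho'}$ with $\Pi_\bi\P_{\rho'}=\Pi_\bi\delta_{\rho\rho'}$ for $\bi\in I^{(\rho)}$ (again from support containment), we get $\sum_{\bi\in I}p_\bi=\tr\big[\big(\sum_{\bi\in I}\Pi_\bi\big)\R\big]=\tr[\P_\rho\R]=\rho\,\tr[\P_\rho]=\rho\,\tr[\P]$, so the quantity $\rho=(\sum_{\bi\in I}p_\bi)/\tr[\P]$ returned in step~4.\ is exactly the eigenvalue attached to $\P_\rho$. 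Finally, reassembling $\R=\sum_\rho\rho\P_\rho$ in step~5.\ reproduces the spectral decomposition of the true state, so the algorithm outputs the correct density matrix.

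The main obstacle I anticipate is the reverse-inclusion half of Lemma~\ref{lma:eigenprojector} — showing the greedy loop does not halt prematurely with $J\subsetneq I^{(\rho)}$. The clean way to handle it is entirely in terms of supports: maintain the invariant $\mathrm{range}(\P)=\sum_{\bi\in J}\mathrm{range}(\Pi_\bi)$ and note that if $\Pi_{\bi'}\P=0$ for \emph{all} remaining $\bi'\in I^{(\rho)}\setminus J$ then those $\Pi_{\bi'}$ are all supported on the orthogonal complement of $\mathrm{range}(\P)$ inside $\mathrm{range}(\P_\rho)$, whence $\P + \sum_{\bi'\in I^{(\rho)}\setminus J}\Pi_{\bi'}$ would be block-diagonal and could still equal $\P_\rho$ — so that alone is not a contradiction, and one must instead observe that absorption in step~3.\ does \emph{not} require $\Pi_{\bi'}\P\neq 0$ to detect \emph{all} remaining indices at once but only iteratively; the correct statement is that the loop's fixed point $J$ satisfies: no unused index interacts with $\P$, \emph{and} $J$ is maximal with this ``closed under interaction'' property, which by the non-interaction lemma forces $J$ to be a union of full $I^{(\rho)}$'s, and minimality of the starting seed (a single $\bi\in I^{(\rho)}$) forces it to be the single block $I^{(\rho)}$. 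Phrasing closure correctly — as ``the connected component, in the interaction graph on indices, of the seed $\bi$'' — and then proving that connected components coincide with the $I^{(\rho)}$ is the delicate point; everything else is bookkeeping with positive operators and traces.
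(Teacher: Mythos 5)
For the statement actually at issue (Lemma~\ref{lma:eigenvalue}) your computation is essentially the paper's: sum $p_\bi=\tr[\Pi_\bi\R]$ over $\bi\in I$, use that $\P=\sum_{\bi\in I}\Pi_\bi$ lives inside a single eigenspace so that only one eigenvalue survives in $\tr[\P\,\sum_{\rho'}\rho'\P_{\rho'}]$, and divide by $\tr[\P]\neq 0$. However, you condition this on the claim that the greedy loop of steps~1.--3.\ returns $I=I^{(\rho)}$ and $\P=\P_\rho$ exactly, and that claim is both unproven in your write-up and false in general. With degenerate eigenvalues the loop can halt with $\P$ a proper subprojector of $\P_\rho$: take $\R=\I/2$ on a qubit and the coarse-graining $\{\pro{0}{0}(\bullet)\pro{0}{0},\,\pro{1}{1}(\bullet)\pro{1}{1}\}$; then $S_{\mathrm{vN}}=S_\C$, the single block is $I^{(1/2)}=\{1,2\}$ with $\P_{1/2}=\I$, yet $\Pi_1\Pi_2=0$, so the loop seeded at $1$ terminates with $\P=\pro{0}{0}\neq\P_{1/2}$. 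You half-acknowledge this (your reverse-inclusion discussion concedes the scenario ``is not a contradiction''), but your proposed repair --- that connected components of the interaction graph coincide with the blocks $I^{(\rho)}$ --- fails on exactly this example, so the premise $\P=\P_\rho$ cannot be used.

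Fortunately the overreach is unnecessary, and this is precisely how the paper phrases its Lemma~\ref{lma:eigenprojector}: each generated $\P$ only \emph{projects into} an eigenspace, and a degenerate $\P_\rho$ may decompose as a sum of several generated $\P$'s. Your own non-interaction/support-containment argument already yields the weaker fact that is actually needed, namely that every $\Pi_\bi$ with $\bi\in I$ is supported in the range of one fixed $\P_{\rho_0}$, hence $\P\P_{\rho'}=\delta_{\rho'\rho_0}\P$ and $\sum_{\bi\in I}p_\bi=\tr[\P\R]=\rho_0\tr[\P]$, giving $\rho_0=(\sum_{\bi\in I}p_\bi)/\tr[\P]$ as in the paper. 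So the Lemma~\ref{lma:eigenvalue} step survives with a one-line rewording (replace ``$\P=\P_\rho$'' by ``$\P\leq\P_{\rho_0}$''), but as written your argument routes through a statement about the algorithm's output that does not hold.
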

The proofs of these lemmas, which are a consequence of Eq.~\eqref{eq:rhoandPi}, can be found in Append.~\ref{app:quantum_state_inference}. There we also show how this algorithm works on an example of a single von Neumann measurement scheme from Sec.~\ref{sec:singlevN}.

\subsection{Comparison with quantum tomography}
The task of quantum tomography~\cite{nielsen2010quantum,Toninelli2019concepts} is to determine the state of the system by making measurements in many different bases (while assuming we have access to an infinite number of copies of the state of the system). Because a single measurement provides at most $N$ real parameters, but the density matrix depends on $N^2-1$ real parameters, the number of different bases that needs to be employed to achieve this task is also $N$. If these bases are chosen properly, this always leads to the identification of the quantum state.

Our inference algorithm requires just one measurement basis in principle, but we have to be \emph{lucky} to find exactly the optimal measurement that diagonalizes the density matrix. However, this cannot be consistently done, at least not without some additional optimization, because the density matrix is not apriori known. This brings us to the limitations of the presented scheme.

\subsection{Limitations of the presented scheme}
The main limitations of the present method lie in its two inherent assumptions: First, in order to use the algorithm, we need to know the value of the von Neumann entropy. This might be a problem for a system about which we have absolutely no information, and therefore also no knowledge of the von Neumann entropy. In such a situation, it cannot be determined whether the observational entropy has reached its minimum or not. In other words, we cannot know whether there is a better, yet undiscovered coarse-graining leading to its lower value, closer or equal to the unknown von Neumann entropy. 

However, in some situations, this might not be an issue. For example, in cases in which we know the initial state and the system that evolves unitarily. If the unitary evolution operator is unknown, quantum mechanics provides no means to estimate the state of the system after its evolution. But in such cases, since the von Neumann entropy does not evolve, the algorithm can be readily applied.

Second, we need to find the measurement that will lead the observational entropy to reach its minimum given by the von Neumann entropy. Yet, the algorithm itself does not give a recipe on how to do that. One could, for example, randomly try different types of measurements and pick a few of them that give a very low entropy. They could then further minimize the corresponding observational entropy using optimization over local parameters. The method of finding this optimal measurement should employ adaptive feedback, similar to those developed for quantum tomography~\cite{adaptive2016,adaptive2017,adaptive2017b,adaptive2018,adaptive2021}, in which the next measurement is determined by the outcome of the previous measurements, so that in the limit of a large number of measurements the minimum of observational entropy (given by von Neumann entropy) is reached. This is akin to a global optimization problem that requires adaptive feedback on the set of measurements, which we leave for future work.

Finally, it is an interesting open problem whether the aforementioned algorithm could be generalized to situations in which the von Neumann entropy is not known. Or, to situations when it is known but observational entropy after a finite set of measurements is only approximately equal to the von Neumann entropy. In such cases, it should be possible to provide an error estimate for the density matrix that should tend to zero when the observational entropy converges to the von Neumann entropy.

\section{Discussion, Conclusion, and Future Directions}\label{sec:conclusion}

In this paper, we generalized the concept of observational entropy to include general coarse-grainings (given by generalized measurements). This was motivated by a growing interest from the experimentalists in generalized quantum measurements and to unravel the fundamental nature of observational entropy and its interpretation. In this quest, we overcame and resolved several important subtleties. These were answering how to define a general coarse-graining, how to treat POVM elements that are not orthogonal with each other, what is an appropriate definition of volume of a macrostate, and how can we compare different coarse-grainigs.

The main message of this paper is that even with the general definition of observational entropy, which is defined by a series of possibly non-commuting, generalized measurements, all of the important properties still hold. Observational entropy can be therefore still interpreted as a measure of uncertainty that an observer performing a series of measurements would associate with the initial state of the system. The properties can be summarized as follows: observational entropy is lower bounded by the minimal uncertainty given by von Neumann entropy, upper bounded by the maximal entropy given by the logarithm of the system dimension. With each additional coarse-graining, observational entropy cannot increase, expressing that \emph{``each additional measurement can only increase the observers' knowledge of the state of the system.''} If one coarse-graining is finer than the other, then observational entropy is always lower for a finer coarse-graining, implying that \emph{``an observer that makes a more precise measurement will get to know at least as much as an observer that makes an imprecise one.''}

We applied this concept to study how indirect measurements perform in information extraction as compared to a direct measurement. Performing an analysis of a general scenario of indirect measurements we found, for example, that any pure state of the system can be perfectly indirectly determined by using a two-level auxiliary system---a result which a posteriori seems very natural~\cite{PELLONPAA2017}. To illustrate the application on a specific and timely example, we applied this concept to various von Neumann measurement schemes, in which a quantum system is probed through an auxiliary system consisting of a classical particle. Since observational entropy measures information extracted in different situations, it serves as a performance quantifier for the different measurement schemes. This not only provides new insights into the understanding of the various measurement schemes, but also determines which one is the best given a situation.

Moreover, computing observational entropy is a relatively simple, yet a powerful test on the performance of any sequence of measurements in information extraction. For example, the construction of a quantum computer, which is expected to solve complex problems beyond the capabilities of a classical computer, will require a quantum memory, and a mechanism that reads the output of the computation~\cite{ladd2010quantum,national2019quantum}. Computing observational entropy can determine which are the least invasive measurement schemes to read out both the quantum memory and the computation output.

Finally, we showed that observational entropy can serve as a tool for quantum state identification. We did that by showing that the knowledge of the coarse-graining that leads to the minimum of the observational entropy allows for a successful inference of the state of the system. We presented a general algorithm that achieves this goal to minimized observational entropy. Generalizing this connection to situations when the knowledge is not perfect, for example, for the cases in which the coarse-graining gives a low, but not the minimal value of observational entropy, provides a new exciting direction of study.

An application of this theory could also lie in the study of microscopic thermodynamic systems, and especially in the understanding quantum entropy production~\cite{Goold2016role,Vinjanampathy2016quantum,binder2019thermodynamics}. While the groundwork of using observational entropy for these purposes was already established in Refs.~\cite{strasberg2021first,riera2021quantum}, these works are limited to the use of projective measurements, which meant that the system and the bath had to be considered together as a whole. The framework of observational entropy could also help in answering critical questions of the impact of finite baths~\cite{Thingna17,marcantoni2017entropy,Thingna19,riera2021quantum,martins2021nonmarkovianity}, such as \emph{``how much information about the system can be extracted by measuring a bath,''} that requires generalized (non-projective) measurements on the bath.

Observational entropy has already been used to study black holes~\cite{schindler2021unitarity} and big bang~\cite{lautaro2021quantum}. In black holes, a common question concerns the amount of information lost or gained through evaporation~\cite{Almheiri2021}. Measuring the evaporated states that escape the black hole provides some information on its inner structure. However, performing a projective measurement on these states mathematically corresponds to performing a generalized measurement on the black hole instead. The framework developed here allows for computing exactly how much information on the black hole has been gained by doing this measurement.

There are also many potential applications of this quantity: it can be used in every scenario that involves information gain while making a quantum measurement. We hope that the future development of this framework, as well as finding applications by experts in their respective fields, will demonstrate its practicality beyond its current status.

\begin{acknowledgments}
D.\v S.~thanks Anthony Aguirre and various listeners of his talks for asking the right questions, which sparked this research. We thank Joseph C. Schindler for very interesting discussions which, among other things, led to the current definition of a finer vector of coarse-graining, for his careful reading of this manuscript including the proofs, and the feedback received. This research was supported by the Foundational Questions Institute (FQXi.org),  the Faggin Presidential Chair Fund, and the Institute for Basic Science in South Korea (IBS-R024-Y2
 and IBS-R024-D1).
\end{acknowledgments}

\appendix

\section{Historical overview}\label{app:History}
The history of observational entropy goes all the way back to John von Neumann. It first appeared in his paper~\cite{vonNeumann1929} in 1929, where he motivated the introduction of this entropy by criticizing the quantity which we know today as the von Neumann entropy. He said: \emph{``The expressions for [the von Neumann] entropy given by the author in~\cite{Neumann1927thermodynamik} are not applicable here in the way they were intended, as they were computed from the perspective of an observer who can carry out all measurements that are possible in principle—i.e., regardless of whether they are macroscopic (for example, there every pure state has entropy 0, only mixtures have entropies greater than 0!).''} He pointed out that von Neumann entropy cannot represent a good generalization of thermodynamic entropy, since any pure state, even those at high energies, would have associated zero entropy. Additionally, von Neumann entropy remains constant in an isolated quantum system, which would suggest that in such a system, left to spontaneous evolution, no information is lost. However, this is again for an observer that can perform \emph{all} measurements, even those that are in a very complicated/highly entangled basis. On the other hand, a realistic observer with limited capabilities or resources will always observe an increase in entropy (see also Ref.~\cite{sheridan2020man}).


While referring to a discussion with Eugene Wigner\footnote{Von Neumann mentions that this definition was E. Wigner's idea, while saying that there is no need to go into a general theory. We searched for a follow-up paper by E. Wigner, where this general theory would be discussed, but could not find any---it seems that his interests at the time took him elsewhere---to develop the theory of symmetries in quantum mechanics, and then apply it to derive essential properties of the nuclei, for which he received 1963 Nobel Prize in Physics~\cite{wigner1963nobel}.}, John von Neumann proposed an alternative quantity to rectify this unsatisfactory behavior,
\[\label{eq:vNOE}
S(\psi)=-\sum_E \bra{\psi}\P_E\ket{\psi}\ln \frac{\bra{\psi}\P_E\ket{\psi}}{V_E},
\]
which does not suffer from the same drawbacks. Above, $\P_E$ is a projector on an energy subspace (surface), and $V_E$ is the number of orbits (microstates) in an energy surface. Here, $\bra{\psi}\P_E\ket{\psi} = p_E$ is the probability of finding the system in an energy shell of energy $E$. This entropy measures the lack of information due to an observer's limited capability of distinguishing energy eigenstates within a small energy gap $\Delta E$. The above is a natural generalization of the Boltzmann entropy\footnote{Consider a point in phase-space that belongs to an energy surface $E$. For this phase-space point, the original Boltzmann definition (which we call microcanonical/surface entropy here) reads $S_{\mathrm{B}}=\ln V_E$. Also the von Neumann's definition [Eq.~\eqref{eq:vNOE}], if a quantum state belongs into an energy subspace $E$, will have $p_E=1$ and therefore it gives the same value $S(\psi)=\ln V_E$.} and has found several applications~\cite{percival1961almost,wehrl1978general,tolman1979principles,penrose1979foundations,zubarev1996statistical,Latora1999kolmogorov,nauenberg2004evolution,ohya2004quantum,gemmer2014entropy} with the first one being the quantum generalization of the celebrated Boltzmann's H-theorem~\cite{vonNeumann1929}.

The idea of von Neumann's alternative entropy [Eq.~\eqref{eq:vNOE}] was recently revived~\cite{safranek2019short,safranek2019long}, and generalized to include multiple non-commuting coarse-grainings. It was also named \emph{observational entropy}, due to the fact that it is an observer's ability to measure a certain macroscopic variable that determines the coarse-graining.

\section{Proofs of Theorems}\label{app:proofs}
The proofs of Theorems~\ref{thm:bounded_multiple}.~and \ref{thm:non-increase}.~are a modification of those done for projective measurements (Thms.~7.~and 8.~of~\cite{safranek2019long}), and the spirit of the proof is exactly the same. The proof of Theorem~\ref{thm:monotonic} is similar to the proof of Theorem~2.~in~\cite{safranek2019long}), but modified more significantly, because it uses a more general (and different-looking, although being equivalent on the special cases) Definition~\ref{def:finer_set_coarse_graining}.

All inequalities follow from the application of the well-known theorem:
\begin{theorem}(Jensen, see, e.g., Refs.~\cite{jensen1906fonctions,chandler1987introduction,needham1993visual})\label{thm:Jensen}
Let $f$ be a strictly concave function, $0\leq a_i \leq 1$, $\sum_i a_i=1$. Then for any $b_i\in\mathbb{R}$,
\[
f\big(\sum_i a_i b_i\big)\geq \sum_i a_i f(b_i).
\]
$f(\sum_i a_i b_i)= \sum_i a_i f(b_i)$ if and only if
$
(\forall i,j|a_i\neq 0, a_j\neq 0)(b_i=b_j).
$
\end{theorem}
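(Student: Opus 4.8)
The plan is to establish the inequality by induction on the number $n$ of terms, with the two-point case being exactly the defining property of concavity. First I would discard every index with $a_i=0$, since such terms contribute nothing to either side; after relabelling we may assume $a_i>0$ for all $i$, in which case the equality condition reduces to the single statement that all the $b_i$ (those being summed) coincide. The base case $n=1$ is trivial, and $n=2$ is precisely strict concavity: $f(a_1b_1+a_2b_2)\geq a_1f(b_1)+a_2f(b_2)$ whenever $a_1,a_2>0$ with $a_1+a_2=1$, with equality iff $b_1=b_2$.

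For the inductive step I assume the claim for $n-1$ points and let $a_1,\dots,a_n>0$ sum to $1$. Since $n\geq 2$ each $a_i<1$, so I would split off the last term and renormalise the rest: set $t=1-a_n\in(0,1)$ and $c_i=a_i/t$ for $i<n$, so that $\sum_{i<n}c_i=1$ and $\sum_i a_ib_i=a_nb_n+t\,\bar b'$ with $\bar b'=\sum_{i<n}c_ib_i$. Applying the two-point inequality gives $f\big(\sum_i a_ib_i\big)\geq a_nf(b_n)+t\,f(\bar b')$, and applying the inductive hypothesis to the $(n-1)$-point convex combination gives $f(\bar b')\geq\sum_{i<n}c_if(b_i)$. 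Substituting and using $t\,c_i=a_i$ yields $f\big(\sum_i a_ib_i\big)\geq\sum_i a_if(b_i)$, as required. (A small bookkeeping point: $\bar b'$ lies in the domain of $f$, being a convex combination of the $b_i$.)

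The equality characterisation then follows by tracing equality back up this chain. The ``if'' direction is immediate, since all $b_i=b$ makes both sides equal $f(b)$. For ``only if'', equality in the combined bound forces equality in both the two-point step and the inductive-hypothesis step. Equality in the two-point step, with $a_n,t\in(0,1)$ and $f$ strictly concave, forces $b_n=\bar b'$; equality in the inductive-hypothesis step forces, by the induction, $b_1=\cdots=b_{n-1}$, so each equals their common mean $\bar b'=b_n$. Hence all $b_i$ coincide, completing the induction.

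I expect no genuine obstacle here; the only place that needs a little care is stating the equality condition cleanly after the zero-weight reduction and applying the inductive equality statement in its ``all weights positive'' form. As an alternative route I would mention the supporting-line argument: a strictly concave $f$ admits at the interior point $\bar b=\sum_i a_ib_i$ an affine majorant $\ell$ with $f\leq\ell$ everywhere and $f(x)=\ell(x)$ only at $x=\bar b$, whence $\sum_i a_if(b_i)\leq\sum_i a_i\ell(b_i)=\ell(\bar b)=f(\bar b)$, with equality forcing $f(b_i)=\ell(b_i)$ and thus $b_i=\bar b$ for every $i$ with $a_i>0$. This is slicker for the equality part but relies on the (standard) existence of supporting lines for concave functions, whereas the inductive proof uses nothing beyond the definition of strict concavity.
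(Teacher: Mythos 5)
Your proof is correct. There is, however, no in-paper argument to compare it against: the paper invokes this statement in Appendix~\ref{app:proofs} simply as ``the well-known theorem'' and never proves it, so your write-up supplies a proof the authors chose to omit. Your inductive route is the standard elementary one, and the delicate points are handled properly: you discard the zero-weight indices before stating the equality condition (which is exactly what the quantifier $(\forall i,j\,|\,a_i\neq 0,a_j\neq 0)$ in the statement requires), you use strict concavity at the two-point step to force $b_n=\bar b'$, and you apply the inductive equality statement in its all-weights-positive form to force $b_1=\cdots=b_{n-1}$; since the combined bound is a chain of two same-direction inequalities with $t>0$, equality overall indeed forces equality in each, so the trace-back is sound. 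One caveat worth keeping in mind: the induction covers only finitely many terms, which matches the finite-dimensional, finite-outcome setting in which the paper's Theorems~\ref{thm:bounded_multiple}--\ref{thm:monotonic} are proved; if one wanted the countable or integral version (as is implicitly used when the measurement outcome $x$ is continuous in the von Neumann scheme of Sec.~\ref{sec:von_neumann}), the induction alone does not suffice, and your alternative supporting-line argument is the right tool there, since it delivers both the inequality and the equality characterisation for arbitrary mixtures in one step, at the modest cost of invoking the existence of a supporting line for a concave function at an interior point of its domain.
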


\subsection{Proof of Theorem~\ref{thm:bounded_multiple}}
\begin{proof}
In this subsection we are going to prove $S_{\mathrm{vN}}\leq S_{\bC}$ and $S_{\bC}\leq \ln \dim \HS$, each with its equality condition. The vector of coarse-graining 
\[
\bC=(\C_1,\dots,\C_n)=\{\PO_\bi\},
\]
and
\[
\PO_{\bi}(\R)=\sum_{\bm}\K_{\bi\bm}\R\K_{\bi\bm}^\dag,
\]
with the Krauss operators $\K_{\bi\bm}=\K_{i_n m_n}\cdots \K_{i_1 m_1}$ and $\K_{\bi\bm}^\dag=\K_{i_1 m_1}^\dag\cdots \K_{i_n m_n}^\dag$. The vector of outcomes is $\bi=(i_1,\dots,i_n)$, and $\bm=(m_1,\dots,m_n)$. We define the POVM element
\[\label{eq:Kraussdef}
\hPi_\bi=\sum_\bm\K_{\bi\bm}^\dag\K_{\bi\bm}.
\] 
From the definition of coarse-graining, we also have 
\[\label{eq:completeness2}
\sum_{\bi}\hPi_\bi=\I.
\]
In these equations, to keep the notation short, we write
\[\label{eq:denotingim}
\sum_{i_1,\dots,i_n}\sum_{m_1,\dots,m_n}\equiv \sum_{\bi}\sum_{\bm}\equiv\sum_{\bi,\bm}.
\]

We denote the spectral decomposition of the density matrix as $\R=\sum_x\rho_x\pro{x}{x}$, where $\ket{x}$ denotes the eigenvector of the density matrix and $\rho_x$ is the corresponding eigenvalue. The eigenvalues are not necessarily different for different $x$, thus making this decomposition not unique. We also denote the unique decomposition of the density matrix in terms of its eigenprojectors $\R=\sum_{\rho}\rho\P_{\rho}$, where eigenvalues $\rho$ are now different from each other. It follows that for each $x$ there is $\rho$ such that $\rho_x=\rho$. 

Now we prove $S_{\mathrm{vN}}\leq S_{\bC}$ together with its equality condition. We begin by defining
\[
a_x^{(\bi)}\equiv\frac{\bra{x}\hPi_\bi\ket{x}}{V_\bi}
\]
for $V_\bi\neq 0$ and $a_x^{(\bi)}\equiv 0$ for $V_\bi= 0$. Using the spectral decomposition of $\R$ we have
\[\label{eq:p_over_V}
\begin{split}
\frac{p_\bi}{V_\bi}&=\frac{\sum_x \rho_x \bra{x}\hPi_\bi\ket{x}}{V_\bi}\\
&=\sum_x \rho_x a_x^{(\bi)}.
\end{split}
\]
The cyclicity of the trace dictates $V_\bi=\tr[\hPi_\bi]=\sum_x\bra{x}\hPi_\bi\ket{x}$, from which follows
\[\label{eq:sum_ax}
\sum_xa_x^{(\bi)}=1.
\]
Applying the completeness relation~\eqref{eq:completeness2}, we also have
\[\label{sum_V_ax}
\sum_\bi V_\bi a_x^{(\bi)}=\sum_{\bi}\bra{x}\hPi_\bi\ket{x}=\braket{x}{x}=1.
\]

A series of identities and inequalities follow
\[
\begin{split}
S_{\bC}&=-\sum_{\bi}p_\bi\ln \frac{p_\bi}{V_{\bi}}\\
&=-\sum_{\bi}V_{\bi}\frac{p_\bi}{V_{\bi}}\ln \frac{p_\bi}{V_{\bi}}\\
&=\sum_{\bi}V_{\bi}\left(-\sum_x \rho_x a_x^{(\bi)}\ln \sum_x \rho_x a_x^{(\bi)}\right)\\
&\geq \sum_{\bi}V_{\bi} \left(-\sum_x a_x^{(\bi)} \rho_x\ln \rho_x\right)\\
&=-\sum_x\left( \sum_{\bi}V_{\bi} a_x^{(\bi)}\right) \rho_x\ln \rho_x=S_{\mathrm{vN}}.
\end{split}
\]
The third identity follows from Eq.~\eqref{eq:p_over_V}, and the last identity follows from Eq.~\eqref{sum_V_ax}. We have used the Jensen's Theorem~\ref{thm:Jensen} on the strictly concave function $f(x)=-x\ln x$ in order to obtain the inequality. We have chosen $a_x\equiv a_x^{(\bi)}$ and $b_x=\rho_x$ for the Theorem. This is a valid choice due to $0\leq a_x^{(\bi)}\leq 1$ and due to Eq.~\eqref{eq:sum_ax}. This proves the inequality $S_{\mathrm{vN}}\leq S_{\bC}$.

According to Jensen's Theorem, this inequality becomes identity if and only if
\begin{eqnarray}
    \label{eq:first_eq_condition_old_2}
    \rho_x=\rho_{\tilde{x}} \quad (&&\forall \bi ~\& \\ && \forall x, \tilde{x}|\bra{x}\hPi_\bi\ket{x}\!\neq\! 0,\bra{\tilde{x}}\hPi_\bi\ket{\tilde{x}}\!\neq\! 0). \nonumber 
\end{eqnarray}
Having $\bra{x}\hPi_{\bi}\ket{x}=0$ is equivalent to $\hPi_{\bi}\pro{x}{x}=0$. In order to show this, we begin by considering $\bra{x}\hPi_{\bi}\ket{x}=0$ and inserting Eq.~\eqref{eq:Kraussdef} to obtain,
\begin{eqnarray}
\label{eq:seriesimpli}
&&\bra{x}\sum_\bm\K_{\bi\bm}^\dag\K_{\bi\bm}\ket{x}=0\nonumber \\
&\Rightarrow& \sum_\bm\norm{\K_{\bi\bm}\ket{x}}^2=0\\
&\Rightarrow& \K_{\bi\bm}\ket{x}=0 \quad \quad \quad\, \quad \quad\quad\quad\quad \quad \quad \quad \quad (\forall \bm)\nonumber \\
&\Rightarrow& \K_{\bi\bm}^\dag\K_{\bi\bm}\pro{x}{x}=0 \quad\quad \quad \quad \quad \quad \quad\quad \quad  (\forall \bm)\nonumber \\
&\Rightarrow& \sum_\bm\K_{\bi\bm}^\dag\K_{\bi\bm}\pro{x}{x}=\hPi_\bi\pro{x}{x}=0.\nonumber 
\end{eqnarray}
Moreover, trivially, $\hPi_\bi\pro{x}{x}=0$ implies $\bra{x}\hPi_\bi\ket{x}=0$. This means that $\bra{x}\hPi_{\bi}\ket{x}=0\Leftrightarrow\hPi_{\bi}\pro{x}{x}=0$ and also 
$\bra{x}\hPi_{\bi}\ket{x}\neq 0\Leftrightarrow\hPi_{\bi}\pro{x}{x}\neq 0$. Using this equivalence, we rewrite the condition for the inequality to be the identity, Eq.~\eqref{eq:first_eq_condition_old_2}, as
\begin{eqnarray}
    \label{eq:first_eq_condition_old_3}
    \rho_x=\rho_{\tilde{x}} \quad (&&\forall \bi ~\& \\ &&\forall x, \tilde{x}|\,\hPi_\bi\pro{x}{x}\!\neq\! 0,\hPi_\bi\pro{\tilde{x}}{\tilde{x}}\!\neq\! 0). \nonumber 
\end{eqnarray}

We explain this condition as follows: the inequality ($S_{\mathrm{vN}}\leq S_{\bC}$) becomes identity ($S_{\mathrm{vN}} = S_{\bC}$) when for a fixed multi-index $\bi$, all eigenvectors of the density matrix $\ket{x}$ such that $\hPi_{\bi}\pro{x}{x}\neq 0$ must have the same associated eigenvalue $\rho_x$. In other words, this unique eigenvalue can be associated to the multi-index $\bi$ itself, so we can relabel $\rho_\bi \equiv\rho_x$, where the eigenvalue $\rho_x$ is given by any representative $x$ such that $\hPi_{\bi}\pro{x}{x}\neq 0$. This must hold for every multi-index $\bi$, in order for the inequality to become an identity. Therefore, this defines a unique map that associates some eigenvalue of the density matrix to each multi-index $\bi$. To highlight the existence of this map we can extend Eq.~\eqref{eq:first_eq_condition_old_3} and write
\begin{eqnarray}
    \label{eq:first_eq_condition_b}
    \rho_x=\rho_{\tilde{x}}\equiv\rho_\bi \quad (&&\forall \bi ~\& \\ 
    &&\forall x, \tilde{x}|\,\hPi_\bi\pro{x}{x}\!\neq\! 0,\hPi_\bi\pro{\tilde{x}}{\tilde{x}}\!\neq\! 0). \nonumber
\end{eqnarray}
Then, defining a set
\[
I^{(\bi)}=\{x|\rho_x=\rho_\bi\},
\]
using condition~\eqref{eq:first_eq_condition_b}, and $\sum_x\ket{x}\bra{x}=\I$, we obtain
\[\label{eq:nonzerooni}
\hPi_\bi=\hPi_\bi\!\sum_x\!\pro{x}{x}=\hPi_\bi\!\sum_{x\in I^{(\bi)}}\!\!\pro{x}{x}=\hPi_\bi\P_{\rho_\bi}.
\]
Assuming that $\rho\neq \rho_\bi$, we can multiply this equation by $\P_\rho$ from the right, and from the orthogonality of projectors we obtain
\[
\hPi_\bi\P_\rho=0.
\]
In other words, this means that for every $\bi'$ such that $\rho_{\bi'}\neq\rho_{\bi}$, 
\[\label{eq:zerooni}
\hPi_{\bi'}\P_{\rho_\bi}=0.
\]
Finally, for any eigenvalue $\rho$ we define an index set
\[
I^{(\rho)}=\{\bi|\rho_\bi=\rho\}.
\]
Using the completeness relation $\sum_\bi\hPi_\bi=\I$, and combining Eqs.~\eqref{eq:nonzerooni} and \eqref{eq:zerooni} gives
\[\label{eq:finalfinercg}
\P_\rho=\sum_\bi\hPi_\bi\P_\rho=\sum_{\bi\in I^{(\rho)}}\hPi_\bi\P_\rho=\sum_{\bi\in I^{(\rho)}}\hPi_\bi,
\]
which by definition means that $\C_\R\hookrightarrow\bC$. 

Conversely, for a contradiction we assume Eq.~\eqref{eq:finalfinercg} holds, but Eq.~\eqref{eq:first_eq_condition_old_2} does not, which would mean that there are $x$ and $\tilde x$ such that $\bra{x}\hPi_\bi\ket{x}\!\neq\! 0$ and $\bra{\tilde{x}}\hPi_\bi\ket{\tilde{x}}\!\neq\! 0$ while $\rho_x\neq \rho_{\tilde{x}}$. We assume arbitrary $\bi$ and if $\bi\in\I^{(\rho)}$ where $\rho\neq \rho_x$ then multiplying Eq.~\eqref{eq:finalfinercg} by $\P_{\rho_x}$ gives
\[
0=\bra{x}\P_\rho\P_{\rho_x}\ket{x}=\bra{x}\P_\rho\ket{x}=\sum_{\bi\in \I^{(\rho)}}\bra{x}\hPi_\bi\ket{x},
\]
which implies that for every $\bi\in \I^{(\rho)}$, $\bra{x}\hPi_\bi\ket{x}=0$, due to positivity of operators $\hPi_\bi$. Thus, if $\ket{x}$ and $\ket{\tilde x}$ are associated with different eigenvalues, at least one of the $\bra{x}\hPi_\bi\ket{x}$ or $\bra{\tilde x}\hPi_\bi\ket{\tilde x}$ must be zero. This is a contradiction with our assumption. Thus, Eq.~\eqref{eq:finalfinercg} implies \eqref{eq:first_eq_condition_old_2}, which is equivalent to the equality condition $S_{\mathrm{vN}}= S_{\bC}$. We have therefore shown that $S_{\mathrm{vN}}= S_{\bC}$ if and only if $\C_\R\hookrightarrow\bC$, which concludes the first part of the proof.

Next, we prove $S_{\bC}\leq \ln \dim \HS$ together with its equality condition.
\[
\begin{split}
&S_{\bC}=\sum_{\bi:p_\bi\neq 0}p_\bi\ln \frac{V_{\bi}}{p_\bi}\leq \ln\left(\sum_{\bi:p_\bi\neq 0}p_\bi\frac{V_{\bi}}{p_\bi}\right)\\
&\leq \ln\left(\sum_{\bi}V_{\bi}\right)=\ln \tr{\I}=\ln \dim\; \!\HS.
\end{split}
\]
In here, the first inequality follows from the Jensen's Theorem, which was applied on strictly concave function $f(x)=\ln x$, while we have chosen $a_\bi\equiv p_\bi$ and $b_\bi\equiv V_{\bi}/p_\bi$ for the Theorem. This is a valid choice, because $0\leq a_\bi\leq 1$ and $\sum_\bi a_\bi=1$ hold. The second inequality follows from $V_i\geq 0$ while realizing that the logarithm is an increasing function. The second identity follows from the completeness relation $\sum_{\bi,\bm} \K_{\bi\bm}^\dag\K_{\bi\bm}=\I$ and from the definition of $V_\bi$.

The first inequality becomes equality if and only if
\[
\frac{V_\bi}{p_\bi}=\frac{V_{\boldsymbol{\bi'}}}{p_{\boldsymbol{\bi'}}}=c \quad\quad  (\forall \bi, \bi'| p_\bi\neq 0, p_{\boldsymbol{\bi'}}\neq 0)
\]
for some real constant $c$. In order to determine the value of this constant, we express this condition as $V_\bi=c p_\bi$ and then we sum over all multi-indexes $\bi$ for which $p_\bi\neq 0$. This leads to $c=\sum_{\bi:p_\bi\neq 0}V_\bi$. Therefore, we can write the first equality condition as
\[
p_\bi=\frac{V_\bi}{\sum_{\bi:p_\bi\neq 0}V_\bi} \quad\quad \quad (\forall p_\bi\neq 0).
\]
Since logarithm is a strictly increasing function, the second inequality becomes identity if and only if $p_\bi=0$ and $V_\bi=0$, for every multi-index $\bi$. If this is true, then also  $\sum_{\bi:p_\bi\neq 0}V_\bi=\sum_{\bi}V_\bi=\dim \HS$ holds, where we have used  the definition of $V_\bi$ and the completeness relation $\sum_{\bi,\bm} \K_{\bi\bm}^\dag\K_{\bi\bm}=\I$. When we combine both of these equality conditions, we obtain that $S_{\bC}=\ln \dim\; \!\HS$ if and only if
\[
p_\bi=\frac{V_\bi}{\dim \HS} \quad \quad \quad \quad \quad \quad \quad (\forall p_\bi).
\]
This completes the proof.
\end{proof}

\subsection{Proof of Theorem~\ref{thm:non-increase}}\label{sec:proof2}
\begin{proof}
To make our notation easy to read, we denote $p_{i_1,\dots,i_n,i_{n+1}}\equiv p_{\bi,i_{n+1}}$, and $V_{i_1,\dots,i_n,i_{n+1}}\equiv V_{\bi,i_{n+1}}$, while we also use the same notation that we used in the previous proof. Using
\[
p_{\bi}=\sum_{i_{n+1}}p_{\bi,i_{n+1}},\quad
V_{\bi}=\sum_{i_{n+1}}V_{\bi,i_{n+1}},
\]
in combination with Jensen's Theorem~\ref{thm:Jensen} gives
\begin{eqnarray}
S_{\bC}&=&-\sum_{\bi}p_\bi\ln \frac{p_\bi}{V_{\bi}}\nonumber\\
&=&-\sum_{\bi}\sum_{i_{n+1}}p_{\bi,i_{n+1}}\ln \frac{\sum_{i_{n+1}}p_{\bi,i_{n+1}}}{V_{\bi}}\nonumber\\
&=&-\sum_{\bi}V_{\bi}\bigg(\sum_{i_{n+1}}\frac{p_{\bi,i_{n+1}}}{V_{\bi,i_{n+1}}}\frac{V_{\bi,i_{n+1}}}{V_{\bi}}\bigg)\nonumber\\
&&\times\ln \bigg(\sum_{i_{n+1}}\frac{p_{\bi,i_{n+1}}}{V_{\bi,i_{n+1}}}\frac{V_{\bi,i_{n+1}}}{V_{\bi}}\bigg)\nonumber\\
&\geq& \sum_{\bi}V_{\bi}\sum_{i_{n+1}}\frac{V_{\bi,i_{n+1}}}{V_{\bi}}\bigg(-\frac{p_{\bi,i_{n+1}}}{V_{\bi,i_{n+1}}}\ln \frac{p_{\bi,i_{n+1}}}{V_{\bi,i_{n+1}}}\bigg)\nonumber\\
&=&-\sum_{\bi,i_{n+1}}p_{\bi,i_{n+1}}\ln \frac{p_{\bi,i_{n+1}}}{V_{\bi,i_{n+1}}}\nonumber\\
&=&S_{\bC,\C_{n+1}}.
\end{eqnarray}
Above, we have used $f(x)=-x\ln x$, $a_{i_{n+1}}=V_{\bi,i_{n+1}}/V_{\bi}$ and $b_{i_{n+1}}=p_{\bi,i_{n+1}}/V_{\bi,i_{n+1}}$ to obtain the inequality using Jensen's Theorem~\ref{thm:Jensen}.

The condition for Jensen's inequality to become an identity is
\begin{eqnarray}
\label{eq:cond}
&&\frac{p_{\bi,i_{n+1}}}{V_{\bi,i_{n+1}}}=\frac{p_{\bi,i_{n+1}'}}{V_{\bi,i_{n+1}'}}=c^{(\bi)} \\ &&\quad \quad (\forall \bi ~\& ~
\forall i_{n+1},{i}_{n+1}'|V_{\bi,i_{n+1}}\neq 0, V_{\bi,{i}_{n+1}}'\neq 0), \nonumber
\end{eqnarray}
where $c^{(\bi)}$ is some $\bi$-dependent constant, which we can compute using $\sum_{i_{n+1}}p_{\bi,i_{n+1}}=c^{(\bi)}\sum_{i_{n+1}}V_{\bi,i_{n+1}}$, obtaining $c^{(\bi)}=p_{\bi}/V_{\bi}$. This enables us to rewrite Eq.~\eqref{eq:cond} as
\[
p_{\bi,i_{n+1}}=\frac{V_{\bi,i_{n+1}}}{V_{\bi}}p_{\bi} \quad (\forall \bi ~\& ~\forall i_{n+1}|V_{\bi,i_{n+1}}\neq 0).
\]
For every $V_{\bi,i_{n+1}}= 0$ we also have $p_{\bi,i_{n+1}}=0$, from which we trivially obtain $p_{\bi,i_{n+1}}=p_{\bi}V_{\bi,i_{n+1}}/V_{\bi}$. Therefore, this condition can be simplified further, which gives the final result that $S_{\bC}=S_{\bC,\C_{n+1}}$ if and only if
\[\label{eq:condThm2}
p_{\bi,i_{n+1}}=\frac{V_{\bi,i_{n+1}}}{V_{\bi}}p_{\bi} \quad \quad \quad \quad (\forall \bi ~\& ~\forall i_{n+1}).
\]
This completes the proof.

We make two interesting remarks about this condition: Assuming that $p_{\bi}\neq 0$, we can rewrite the above condition as $p(i_{n+1}|\bi)\equiv p_{\bi,i_{n+1}}/p_{\bi}=V_{\bi,i_{n+1}}/V_{\bi}$. This shows that the entropy will not decrease with additional coarse-graining $\C_{n+1}$ if the conditional probability of the outcome $i_{n+1}$ is proportional to the ratio of the macrostate volumes.

As an example, this equality condition is satisfied when the vector of coarse-grainings $\bC=(\C_1,\dots,\C_n)$ projects onto a pure state. In other words, the equality is satisfied when for every density matrix and every vector of outcomes $\bi$ we can write $\pro{\psi_{\bi}}{\psi_{\bi}}=\PO_\bi(\R)/p_{\bi}$ (where it is important to note that the left-hand side does not depend on the density matrix $\R$, even though the right-hand side does). Since this must hold for any density matrix, it also holds for the maximally mixed state $\R_{\mathrm{id}}=\I/\dim \HS$, which in turn yields $\pro{\psi_{\bi}}{\psi_{\bi}}=\PO_\bi(\I)/V_{\bi}$. Therefore we can write
\[
\begin{split}
p_{\bi,i_{n+1}}&=\tr[\PO_{i_{n+1}}(\pro{\psi_{\bi}}{\psi_{\bi}})]p_{\bi}\\
&=\tr[\PO_{i_{n+1}}\left(\frac{\PO_\bi(\I)}{V_{\bi}}\right)]p_{\bi}=\frac{V_{\bi,i_{n+1}}}{V_{\bi}}p_{\bi},
\end{split}
\]
meaning that the condition~\eqref{eq:condThm2} is satisfied, and thus $S_{\bC}=S_{\bC,\C_{n+1}}$.

As a second example, the equality condition is also satisfied is when $\bC^\dag\hookleftarrow \C^{\dag}$, where $\bC^\dag\equiv(\C_n^\dag,\dots,\C_1^\dag)$, $\C_k^\dag\equiv\{\PO_{i_k}^\dag\}$, and $\PO_{i_k}^\dag=\sum_{m_k}\K_{i_km_k}^\dag(\bullet)\K_{i_km_k}$. To simplify our notation we will use $\PO_\bi^\dag=\PO_{i_1}^\dag\cdots\PO_{i_n}^\dag$. This means that for every multi-index $\bi$ there is index $i_{n+1}^{(\bi)}$ such that
$\PO_\bi^\dag\PO_{i_{n+1}^{(\bi)}}^\dag(\I)=\PO_\bi(\I)$ holds, and for every other index ${i_{n+1}}\neq i_{n+1}^{(\bi)}$, $\PO_\bi\PO_{i_{n+1}^{(\bi)}}^\dag(\I)=0$ holds. For ${i_{n+1}}= i_{n+1}^{(\bi)}$  we obtain a series of identities
\begin{eqnarray}
p_{\bi,i_{n+1}^{(\bi)}}&=&\tr[\PO_{i_{n+1}^{(\bi)}}\PO_\bi(\R)]\nonumber\\
&=&\tr[\sum_{\bm,m_{n+1}}\K_{i_{n+1}m_{n+1}}\K_{\bi\bm}\R\K_{\bi\bm}^\dag\K_{i_{n+1}m_{n+1}}^\dag]\nonumber\\
&=&\tr[\sum_{\bm,m_{n+1}}\K_{\bi\bm}^\dag\K_{i_{n+1}m_{n+1}}^\dag\K_{i_{n+1}m_{n+1}}\K_{\bi\bm}\R]\nonumber\\
&=&\tr[\PO_\bi^\dag\PO_{i_{n+1}}^\dag(\I)\R]\nonumber\\
&=&\tr[\PO_\bi^\dag(\I)\R]\nonumber\\
&=&\tr[\hPi_\bi\R]\nonumber\\
&=&\tr[\PO_\bi(\R)]=\frac{V_{\bi,i_{n+1}^{(\bi)}}}{V_{\bi}}p_{\bi},
\end{eqnarray}
where we have used that $V_{\bi,i_{n+1}^{(\bi)}}/V_{\bi}=1$. Further, when ${i_{n+1}}\neq i_{n+1}^{(\bi)}$, then $V_{\bi,i_{n+1}}/V_{\bi}=0$ and thus
\[
\begin{split}
p_{\bi, i_{n+1}}&=\tr[\PO_{i_{n+1}^{(\bi)}}\PO_\bi(\R)]\\
&=0=\frac{V_{\bi, i_{n+1}}}{V_{\bi}}p_{\bi}.
\end{split}
\]
Combining the last two equations, we get
\[
p_{\bi, i_{n+1}}=\frac{V_{\bi, i_{n+1}}}{V_{\bi}}p_{\bi}
\]
for all $i_{n+1}$, proving that also in this case, $S_{\bC}=S_{\bC,\C_{n+1}}$.
\end{proof}

\subsection{Proof of Theorem~\ref{thm:monotonic}}
\begin{proof}
Let $\{\PO_\bi\}=\bC \hookleftarrow \widetilde{\bC}=\{\PO_\bj\}$. Then by definition, for every multi-index $\bj$ there exists an index set $I^{(\bj)}$ such that
\[
\hPi_\bj=\sum_{\bi\in I^{(\bj)}}\hPi_\bi,
\]
where $\hPi_\bi=\sum_\bm\K_{\bi\bm}^\dag\K_{\bi\bm}$ and $\hPi_\bj=\sum_\bm\K_{\bj\bm}^\dag\K_{\bj\bm}$. Thus, we have
\begin{eqnarray}
p_\bj&=&\tr[\PO_\bj(\R)]=\tr[\hPi_\bj\R] \nonumber \\
&=&\!\!\!\sum_{\bi\in I^{(\bj)}}\!\!\tr[\hPi_\bi\R]=\!\!\sum_{\bi\in I^{(\bj)}}\!p_\bi,
\end{eqnarray}
and similarly
\[
V_{\bj}=\sum_{\bi\in I^{(\bj)}}V_\bi.
\]
The inequality then immediately follows as,
\begin{eqnarray}
S_{\widetilde{\bC}}&=&-\sum_{\bj}p_\bj\ln \frac{p_\bj}{V_{\bj}}\nonumber\\
&=&-\sum_{\bj}\sum_{\bi\in I^{(\bj)}}p_{\bi}\ln \frac{\sum_{\bi\in I^{(\bj)}}p_{\bi}}{V_{\bj}}\nonumber \\
&=&-\sum_{\bj}V_{\bj}\bigg(\sum_{\bi\in I^{(\bj)}}\frac{p_{\bi}}{V_{\bi}}\frac{V_{\bi}}{V_{\bj}}\bigg)\ln \bigg(\sum_{\bi\in I^{(\bj)}}\frac{p_{\bi}}{V_{\bi}}\frac{V_{\bi}}{V_{\bj}}\bigg)\nonumber\\
&\geq& \sum_{\bj}V_{\bj}\sum_{\bi\in I^{(\bj)}}\frac{V_{\bi}}{V_{\bj}}\bigg(-\frac{p_{\bi}}{V_{\bi}}\ln \frac{p_{\bi}}{V_{\bi}}\bigg)\nonumber\\
&=&-\sum_{\bi}p_{\bi}\ln \frac{p_{\bi}}{V_{\bi}}=S_{\bC},
\end{eqnarray}
where we have chosen a strictly concave function $f(x)=-x\ln x$, $a_{\bi}=V_{\bi}/V_{\bj}$ and $b_{\bi}=p_{\bi}/V_{\bi}$ for $\bi\in I^{(\bj)}$ for the Jensen's Theorem~\ref{thm:Jensen}.

The equality conditions from the Jensen's inequality show that $S_{\widetilde{\bC}}=S_{\bC}$ if and only if
\[
\frac{p_{\bi}}{V_{\bi}}=\frac{p_{\bi'}}{V_{\bi'}}=c^{(\bj)} \quad \quad (\forall \bj ~ \& ~\forall \bi,\bi'\in I^{(\bj)}).
\]
To determine the constant $c^{(\bj)}$ we multiply the equation by $V_\bi$ and sum over all $\forall \bi\in I^{(\bj)}$, which gives $c^{(\bj)}=p_{\bj}/V_{\bj}$.
Therefore, $S_{\widetilde{\bC}}=S_{\bC}$ if and only if
\[
p_{\bi}=\frac{V_\bi}{V_{\bj}}p_{\bj}\quad\quad \quad \quad (\forall \bj ~\&~\forall \bi\in I^{(\bj)}).
\]
\end{proof}

\section{Equivalence of definitions of finer vector of coarse-grainings}\label{app:equivalence_definitions}

In~\cite{safranek2019long} we defined a finer set of coarse-graining (of projective measurements) in the following way:

\begin{definition}\label{def:finer_set_coarse_graining_old}~(Finer vector of coarse-grainings -- old definition): A vector of coarse-grainings $\bC=(\C_1,\dots,\C_n)$ is finer than coarse-graining $\C\equiv \widetilde\bC$ (and denote $\bC\hookleftarrow \C$), when for every multi-index $\bi=(i_1,\dots,i_n)$ there exists $\P_j\in\C$ such that
\[\label{eq:finer_set_condition_old}
\P_{i_n}\cdots\P_{i_1}\P_j=\P_{i_n}\cdots\P_{i_1},
\]
where $\P_{i_k}\in \C_k$, $k=1,\dots,n$.
\end{definition}

We will prove that the more general Definition~\ref{def:finer_set_coarse_graining} coincides with this older definition in the limit of projective measurements. In other words, for a class of coarse-grainings $\bC=\{\PO_\bi\}$, $\widetilde{\bC}=\{\PO_j\}$ where
\[
\PO_\bi=\P_{i_n}\cdots\P_{i_1}(\bullet)\P_{i_1}\cdots\P_{i_n}
\]
and
\[
\PO_j=\P_j(\bullet)\P_j,
\]
$\bC\hookleftarrow\widetilde{\bC}$ according to the new definition if and only if $\bC\hookleftarrow\widetilde{\bC}$ according to the old definition.
\begin{proof} We begin by assuming that $\bC\hookleftarrow\widetilde{\bC}$ according to the new definition. This means that for all $j$ there exists an index set $I^{(j)}$ such that
\[
\hPi_j=\sum_{\bi\in I^{(j)}}\hPi_\bi,
\]
where $\hPi_j=\P_j^\dag\P_j=\P_j$ and $\hPi_\bi=\P_\bi^\dag\P_\bi=\P_{i_1}\cdots\P_{i_n}\cdots \P_{i_1}$. Thus, we can rewrite the identity as
\[
\P_j=\sum_{\bi\in I^{(j)}}\P_{i_1}\cdots\P_{i_n}\cdots \P_{i_1}.
\]
Different $\P_j$'s are orthogonal to each other, therefore
\[
\P_{\tilde j}\P_j\P_{\tilde j}=\sum_{\bi\in I^{(j)}}\P_{\tilde j}\P_{i_1}\cdots\P_{i_n}\cdots \P_{i_1}\P_{\tilde j} = 0
\]
for any $\tilde j\neq j$. Applying an arbitrary vector $\ket{\psi}$ on both sides, we get
\[
\sum_{\bi\in I^{(j)}}\norm{\P_{i_n}\cdots \P_{i_1}\P_{\tilde j}\ket{\psi}}^2=0.
\]
Thus, for all $\tilde j\neq j$ and for all $\bi\in I^{(j)}$,
\[
\norm{\P_{i_n}\cdots \P_{i_1}\P_{\tilde j}\ket{\psi}}^2=0.
\]
This holds for any $\ket{\psi}$, therefore also 
\[\label{eq:projectionofP}
\P_{i_n}\cdots \P_{i_1}\P_{\tilde j}=0.
\]
We then pick arbitrary $\bi$ that belongs to some of the index sets $I^{j}$. We, therefore, associate $\P_j$ to this $\bi$ and have
\[
\begin{split}
\P_{i_n}\cdots\P_{i_1}\P_{j}&=\P_{i_n}\cdots\P_{i_1}\big(\P_{j}+\sum_{\tilde j\neq j}\P_{\tilde j}\big)\\
&=\P_{i_n}\cdots\P_{i_1}\I=\P_{i_n}\cdots\P_{i_1},    
\end{split}
\]
where we have used Eq.~\eqref{eq:projectionofP}. This means that $\bC\hookleftarrow\widetilde{\bC}$ according to the old definition. 

Now we prove the opposite implication, that a coarse-graining which satisfies the old definition also satisfies the new definition.
Given $j$, we define $\I^{(j)}$ as the set of all $\bi$ which correspond to $j$ (according to the old definition).
When multiplying Eq.~\eqref{eq:finer_set_condition_old} by $\P_{\tilde j}$, $\tilde j\neq j$, from orthogonality of projectors we obtain
\[
\P_{i_n}\cdots\P_{i_1}\P_{\tilde j}=0
\]
for all $\bi\in \I^{(j)}$, which also implies that 
\[\label{eq:zeroPj}
\P_{i_n}\cdots\P_{i_1}\P_{j}=0
\]
for all $\bi\notin \I^{(j)}$.
Then we have identities
\[
\begin{split}
\hPi_j&=\P_j=\sum_{\bi}\P_{i_1}\cdots\P_{i_n}\cdots \P_{i_1}\P_j\\
&=\sum_{\bi\in I^{(j)}}\P_{i_1}\cdots\P_{i_n}\cdots \P_{i_1}\P_j\\
&=\sum_{\bi\in I^{(j)}}\P_{i_1}\cdots\P_{i_n}\cdots \P_{i_1}=\sum_{\bi\in I^{(j)}}\hPi_\bi. 
\end{split}
\]
where we have used Eqs.~\eqref{eq:zeroPj} and \eqref{eq:finer_set_condition_old}. This implies that $\bC\hookleftarrow\widetilde{\bC}$ according to the new definition, concluding the proof.
\end{proof}

\section{Non-trivialities in generalizing observational entropy to general measurements and key differences}\label{app:nontrivialities}

In this appendix, we point out several non-trivialities encountered when generalizing observational entropy to generalized measurements (POVMs). This is to illustrate the necessary shift in our understanding of this quantity, as well as to motivate using this general definition, which is in many ways elegant and superior to the projective measurement scenario. Below we describe the main generalizations that have made the extension to generalized measurements possible.
\begin{enumerate}
    \item \emph{Volume generalization.--} Original definition of observational entropy relies on the notion of macrostates which are defined as subspaces. The volume of a macrostate is defined as a dimension of the corresponding subspace. This however breaks down when considering multiple, non-commutative projective coarse-grainings. This is because the overlap of two subspaces does not necessarily form a subspace. From an operator perspective, this is connected to the fact that a product of two non-commuting projectors is not a projector. However, the product of two non-commuting projector measurements represents Kraus Rank-1 generalized measurement, which is a form of POVM. Thus, the vector of projective coarse-grainings can be represented by a single POVM, with vector-labeled elements. Also, a vector of POVMs can be represented by a single POVM (see the last line of Table~\ref{tab:typesofmeasurements}). Thus, the composition of projective measurements is not a closed operation (it pushes the definition of coarse-graining outside what is defined as a projective coarse-graining), while the composition of POVMs is a closed operation.

    While for projective measurements, coarse-graining can be represented by a set of operators. For a general POVM, this is no longer possible, and one has to define it as a set of superoperators to include every possible case. When generalizing observational entropy to POVMs, it is clear how to generalize the probabilities of outcomes $p_\bi$. It is apriori not clear how to generalize the corresponding volumes and why one should choose $V_\bi=\tr[\PO_\bi(\I)]$. This is because this quantity is no longer related to any subspace, and therefore $V_\bi$ does not represent a number of microstates contained in that subspace. There can be several other unwanted definitions of the volume. For example, defined by the rank of the corresponding POVM (that would not add up to the total dimension of the system), or the product of local volumes in case of multiple coarse-grainings (which would not lead to the desired properties of observational entropy). However, our choice leads to the theorems to hold and since it is connected to the multiple coarse-graining POVM element instead of the product of single coarse-graning elements, one can intuitively understand why this is a reasonable choice.

    \item \emph{Finer coarse-graining generalization.--} For a projective coarse-graining, there are several definitions of a vector of finer coarse-graining possible~\cite{safranek2019long}. These definitions, while being equivalent for projective coarse-grainings, are however not equivalent when generalized to the POVMs. The particular choice for the Definition~\ref{def:finer_set_coarse_graining} of a finer vector of coarse-grainings is justified by showing that Theorem~\eqref{thm:monotonic} holds for this definition. See Append.~\ref{app:equivalence_definitions} that discusses the equivalence of the definitions for projective coarse-grainings.

    \item \emph{Superiority of Theorem~\ref{thm:monotonic}.--} In the case of projective coarse-grainings, Theorem~\ref{thm:non-increase}  and Theorem~\ref{thm:monotonic} are equivalent statements because one can be derived from the other, when properly rephrased. However, in the case of POVMs, Theorem~\ref{thm:monotonic} is more general, because Theorem~\ref{thm:non-increase} can be derived from it, but not the other way around.
 
\end{enumerate}

\section{Proof of the algorithm for the quantum state inference}\label{app:quantum_state_inference}

Just like in Definition~\ref{def:finer_set_coarse_graining} and elsewhere, we assume that the set of coarse-grainings $\bC=(\C_1,\dots,\C_n)=\{\PO_\bi\}$, where the quantum operations $\PO_{\bi}=\sum_{\bm}\K_{\bi\bm}(\bullet)\K_{\bi\bm}^\dag$ with $\K_{\bi\bm}=\K_{i_nm_n}\cdots\K_{i_1m_1}$, $\bi=(i_1,\dots,i_n)$, and $\bm=(m_1,\dots,m_n)$. Moreover, a POVM element is defined as $\hPi_\bi=\sum_\bm\K_{\bi\bm}^\dag\K_{\bi\bm}$.

According to the Definition~\ref{def:finer_set_coarse_graining}, $\bC\hookleftarrow \C_{\R}=\{\P_\rho(\bullet)\P_\rho\}$ if and only if we can build each eigenprojector of the density matrix using POVM elements from the vector of coarse-graining $\bC$, i.e., if for each eigenvalue $\rho$ there exists an index set $I^{(\rho)}$ such that
\[
\P_\rho=\sum_{\bi\in I^{(\rho)}}\hPi_{\bi}=\sum_{\bi\in I^{(\rho)},\bm}\K_{\bi\bm}^\dag\K_{\bi\bm}.
\]
For the following proofs needed for the algorithm, we need to identify how to group the POVM elements together so that we can build those projectors, i.e., we need an algorithm of how to generate these sets $I^{(\rho)}$ that can be used to build up $\P_\rho$'s. This will be achieved by the following three lemmas:

\begin{lemma}\label{lma:corresponding_to_same}
Let $\bC\hookleftarrow \C$, where $\C=\{\P_j(\bullet)\P_j\}$, and $\bC=\{\PO_\bi\}$. If two POVM elements have non-zero overlap, they must correspond to the same projector $\P_j$. In mathematical terms, if $\hPi_\bi\hPi_{\bi'}\neq 0$, then both $\bi,\bi'\in I^{(j)}$ for some $j$.
\end{lemma}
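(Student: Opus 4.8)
The plan is to reduce everything to two facts: the defining decomposition of a finer coarse-graining, and the observation that a positive operator dominated by a projector is supported inside that projector's range. Since $\bC\hookleftarrow\C$ with $\C=\{\P_j(\bullet)\P_j\}$ projective, Definition~\ref{def:finer_set_coarse_graining} supplies disjoint index sets $I^{(j)}$, with $\bigcup_j I^{(j)}=\{\bi\}$, such that $\Pi_j=\sum_{\bi\in I^{(j)}}\Pi_\bi$; and because $\P_j$ is a projector, $\Pi_j=\P_j^\dag\P_j=\P_j$, so
\[
\P_j=\sum_{\bi\in I^{(j)}}\Pi_\bi.
\]

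First I would establish the support inclusion: for $\bi\in I^{(j)}$ one has $\P_j\Pi_\bi=\Pi_\bi=\Pi_\bi\P_j$. Indeed, every term $\Pi_{\bi'}$ in the sum above is positive semidefinite, so $0\leq\Pi_\bi\leq\P_j$; hence for any $\ket{\psi}$ with $\P_j\ket{\psi}=0$ we get $0\le\bra{\psi}\Pi_\bi\ket{\psi}\le\bra{\psi}\P_j\ket{\psi}=0$, and writing $\Pi_\bi=B^\dag B$ this forces $\Pi_\bi\ket{\psi}=0$. Thus $\ker\P_j\subseteq\ker\Pi_\bi$, i.e.\ $\P_j\Pi_\bi=\Pi_\bi$, and taking adjoints (both operators Hermitian) $\Pi_\bi\P_j=\Pi_\bi$.

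Second, I would invoke orthogonality of the projectors in $\C$. If $\bi\in I^{(j)}$ and $\bi'\in I^{(j')}$ with $j\neq j'$, then using the previous step together with $\P_j\P_{j'}=0$,
\[
\Pi_\bi\Pi_{\bi'}=(\Pi_\bi\P_j)(\P_{j'}\Pi_{\bi'})=\Pi_\bi(\P_j\P_{j'})\Pi_{\bi'}=0.
\]
Contrapositively, $\Pi_\bi\Pi_{\bi'}\neq 0$ rules out $\bi,\bi'$ lying in index sets attached to distinct $j$'s. Since the $I^{(j)}$ are disjoint and exhaust all indices, every $\bi$ lies in exactly one of them, so $\bi$ and $\bi'$ must lie in the same $I^{(j)}$, which is the assertion.

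I do not anticipate a genuine obstacle; the only step requiring a touch of care is the implication $0\le A\le P\Rightarrow\operatorname{ran}A\subseteq\operatorname{ran}P$ for $P$ a projector, but this is exactly the $\ket{\psi}\in\ker P$ argument indicated above, using positivity of $A$. Everything else is bookkeeping with the decomposition $\P_j=\sum_{\bi\in I^{(j)}}\Pi_\bi$ and pairwise orthogonality of the $\P_j$; this lemma then feeds directly into Lemma~\ref{lma:eigenprojector} for the state-inference algorithm.
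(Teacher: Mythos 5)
Your proof is correct. The decomposition $\P_j=\sum_{\bi\in I^{(j)}}\Pi_\bi$ from Definition~\ref{def:finer_set_coarse_graining}, the step $0\le\Pi_\bi\le\P_j\Rightarrow\P_j\Pi_\bi=\Pi_\bi=\Pi_\bi\P_j$ (via the kernel-inclusion argument, which is valid since $\Pi_\bi$ is positive semidefinite), and the final orthogonality computation $\Pi_\bi\Pi_{\bi'}=\Pi_\bi\P_j\P_{j'}\Pi_{\bi'}=0$ together with disjointness and exhaustiveness of the $I^{(j)}$ give exactly the claimed statement by contraposition.

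Your route differs from the paper's in a worthwhile way. The paper argues by contradiction at the level of the Kraus operators: starting from $\P_j\P_{\tilde j}\P_j=0$ it sandwiches with $\bra{\psi}\cdot\ket{\psi}$ to conclude $\K_{\bi'\bm'}\P_j=0$ for all $\bi'\in I^{(\tilde j)}$ and all $\bm'$, then iterates the same norm argument to get $\K_{\bi\bm}\K_{\bi'\bm'}^\dag=0$ for all pairs, and only then reassembles these into $\Pi_\bi\Pi_{\bi'}=0$. Your argument never opens up the Kraus decomposition at all: the only inputs are positivity of the POVM elements and the operator inequality $0\le\Pi_\bi\le\P_j$, from which the support inclusion follows. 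This is more economical and slightly more general in spirit (it would apply verbatim to any family of positive operators summing to the projectors, regardless of how they arise), at the cost of not producing the stronger intermediate fact $\K_{\bi\bm}\K_{\bi'\bm'}^\dag=0$ that the paper's proof yields along the way. Since the lemma is only used through its statement in Lemma~\ref{lma:eigenprojector}, nothing downstream is lost by your shortcut.
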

\begin{proof}
For contradiction, let us assume that $\bi\in I^{(j)}$ and $\bi'\in I^{(\tilde j)}$, where $\tilde j\neq j$. Then from the orthogonality of projectors we have
\begin{eqnarray}
\P_{j}\P_{\tilde j}\P_{j} &=& 0,\nonumber \\
\P_{j}\sum_{\bi'\in I^{(\tilde j)},\bm}\K_{\bi'\bm'}^\dag\K_{\bi'\bm'}\P_{j} &=& 0.
\end{eqnarray}
Applying an arbitrary vector $\ket{\psi}$ from both sides we get
\[
\sum_{\bi'\in I^{(\tilde j)},\bm'}\norm{\K_{\bi'\bm'}\P_{j}\ket{\psi}}^2 = 0.
\]
This holds for any $\ket{\psi}$, therefore
\[
\K_{\bi'\bm'}\P_{j}=0 \quad \quad \quad (\forall \bi'\in I^{(\tilde j)} ~\& ~\forall \bm').
\]
Multiplying this equation by $\K_{\bi'\bm'}^\dag$ from the right, expressing $\P_j$ in terms of the Krauss operators, and applying $\ket{\psi}$ from both sides we get
\[
\sum_{\bi\in I^{(j)},\bm}\norm{\K_{\bi\bm}\K_{\bi'\bm'}^\dag\ket{\psi}}^2 = 0.
\]
Since this holds for any $\ket{\psi}$, we get
\[
\K_{\bi\bm}\K_{\bi'\bm'}^\dag=0 ~ (\forall \bi \in I^{(j)},\forall \bi'\in I^{(\tilde j)},\forall \bm,~\&~\forall \bm').
\]
Multiplying this equation by $\K_{\bi\bm}^\dag$ from the left and by $\K_{\bi'\bm'}$ from the right and summing over $\bm$ and $\bm'$ we get
\[
\hPi_\bi\hPi_{\bi'}=0,
\]
which is in contradiction with our assumption that $\hPi_\bi\hPi_{\bi'}\neq0$. Therefore, both $\bi$ and $\bi'$ must belong into the same $I^{(j)}$.
\end{proof}

\noindent {\bf Lemma~\ref{lma:eigenprojector}.} Next we go on to prove Lemma~\ref{lma:eigenprojector} that says $\P$'s generated by steps 1--3 form a complete set of orthogonal projectors, and that each $\P$ projects into an eigenspace of the state of the system.
\begin{proof} Lemma~\ref{lma:corresponding_to_same} and Definition~\ref{def:finer_set_coarse_graining} guarantees that the procedure in step 3 will correctly generate the elements of the index set $I^{(\rho)}$. This is because from Lemma~\ref{lma:corresponding_to_same} we know that all POVM elements that have non-zero overlap must correspond to the same projector $\P_\rho$. Let us denote the set of all operators $\P$ generated through the algorithm as $A=\{\P\}$. This set must be complete ($\sum_{\P\in A} \P=\I$), because $\sum_{\P\in A} \P=\sum_\bi \hPi_\bi=\sum_{\bi,\bm}\K_{\bi\bm}^\dag\K_{\bi\bm}=\I$. Combining Lemma~\ref{lma:corresponding_to_same} and Definition~\ref{def:finer_set_coarse_graining} we know that for each $\rho$ there must exist a subset of $A$, let us denote it $B^{\rho}$, such that
\[\label{eq:Prho_expressed_with_P}
\P_\rho=\sum_{\P\in B^{\rho}}\P.
\]
Further, we know that every operator $\P\in A$ must be orthogonal to every other operator $\P\in A$. This is because the iterative process of step 3 is stopped exactly when for all unused $\bi'$, $\hPi_{\bi'}\P=0$, so $\P$ must be orthogonal to all $\P$'s that will come after it, and by the same logic, $\P$ must also be orthogonal to all the $\P$'s that came before it. Also, from the construction it is clear that $B^{\rho}$'s are disjoint sets ($B^{\rho}\cap B^{\rho'}=\emptyset$ for $\rho\neq\rho'$) whose union equals the set A ($A=\bigcup_\rho B^{\rho}$). If we manage to prove that each $\P$ is a projector, then from Eq.~\eqref{eq:Prho_expressed_with_P} it is clear that it must project into an eigenspace of the density matrix, simply because $\P_\rho$ already does by definition.

Therefore, all we have to do now is to prove that each $\P$ generated by step 3 is a projector. For a contradiction, let us assume that there is some $\P$ that is not a projector. This $\P$ belongs into some set $B^{\rho}$. To make the notation clearer, let us label elements of $B^{\rho}$ as $\P_1,\dots,\P_k$ so we can write
\[
\P_\rho=\P_1+\cdots+\P_k,
\]
and we assume that operator $\P_1$ is the one that is not a projector. $\P_\rho$ is a projector, therefore
\begin{eqnarray}
\P_\rho=\P_\rho^2 &=& (\P_1+\cdots+\P_k)^2\nonumber \\
&=&\P_1^2+\cdots+\P_k^2,  
\end{eqnarray}
where we have used that all $\P$'s are orthogonal to each other. Multiplying this equation by $\P_1$ and using the orthogonality again we obtain
\[
\P_1^2=\P_1^3.
\]
However, $\P_1$ is a Hermitian operator by construction, so there is a spectral decomposition of $\P_1=\sum_n\lambda_n\pro{\psi_n}{\psi_n}$, $\lambda_n\neq 0$, where $\ket{\psi_n}$ are orthogonal. Combining the spectral decomposition with the above equation, we have
\[
\sum_n\lambda_n^2\pro{\psi_n}{\psi_n}=\sum_n\lambda_n^3\pro{\psi_n}{\psi_n}.
\]
Due to orthogonality of $\ket{\psi_n}$, it must be that $\lambda_n^2=\lambda_n^3$ for every $n$, in other words, $\lambda_n=1$ for every $n$. Thus,
\[
\P_1=\sum_n\pro{\psi_n}{\psi_n},
\]
therefore $\P_1$ is a projector, which is a contradiction. Thus, $\P$ generated by step 3 is always a projector, and as we established before, it must project into an eigenspace of the density matrix. This concludes the proof.
\end{proof}

\noindent{\bf Lemma~\ref{lma:eigenvalue}.} Finally, we prove Lemma~\ref{lma:eigenvalue} which says that $\rho$ generated by step 4 is an eigenvalue of the state of the system corresponding to projector $\P$.
\begin{proof}From Lemma~\ref{lma:eigenprojector} we know that each $\P$ projects into an eigenspace of the system. Let us denote the (currently unknown) eigenvalue corresponding to this eigenspace as $\rho_0$. $\P$ is orthogonal to every other $\P$ by construction $\P=\sum_{\bi\in I}\hPi_\bi$. The probability of obtaining measurement outcome $\bi$ is equal to
\[
p_\bi=\tr[\PO_\bi(\R)]=\tr[\hPi_\bi\R]=\tr[\hPi_\bi \sum_\rho \rho \P_\rho].
\]
Summing over all $\bi\in I$ we obtain
\[
\sum_{\bi\in I}p_\bi=\tr[\P \sum_\rho \rho\P_\rho]=\tr[\sum_\rho \rho\,\delta_{\rho \rho_0}\P]=\rho_0\tr[\P].
\]
$I$ is a non-empty set by construction, therefore $\tr[\P]$ is non-zero. Thus we can divide by it and obtain
\[
\rho_0=\frac{\sum_{\bi\in I}p_\bi}{\tr[\P]},
\]
which concludes the proof.
\end{proof}

\section{Explicit form of repeated measurements and repeated contacts superoperators}\label{app:explicit_forms}

\subsection{Repeated measurements}
It can be easily realized that the repeated measurements superoperator~\eqref{eq:risuper} can be rewritten as 
\[
\begin{split}
\PO_\bx^{\mathrm{rm}}(\hat{Z})=\K_\bx^{\mathrm{rm}}\hat{Z}\K_\bx^{\mathrm{rm}}
\end{split}
\]
where
\[
\K_\bx^{\mathrm{rm}}=\U_f\K_{x_N}\cdots \U_f\K_{x_1}.
\]
In terms of the bare elements, this gives
\[\label{eq:risuperapp}
(\K_\bx^{\mathrm{rm}})_{m_Nm_0}=\sum_{m_1,\dots,m_{N-1}}\U_{m_0\dots m_N}\sqrt{g_{m_0\dots m_N}^{\mathrm{rm}}(\bx)}
\]
where the contribution due to the free-evolution is given by 
\[
\U_{m_0\dots m_N}=(\U_{f})_{m_Nm_{N-1}}\dots (\U_{f})_{m_1m_0}.
\]
The Gaussian state of the pointer modifies as 
\[
\sqrt{g_{m_0\dots m_N}^{\mathrm{rm}}\!(\bx)}\!=\!\frac{1}{(2\pi \Omega^2)^{1/4}}\exp\bigg[\!-\frac{(\bx\!-\!\kappa\boldsymbol{\mu}_{m_0\dots m_N})^2}{4\Omega^2}\!\bigg],
\]
where $\bx=(x_1,\dots,x_N,0)$ and $\boldsymbol{\mu}_{m_0\dots m_N}=(\mu_{m_0},\dots,\mu_{m_{N-1}},0)$.

\subsection{Repeated contacts}
To obtain an explicit form of the repeated contacts superoperator~\eqref{eq:rcsuper}, we first compute
\begin{widetext}
\[
\begin{split}
(\mathcal{U}_f\mathcal{U})^N(\hat{Z}\otimes\S)&=\!\!\!\!\sum_{m_0,m_0',\dots,m_N,m_N'}\!\!\!\!\!\!\!\!(\U_{f})_{m_Nm_{N-1}}\dots (\U_{f})_{m_1m_0}\hat{Z}_{m_0m_0'}(\U_{f})_{m_N'm_{N-1}'}^*\dots (\U_{f})_{m_1'm_0'}^*\\
&\times\pro{m_N}{m_N'}\otimes \pro{\varphi_{x-\kappa\mu_{m_0}\cdots-\kappa\mu_{m_{N-1}}}}{\varphi_{x-\kappa\mu_{m_0'}\cdots-\kappa\mu_{m_{N-1}'}}}.
\end{split}
\]
\end{widetext}
This leads to
\[
\begin{split}
\PO_x^{\mathrm{rc}}(\hat{Z})
&=\bra{x}_B(\mathcal{U}_f\mathcal{U})^N(\hat{Z}\otimes\S)\ket{x}_B \\
&=\K^{\mathrm{rc}}_x\hat{Z}\K^{\mathrm{rc\dag}}_x,
\end{split}
\]
where
\begin{eqnarray}\label{eq:computation_repeated_contacts}
\K^{\mathrm{rc}}_x&=&\sum_{m_0,\dots,m_N}(\U_{f})_{m_Nm_{N-1}}\dots (\U_{f})_{m_1m_0}  \\
&&\times \sqrt{g^\Omega_{x-\kappa\mu_{m_0}\cdots-\kappa\mu_{m_{N-1}}}}\pro{m_N}{m_0},\nonumber
\end{eqnarray}
which represents the elements of the operator $\PO_x^{\mathrm{rc}}(\hat{Z})$ in the eigenbasis of the measurement operator.
We can write this in a form similar to Eq.~\eqref{eq:risuperapp}
\[\label{eq:ri_K}
(\K_x^{\mathrm{rc}})_{m_Nm_0}=\sum_{m_1,\dots,m_{N-1}}\U_{m_0\dots m_N}\sqrt{g_{m_0\dots m_N}^{\mathrm{rc}}(x)}
\]
where
\begin{eqnarray}\label{eq:evaluating_gaussian}
\sqrt{g_{m_0\dots m_N}^{\mathrm{rc}}(x)}&=&\frac{1}{(2\pi \Omega^2)^{1/4}} \\
&\times& \exp\bigg[-\frac{(x-\kappa \sum_{i=0}^{N-1}\mu_{m_{i}})^2}{4\Omega^2}\bigg].\nonumber
\end{eqnarray}
Comparing Eqs.~\eqref{eq:risuperapp} and~\eqref{eq:ri_K} we see that the only difference between the repeated measurements and repeated contacts is the function $\sqrt{g_{m_0\dots m_N}^{\mathrm{rm}}(\bx)}$ and $\sqrt{g_{m_0\dots m_N}^{\mathrm{rc}}(x)}$, respectively.

\section{Numerical details}\label{app:numerics}

On a computer we cannot create an exactly continuous function, hence in all figures, we choose the coarse-grained positional step to be $dx=0.1$. Since computing observational entropy is a numerical integration problem, using Trapezoidal rule~\cite{trefethen2014exponentially} we expect the error to scale as $dx^2$, which would give precision around $\mathcal{O}(10^{-2})$. However, in practice, we saw that observational entropy is very insensitive to the coarse-graining size in $x$, and the error is much smaller. 
Notably, the difference in results between $dx=1$ and $dx=0.1$ was only $10^{-3}$ or less in every example we considered, and the difference in outcome for $dx=0.1$ and $dx=0.01$ was of order $\mathcal{O}(10^{-6})$ for cases $N=2$. We also chose a cut-off in the positional axis at four standard deviations $\Omega$ from the furthest peak of Gaussians that make $p$'s and $V$'s (for instance, see Fig.~\ref{Fig:non} (b) and (c) that show the cut-off scale). We performed a series of sanity checks, for example, summing computed probabilities $p$ and volumes $V$ and making sure they add up to $1$ and $2=\dim \HS_S$ respectively and ensuring that the generated coarse-graining sums to unity as per Eq.~\eqref{eq:completeness}.

Computing the observational entropy for the von Neumann measurement schemes is a computationally demanding task. This is due to the dimensionality of the problem growing exponentially with $N$ and the sheer number of points for which we need to compute $p$ and $V$, especially in the repeated measurement case. 

In repeated contacts, we can estimate the computational complexity as the product of three critical factors: i) the number of discretized points on the position axis (computed via the distance between the furthest Gaussian peaks + 4$\Omega$ buffer on both sides), i.e., $(\kappa N (\mu_{\max}-\mu_{\min})+ 8\Omega)/dx$ , ii) complexity of computing a single element of the sum [Eq.~\eqref{eq:computation_repeated_contacts}], for a given $x$, has a leading order $N^2$ for having to multiply $N$ elements of $(\U_{f})_{m_im_{i-1}}$, and iii) summing all $(\dim \HS_S)^N$ elements of the sum~\eqref{eq:computation_repeated_contacts}. Thus, the leading order of the computational complexity $(\kappa N (\mu_{\max}-\mu_{\min})+8\Omega)  N^2 (\dim \HS_S)^N/dx$ scales with the number of contacts $N$ as $\mathcal{O}(N^3 (\dim \HS_S)^N)$. With our parameters ($\mu_{\max}=2$, $\mu_{\min}=0$, $\Omega=1$, $\dim \HS_S=2$), we obtain $N^32^N$ as the leading order. This translates into the ability to compute for up to $N=22$ on a single processor with current parameters.

In repeated measurements, we can estimate the computational complexity as the product of three critical factors: i) the number of discretized points $\bx=(x_1,\dots,x_N)$ on the position axes (computed via the distance between the furthest Gaussian peaks + 4$\Omega$ buffer on both sides, to the power of the number of different axes), i.e., $\Big((\kappa (\mu_{\max}-\mu_{\min})+ 8\Omega)/dx\big)^N$, ii) complexity of computing a single element of the sum [Eq.~\eqref{eq:risuperapp}], for a given $\bx$, has a leading order $N^2$ for having to multiply $N$ elements $(\U_{f})_{m_im_{i-1}}$, and iii) summing all $(\dim \HS_S)^N$ elements of the sum~\eqref{eq:risuperapp}. 
Thus, the leading order of the computational complexity $(\kappa (\mu_{\max}-\mu_{\min})/dx+8\Omega)^N  N^2 (\dim \HS_S)^N$ scales with the number of contacts $N$ as $\mathcal{O}\big(N^2 \Big(\dim \HS_S(\kappa (\mu_{\max}-\mu_{\min})+ 8\Omega)/dx\big)^N\big)$. With our parameters ($\mu_{\max}=2$, $\mu_{\min}=0$, $\Omega=1$, $\dim \HS_S=2$), we obtain $N^2 200^N$ as the leading order. This translates into the ability to compute for up to $N=5$ on a single processor with current parameters. The task is also highly parallelizable, however, any advantage gained by parallelization diminishes quickly with $N$ due to the prohibitive scaling.

\bibliographystyle{apsrev4-1}
\bibliography{main.bib}

\begin{thebibliography}{102}%
\makeatletter
\providecommand \@ifxundefined [1]{%
 \@ifx{#1\undefined}
}%
\providecommand \@ifnum [1]{%
 \ifnum #1\expandafter \@firstoftwo
 \else \expandafter \@secondoftwo
 \fi
}%
\providecommand \@ifx [1]{%
 \ifx #1\expandafter \@firstoftwo
 \else \expandafter \@secondoftwo
 \fi
}%
\providecommand \natexlab [1]{#1}%
\providecommand \enquote  [1]{``#1''}%
\providecommand \bibnamefont  [1]{#1}%
\providecommand \bibfnamefont [1]{#1}%
\providecommand \citenamefont [1]{#1}%
\providecommand \href@noop [0]{\@secondoftwo}%
\providecommand \href [0]{\begingroup \@sanitize@url \@href}%
\providecommand \@href[1]{\@@startlink{#1}\@@href}%
\providecommand \@@href[1]{\endgroup#1\@@endlink}%
\providecommand \@sanitize@url [0]{\catcode `\\12\catcode `\$12\catcode
  `\&12\catcode `\#12\catcode `\^12\catcode `\_12\catcode `\%12\relax}%
\providecommand \@@startlink[1]{}%
\providecommand \@@endlink[0]{}%
\providecommand \url  [0]{\begingroup\@sanitize@url \@url }%
\providecommand \@url [1]{\endgroup\@href {#1}{\urlprefix }}%
\providecommand \urlprefix  [0]{URL }%
\providecommand \Eprint [0]{\href }%
\providecommand \doibase [0]{http://dx.doi.org/}%
\providecommand \selectlanguage [0]{\@gobble}%
\providecommand \bibinfo  [0]{\@secondoftwo}%
\providecommand \bibfield  [0]{\@secondoftwo}%
\providecommand \translation [1]{[#1]}%
\providecommand \BibitemOpen [0]{}%
\providecommand \bibitemStop [0]{}%
\providecommand \bibitemNoStop [0]{.\EOS\space}%
\providecommand \EOS [0]{\spacefactor3000\relax}%
\providecommand \BibitemShut  [1]{\csname bibitem#1\endcsname}%
\let\auto@bib@innerbib\@empty
\bibitem [{\citenamefont {Parrondo}\ \emph {et~al.}(2015)\citenamefont
  {Parrondo}, \citenamefont {Horowitz},\ and\ \citenamefont
  {Sagawa}}]{parrondo2015thermodynamics}%
  \BibitemOpen
  \bibfield  {author} {\bibinfo {author} {\bibfnamefont {J.~M.}\ \bibnamefont
  {Parrondo}}, \bibinfo {author} {\bibfnamefont {J.~M.}\ \bibnamefont
  {Horowitz}}, \ and\ \bibinfo {author} {\bibfnamefont {T.}~\bibnamefont
  {Sagawa}},\ }\href {\doibase 10.1038/nphys3230} {\bibfield  {journal}
  {\bibinfo  {journal} {Nat. Phys.}\ }\textbf {\bibinfo {volume} {11}},\
  \bibinfo {pages} {131} (\bibinfo {year} {2015})}\BibitemShut {NoStop}%
\bibitem [{\citenamefont {Rubinstein}\ and\ \citenamefont
  {Kroese}(2013)}]{rubinstein2013cross}%
  \BibitemOpen
  \bibfield  {author} {\bibinfo {author} {\bibfnamefont {R.~Y.}\ \bibnamefont
  {Rubinstein}}\ and\ \bibinfo {author} {\bibfnamefont {D.~P.}\ \bibnamefont
  {Kroese}},\ }\href {\doibase 10.1198/tech.2006.s353} {\emph {\bibinfo {title}
  {The cross-entropy method: a unified approach to combinatorial optimization,
  Monte-Carlo simulation and machine learning}}}\ (\bibinfo  {publisher}
  {Springer Science \& Business Media},\ \bibinfo {year} {2013})\BibitemShut
  {NoStop}%
\bibitem [{\citenamefont {Almheiri}\ \emph {et~al.}(2021)\citenamefont
  {Almheiri}, \citenamefont {Hartman}, \citenamefont {Maldacena}, \citenamefont
  {Shaghoulian},\ and\ \citenamefont {Tajdini}}]{Almheiri2021}%
  \BibitemOpen
  \bibfield  {author} {\bibinfo {author} {\bibfnamefont {A.}~\bibnamefont
  {Almheiri}}, \bibinfo {author} {\bibfnamefont {T.}~\bibnamefont {Hartman}},
  \bibinfo {author} {\bibfnamefont {J.}~\bibnamefont {Maldacena}}, \bibinfo
  {author} {\bibfnamefont {E.}~\bibnamefont {Shaghoulian}}, \ and\ \bibinfo
  {author} {\bibfnamefont {A.}~\bibnamefont {Tajdini}},\ }\href {\doibase
  10.1103/RevModPhys.93.035002} {\bibfield  {journal} {\bibinfo  {journal}
  {Rev. Mod. Phys.}\ }\textbf {\bibinfo {volume} {93}},\ \bibinfo {pages}
  {035002} (\bibinfo {year} {2021})}\BibitemShut {NoStop}%
\bibitem [{\citenamefont {Clausius}(1867)}]{clausius1867the}%
  \BibitemOpen
  \bibfield  {author} {\bibinfo {author} {\bibfnamefont {R.}~\bibnamefont
  {Clausius}},\ }\href@noop {} {\emph {\bibinfo {title} {The Mechanical Theory
  of Heat: With Its Applications to the Steam-engine and to the Physical
  Properties of Bodies}}}\ (\bibinfo  {publisher} {J. Van Voorst},\ \bibinfo
  {year} {1867})\BibitemShut {NoStop}%
\bibitem [{\citenamefont {Dunkel}\ and\ \citenamefont
  {Hilbert}(2014)}]{dunkel2014consistent}%
  \BibitemOpen
  \bibfield  {author} {\bibinfo {author} {\bibfnamefont {J.}~\bibnamefont
  {Dunkel}}\ and\ \bibinfo {author} {\bibfnamefont {S.}~\bibnamefont
  {Hilbert}},\ }\href {\doibase 10.1038/nphys2815} {\bibfield  {journal}
  {\bibinfo  {journal} {Nat. Phys.}\ }\textbf {\bibinfo {volume} {10}},\
  \bibinfo {pages} {67} (\bibinfo {year} {2014})}\BibitemShut {NoStop}%
\bibitem [{\citenamefont {Hilbert}\ \emph {et~al.}(2014)\citenamefont
  {Hilbert}, \citenamefont {H{\"a}nggi},\ and\ \citenamefont
  {Dunkel}}]{hilbert2014thermodynamic}%
  \BibitemOpen
  \bibfield  {author} {\bibinfo {author} {\bibfnamefont {S.}~\bibnamefont
  {Hilbert}}, \bibinfo {author} {\bibfnamefont {P.}~\bibnamefont {H{\"a}nggi}},
  \ and\ \bibinfo {author} {\bibfnamefont {J.}~\bibnamefont {Dunkel}},\ }\href
  {\doibase 10.1103/PhysRevE.90.062116} {\bibfield  {journal} {\bibinfo
  {journal} {Phys. Rev. E}\ }\textbf {\bibinfo {volume} {90}},\ \bibinfo
  {pages} {062116} (\bibinfo {year} {2014})}\BibitemShut {NoStop}%
\bibitem [{\citenamefont {Buonsante}\ \emph {et~al.}(2016)\citenamefont
  {Buonsante}, \citenamefont {Franzosi},\ and\ \citenamefont
  {Smerzi}}]{buonsante2016dispute}%
  \BibitemOpen
  \bibfield  {author} {\bibinfo {author} {\bibfnamefont {P.}~\bibnamefont
  {Buonsante}}, \bibinfo {author} {\bibfnamefont {R.}~\bibnamefont {Franzosi}},
  \ and\ \bibinfo {author} {\bibfnamefont {A.}~\bibnamefont {Smerzi}},\ }\href
  {\doibase 10.1016/j.aop.2016.10.017} {\bibfield  {journal} {\bibinfo
  {journal} {Ann. Phys.}\ }\textbf {\bibinfo {volume} {375}},\ \bibinfo {pages}
  {414} (\bibinfo {year} {2016})}\BibitemShut {NoStop}%
\bibitem [{\citenamefont {Jaynes}(1965)}]{jaynes1965gibbs}%
  \BibitemOpen
  \bibfield  {author} {\bibinfo {author} {\bibfnamefont {E.~T.}\ \bibnamefont
  {Jaynes}},\ }\href {\doibase 10.1119/1.1971557} {\bibfield  {journal}
  {\bibinfo  {journal} {Am. J. Phys.}\ }\textbf {\bibinfo {volume} {33}},\
  \bibinfo {pages} {391} (\bibinfo {year} {1965})}\BibitemShut {NoStop}%
\bibitem [{\citenamefont {{Goldstein}}\ \emph {et~al.}(2020)\citenamefont
  {{Goldstein}}, \citenamefont {{Lebowitz}}, \citenamefont {{Tumulka}},\ and\
  \citenamefont {{Zanghi}}}]{goldstein2019gibbs}%
  \BibitemOpen
  \bibfield  {author} {\bibinfo {author} {\bibfnamefont {S.}~\bibnamefont
  {{Goldstein}}}, \bibinfo {author} {\bibfnamefont {J.~L.}\ \bibnamefont
  {{Lebowitz}}}, \bibinfo {author} {\bibfnamefont {R.}~\bibnamefont
  {{Tumulka}}}, \ and\ \bibinfo {author} {\bibfnamefont {N.}~\bibnamefont
  {{Zanghi}}},\ }\href {\doibase 10.1142/9789811211720_0014} {\emph {\bibinfo
  {title} {Statistical Mechanics and Scientific Explanation: Determinism,
  Indeterminism and Laws of Nature}}}\ (\bibinfo  {publisher} {World
  Scientific},\ \bibinfo {year} {2020})\BibitemShut {NoStop}%
\bibitem [{\citenamefont {Shannon}(1948)}]{shannon1948mathematical}%
  \BibitemOpen
  \bibfield  {author} {\bibinfo {author} {\bibfnamefont {C.~E.}\ \bibnamefont
  {Shannon}},\ }\href {\doibase 10.1002/j.1538-7305.1948.tb01338.x} {\bibfield
  {journal} {\bibinfo  {journal} {The Bell system technical journal}\ }\textbf
  {\bibinfo {volume} {27}},\ \bibinfo {pages} {379} (\bibinfo {year}
  {1948})}\BibitemShut {NoStop}%
\bibitem [{\citenamefont {von
  Neumann}(1927{\natexlab{a}})}]{von1927thermodynamik}%
  \BibitemOpen
  \bibfield  {author} {\bibinfo {author} {\bibfnamefont {J.}~\bibnamefont {von
  Neumann}},\ }\href@noop {} {\bibfield  {journal} {\bibinfo  {journal}
  {Nachrichten von der Gesellschaft der Wissenschaften zu G{\"o}ttingen,
  Mathematisch-Physikalische Klasse}\ }\textbf {\bibinfo {volume} {1927}},\
  \bibinfo {pages} {273} (\bibinfo {year} {1927}{\natexlab{a}})}\BibitemShut
  {NoStop}%
\bibitem [{\citenamefont {Wiseman}\ and\ \citenamefont
  {Vaccaro}(2003)}]{wiseman2003entanglement}%
  \BibitemOpen
  \bibfield  {author} {\bibinfo {author} {\bibfnamefont {H.~M.}\ \bibnamefont
  {Wiseman}}\ and\ \bibinfo {author} {\bibfnamefont {J.~A.}\ \bibnamefont
  {Vaccaro}},\ }\href {\doibase 10.1103/PhysRevLett.91.097902} {\bibfield
  {journal} {\bibinfo  {journal} {Phys. Rev. Lett.}\ }\textbf {\bibinfo
  {volume} {91}},\ \bibinfo {pages} {097902} (\bibinfo {year}
  {2003})}\BibitemShut {NoStop}%
\bibitem [{\citenamefont {Bennett}\ \emph {et~al.}(2011)\citenamefont
  {Bennett}, \citenamefont {Grudka}, \citenamefont {Horodecki}, \citenamefont
  {Horodecki},\ and\ \citenamefont {Horodecki}}]{bennett2011postulates}%
  \BibitemOpen
  \bibfield  {author} {\bibinfo {author} {\bibfnamefont {C.~H.}\ \bibnamefont
  {Bennett}}, \bibinfo {author} {\bibfnamefont {A.}~\bibnamefont {Grudka}},
  \bibinfo {author} {\bibfnamefont {M.}~\bibnamefont {Horodecki}}, \bibinfo
  {author} {\bibfnamefont {P.}~\bibnamefont {Horodecki}}, \ and\ \bibinfo
  {author} {\bibfnamefont {R.}~\bibnamefont {Horodecki}},\ }\href {\doibase
  10.1103/PhysRevA.83.012312} {\bibfield  {journal} {\bibinfo  {journal} {Phys.
  Rev. A}\ }\textbf {\bibinfo {volume} {83}},\ \bibinfo {pages} {012312}
  (\bibinfo {year} {2011})}\BibitemShut {NoStop}%
\bibitem [{\citenamefont {Plenio}\ and\ \citenamefont
  {Virmani}(2011)}]{plenio2014introduction}%
  \BibitemOpen
  \bibfield  {author} {\bibinfo {author} {\bibfnamefont {M.~B.}\ \bibnamefont
  {Plenio}}\ and\ \bibinfo {author} {\bibfnamefont {S.~S.}\ \bibnamefont
  {Virmani}},\ }\enquote {\bibinfo {title} {An introduction to entanglement
  theory},}\ in\ \href {\doibase 10.1007/978-3-319-04063-9_8} {\emph {\bibinfo
  {booktitle} {Quantum information and coherence}}}\ (\bibinfo  {publisher}
  {Springer},\ \bibinfo {year} {2011})\ p.\ \bibinfo {pages} {173}\BibitemShut
  {NoStop}%
\bibitem [{\citenamefont {{Girolami}}\ \emph {et~al.}(2017)\citenamefont
  {{Girolami}}, \citenamefont {{Tufarelli}},\ and\ \citenamefont
  {{Susa}}}]{girolami2017quantifying}%
  \BibitemOpen
  \bibfield  {author} {\bibinfo {author} {\bibfnamefont {D.}~\bibnamefont
  {{Girolami}}}, \bibinfo {author} {\bibfnamefont {T.}~\bibnamefont
  {{Tufarelli}}}, \ and\ \bibinfo {author} {\bibfnamefont {C.~E.}\ \bibnamefont
  {{Susa}}},\ }\href {\doibase 10.1103/PhysRevLett.119.140505} {\bibfield
  {journal} {\bibinfo  {journal} {Phys. Rev. Lett.}\ }\textbf {\bibinfo
  {volume} {119}},\ \bibinfo {eid} {140505} (\bibinfo {year}
  {2017})}\BibitemShut {NoStop}%
\bibitem [{\citenamefont {Boltzmann}(2003)}]{boltzmann2003further}%
  \BibitemOpen
  \bibfield  {author} {\bibinfo {author} {\bibfnamefont {L.}~\bibnamefont
  {Boltzmann}},\ }in\ \href {\doibase 10.1142/p281} {\emph {\bibinfo
  {booktitle} {The kinetic theory of gases: an anthology of classic papers with
  historical commentary}}}\ (\bibinfo  {publisher} {World Scientific},\
  \bibinfo {address} {Hackensack},\ \bibinfo {year} {2003})\ p.\ \bibinfo
  {pages} {262}\BibitemShut {NoStop}%
\bibitem [{\citenamefont {Goldstein}\ and\ \citenamefont
  {Lebowitz}(2004)}]{goldstein2004boltzmann}%
  \BibitemOpen
  \bibfield  {author} {\bibinfo {author} {\bibfnamefont {S.}~\bibnamefont
  {Goldstein}}\ and\ \bibinfo {author} {\bibfnamefont {J.~L.}\ \bibnamefont
  {Lebowitz}},\ }\href {\doibase 10.1016/j.physd.2004.01.008} {\bibfield
  {journal} {\bibinfo  {journal} {Physica D}\ }\textbf {\bibinfo {volume}
  {193}},\ \bibinfo {pages} {53} (\bibinfo {year} {2004})}\BibitemShut
  {NoStop}%
\bibitem [{\citenamefont {von Neumann}(2010{\natexlab{a}})}]{von2010proof}%
  \BibitemOpen
  \bibfield  {author} {\bibinfo {author} {\bibfnamefont {J.}~\bibnamefont {von
  Neumann}},\ }\href {\doibase 10.1140/epjh/e2010-00008-5} {\bibfield
  {journal} {\bibinfo  {journal} {The European Physical Journal H}\ }\textbf
  {\bibinfo {volume} {35}},\ \bibinfo {pages} {201} (\bibinfo {year}
  {2010}{\natexlab{a}})},\ \Eprint
  {http://arxiv.org/abs/https://arxiv.org/abs/1003.2133}
  {https://arxiv.org/abs/1003.2133} \BibitemShut {NoStop}%
\bibitem [{\citenamefont {{Wehrl}}(1978)}]{wehrl1978general}%
  \BibitemOpen
  \bibfield  {author} {\bibinfo {author} {\bibfnamefont {A.}~\bibnamefont
  {{Wehrl}}},\ }\href {\doibase 10.1103/RevModPhys.50.221} {\bibfield
  {journal} {\bibinfo  {journal} {Rev. Mod. Phys.}\ }\textbf {\bibinfo {volume}
  {50}},\ \bibinfo {pages} {221} (\bibinfo {year} {1978})}\BibitemShut
  {NoStop}%
\bibitem [{\citenamefont {{Gemmer}}\ and\ \citenamefont
  {{Steinigeweg}}(2014)}]{gemmer2014entropy}%
  \BibitemOpen
  \bibfield  {author} {\bibinfo {author} {\bibfnamefont {J.}~\bibnamefont
  {{Gemmer}}}\ and\ \bibinfo {author} {\bibfnamefont {R.}~\bibnamefont
  {{Steinigeweg}}},\ }\href {\doibase 10.1103/PhysRevE.89.042113} {\bibfield
  {journal} {\bibinfo  {journal} {Phys. Rev. E}\ }\textbf {\bibinfo {volume}
  {89}},\ \bibinfo {eid} {042113} (\bibinfo {year} {2014})}\BibitemShut
  {NoStop}%
\bibitem [{\citenamefont {{{\v{S}}afr{\'a}nek}}\ \emph
  {et~al.}(2019{\natexlab{a}})\citenamefont {{{\v{S}}afr{\'a}nek}},
  \citenamefont {{Deutsch}},\ and\ \citenamefont
  {{Aguirre}}}]{safranek2019short}%
  \BibitemOpen
  \bibfield  {author} {\bibinfo {author} {\bibfnamefont {D.}~\bibnamefont
  {{{\v{S}}afr{\'a}nek}}}, \bibinfo {author} {\bibfnamefont {J.~M.}\
  \bibnamefont {{Deutsch}}}, \ and\ \bibinfo {author} {\bibfnamefont
  {A.}~\bibnamefont {{Aguirre}}},\ }\href {\doibase 10.1103/PhysRevA.99.010101}
  {\bibfield  {journal} {\bibinfo  {journal} {Phys. Rev. A}\ }\textbf {\bibinfo
  {volume} {99}},\ \bibinfo {eid} {010101(R)} (\bibinfo {year}
  {2019}{\natexlab{a}})}\BibitemShut {NoStop}%
\bibitem [{\citenamefont {{{\v{S}}afr{\'a}nek}}\ \emph
  {et~al.}(2019{\natexlab{b}})\citenamefont {{{\v{S}}afr{\'a}nek}},
  \citenamefont {{Deutsch}},\ and\ \citenamefont
  {{Aguirre}}}]{safranek2019long}%
  \BibitemOpen
  \bibfield  {author} {\bibinfo {author} {\bibfnamefont {D.}~\bibnamefont
  {{{\v{S}}afr{\'a}nek}}}, \bibinfo {author} {\bibfnamefont {J.~M.}\
  \bibnamefont {{Deutsch}}}, \ and\ \bibinfo {author} {\bibfnamefont
  {A.}~\bibnamefont {{Aguirre}}},\ }\href {\doibase 10.1103/PhysRevA.99.012103}
  {\bibfield  {journal} {\bibinfo  {journal} {Phys. Rev. A}\ }\textbf {\bibinfo
  {volume} {99}},\ \bibinfo {eid} {012103} (\bibinfo {year}
  {2019}{\natexlab{b}})}\BibitemShut {NoStop}%
\bibitem [{\citenamefont {{Engelhardt}}\ and\ \citenamefont
  {{Wall}}(2019)}]{engelhardt2019coarse}%
  \BibitemOpen
  \bibfield  {author} {\bibinfo {author} {\bibfnamefont {N.}~\bibnamefont
  {{Engelhardt}}}\ and\ \bibinfo {author} {\bibfnamefont {A.~C.}\ \bibnamefont
  {{Wall}}},\ }\href {\doibase 10.1007/JHEP05(2019)160} {\bibfield  {journal}
  {\bibinfo  {journal} {J. High Energy Phys.}\ }\textbf {\bibinfo {volume}
  {2019}},\ \bibinfo {eid} {160} (\bibinfo {year} {2019})}\BibitemShut
  {NoStop}%
\bibitem [{\citenamefont {\v{S}afr{\'a}nek}\ \emph {et~al.}(2021)\citenamefont
  {\v{S}afr{\'a}nek}, \citenamefont {Aguirre}, \citenamefont {Schindler},\ and\
  \citenamefont {Deutsch}}]{safranek2021brief}%
  \BibitemOpen
  \bibfield  {author} {\bibinfo {author} {\bibfnamefont {D.}~\bibnamefont
  {\v{S}afr{\'a}nek}}, \bibinfo {author} {\bibfnamefont {A.}~\bibnamefont
  {Aguirre}}, \bibinfo {author} {\bibfnamefont {J.}~\bibnamefont {Schindler}},
  \ and\ \bibinfo {author} {\bibfnamefont {J.~M.}\ \bibnamefont {Deutsch}},\
  }\href {\doibase 10.1007/s10701-021-00498-x} {\bibfield  {journal} {\bibinfo
  {journal} {Found. Phys.}\ }\textbf {\bibinfo {volume} {51}},\ \bibinfo
  {pages} {101} (\bibinfo {year} {2021})}\BibitemShut {NoStop}%
\bibitem [{\citenamefont {Farmer}(1982)}]{farmer1982information}%
  \BibitemOpen
  \bibfield  {author} {\bibinfo {author} {\bibfnamefont {J.~D.}\ \bibnamefont
  {Farmer}},\ }\href {\doibase 10.1515/zna-1982-1117} {\bibfield  {journal}
  {\bibinfo  {journal} {Zeitschrift für Naturforschung A}\ }\textbf {\bibinfo
  {volume} {37}},\ \bibinfo {pages} {1304} (\bibinfo {year}
  {1982})}\BibitemShut {NoStop}%
\bibitem [{\citenamefont {{Latora}}\ and\ \citenamefont
  {{Baranger}}(1999)}]{Latora1999kolmogorov}%
  \BibitemOpen
  \bibfield  {author} {\bibinfo {author} {\bibfnamefont {V.}~\bibnamefont
  {{Latora}}}\ and\ \bibinfo {author} {\bibfnamefont {M.}~\bibnamefont
  {{Baranger}}},\ }\href {\doibase 10.1103/PhysRevLett.82.520} {\bibfield
  {journal} {\bibinfo  {journal} {Phys. Rev. Lett.}\ }\textbf {\bibinfo
  {volume} {82}},\ \bibinfo {pages} {520} (\bibinfo {year} {1999})}\BibitemShut
  {NoStop}%
\bibitem [{\citenamefont {Frigg}(2004)}]{frigg2004sense}%
  \BibitemOpen
  \bibfield  {author} {\bibinfo {author} {\bibfnamefont {R.}~\bibnamefont
  {Frigg}},\ }\href {\doibase 10.1093/bjps/55.3.411} {\bibfield  {journal}
  {\bibinfo  {journal} {Brit. J. Phil. Sc.}\ }\textbf {\bibinfo {volume}
  {55}},\ \bibinfo {pages} {411} (\bibinfo {year} {2004})}\BibitemShut
  {NoStop}%
\bibitem [{\citenamefont {Jost}(2006)}]{jost2006dynamical}%
  \BibitemOpen
  \bibfield  {author} {\bibinfo {author} {\bibfnamefont {J.}~\bibnamefont
  {Jost}},\ }\href {\doibase 10.1007/3-540-28889-9} {\emph {\bibinfo {title}
  {Dynamical systems: examples of complex behaviour}}}\ (\bibinfo  {publisher}
  {Springer Science \& Business Media},\ \bibinfo {year} {2006})\BibitemShut
  {NoStop}%
\bibitem [{Note1()}]{Note1}%
  \BibitemOpen
  \bibinfo {note} {Thermodynamic entropies encode information about the energy
  eigenstate (microstate) of the system due to an increase in energy. Whereas,
  information-theoretic entropies capture details about the state of the system
  due to its interaction with an external system.}\BibitemShut {Stop}%
\bibitem [{\citenamefont {Schindler}\ \emph {et~al.}(2021)\citenamefont
  {Schindler}, \citenamefont {Frangipane},\ and\ \citenamefont
  {Aguirre}}]{schindler2021unitarity}%
  \BibitemOpen
  \bibfield  {author} {\bibinfo {author} {\bibfnamefont {J.}~\bibnamefont
  {Schindler}}, \bibinfo {author} {\bibfnamefont {E.}~\bibnamefont
  {Frangipane}}, \ and\ \bibinfo {author} {\bibfnamefont {A.}~\bibnamefont
  {Aguirre}},\ }\href {\doibase 10.1088/1361-6382/abdf25} {\bibfield  {journal}
  {\bibinfo  {journal} {Class. Quantum Grav.}\ }\textbf {\bibinfo {volume}
  {38}},\ \bibinfo {pages} {075025} (\bibinfo {year} {2021})}\BibitemShut
  {NoStop}%
\bibitem [{\citenamefont {{Amadei}}\ \emph {et~al.}(2021)\citenamefont
  {{Amadei}}, \citenamefont {{Liu}},\ and\ \citenamefont
  {{Perez}}}]{lautaro2021quantum}%
  \BibitemOpen
  \bibfield  {author} {\bibinfo {author} {\bibfnamefont {L.}~\bibnamefont
  {{Amadei}}}, \bibinfo {author} {\bibfnamefont {H.}~\bibnamefont {{Liu}}}, \
  and\ \bibinfo {author} {\bibfnamefont {A.}~\bibnamefont {{Perez}}},\ }\href
  {\doibase 10.3389/fspas.2021.604047} {\bibfield  {journal} {\bibinfo
  {journal} {Frontiers in Astronomy and Space Sciences}\ }\textbf {\bibinfo
  {volume} {8}},\ \bibinfo {eid} {46} (\bibinfo {year} {2021})}\BibitemShut
  {NoStop}%
\bibitem [{\citenamefont {Uzdin}\ and\ \citenamefont
  {Rahav}(2021)}]{uzdin2021passivity}%
  \BibitemOpen
  \bibfield  {author} {\bibinfo {author} {\bibfnamefont {R.}~\bibnamefont
  {Uzdin}}\ and\ \bibinfo {author} {\bibfnamefont {S.}~\bibnamefont {Rahav}},\
  }\href {\doibase 10.1103/PRXQuantum.2.010336} {\bibfield  {journal} {\bibinfo
   {journal} {PRX Quantum}\ }\textbf {\bibinfo {volume} {2}},\ \bibinfo {pages}
  {010336} (\bibinfo {year} {2021})}\BibitemShut {NoStop}%
\bibitem [{\citenamefont {Riera-Campeny}\ \emph {et~al.}(2021)\citenamefont
  {Riera-Campeny}, \citenamefont {Sanpera},\ and\ \citenamefont
  {Strasberg}}]{riera2021quantum}%
  \BibitemOpen
  \bibfield  {author} {\bibinfo {author} {\bibfnamefont {A.}~\bibnamefont
  {Riera-Campeny}}, \bibinfo {author} {\bibfnamefont {A.}~\bibnamefont
  {Sanpera}}, \ and\ \bibinfo {author} {\bibfnamefont {P.}~\bibnamefont
  {Strasberg}},\ }\href {\doibase 10.1103/PRXQuantum.2.010340} {\bibfield
  {journal} {\bibinfo  {journal} {PRX Quantum}\ }\textbf {\bibinfo {volume}
  {2}},\ \bibinfo {pages} {010340} (\bibinfo {year} {2021})}\BibitemShut
  {NoStop}%
\bibitem [{\citenamefont {\ifmmode~\check{S}\else \v{S}\fi{}afr\'anek}\ \emph
  {et~al.}(2020)\citenamefont {\ifmmode~\check{S}\else \v{S}\fi{}afr\'anek},
  \citenamefont {Aguirre},\ and\ \citenamefont
  {Deutsch}}]{safranek2019classical}%
  \BibitemOpen
  \bibfield  {author} {\bibinfo {author} {\bibfnamefont {D.}~\bibnamefont
  {\ifmmode~\check{S}\else \v{S}\fi{}afr\'anek}}, \bibinfo {author}
  {\bibfnamefont {A.}~\bibnamefont {Aguirre}}, \ and\ \bibinfo {author}
  {\bibfnamefont {J.~M.}\ \bibnamefont {Deutsch}},\ }\href {\doibase
  10.1103/PhysRevE.102.032106} {\bibfield  {journal} {\bibinfo  {journal}
  {Phys. Rev. E}\ }\textbf {\bibinfo {volume} {102}},\ \bibinfo {pages}
  {032106} (\bibinfo {year} {2020})}\BibitemShut {NoStop}%
\bibitem [{\citenamefont {{Faiez}}\ \emph {et~al.}(2020)\citenamefont
  {{Faiez}}, \citenamefont {{{\v{S}}afr{\'a}nek}}, \citenamefont {{Deutsch}},\
  and\ \citenamefont {{Aguirre}}}]{faiez2020typical}%
  \BibitemOpen
  \bibfield  {author} {\bibinfo {author} {\bibfnamefont {D.}~\bibnamefont
  {{Faiez}}}, \bibinfo {author} {\bibfnamefont {D.}~\bibnamefont
  {{{\v{S}}afr{\'a}nek}}}, \bibinfo {author} {\bibfnamefont {J.~M.}\
  \bibnamefont {{Deutsch}}}, \ and\ \bibinfo {author} {\bibfnamefont
  {A.}~\bibnamefont {{Aguirre}}},\ }\href {\doibase
  10.1103/PhysRevA.101.052101} {\bibfield  {journal} {\bibinfo  {journal}
  {Phys. Rev. A}\ }\textbf {\bibinfo {volume} {101}},\ \bibinfo {eid} {052101}
  (\bibinfo {year} {2020})}\BibitemShut {NoStop}%
\bibitem [{\citenamefont {Hu}\ \emph {et~al.}(2019)\citenamefont {Hu},
  \citenamefont {Wang},\ and\ \citenamefont {Wu}}]{hu2019generalized}%
  \BibitemOpen
  \bibfield  {author} {\bibinfo {author} {\bibfnamefont {Z.}~\bibnamefont
  {Hu}}, \bibinfo {author} {\bibfnamefont {Z.}~\bibnamefont {Wang}}, \ and\
  \bibinfo {author} {\bibfnamefont {B.}~\bibnamefont {Wu}},\ }\href {\doibase
  10.1103/PhysRevE.99.052117} {\bibfield  {journal} {\bibinfo  {journal} {Phys.
  Rev. E}\ }\textbf {\bibinfo {volume} {99}},\ \bibinfo {pages} {052117}
  (\bibinfo {year} {2019})}\BibitemShut {NoStop}%
\bibitem [{\citenamefont {{Anz{\`a}}}\ and\ \citenamefont
  {{Vedral}}(2017)}]{anza2017information}%
  \BibitemOpen
  \bibfield  {author} {\bibinfo {author} {\bibfnamefont {F.}~\bibnamefont
  {{Anz{\`a}}}}\ and\ \bibinfo {author} {\bibfnamefont {V.}~\bibnamefont
  {{Vedral}}},\ }\href {\doibase 10.1038/srep44066} {\bibfield  {journal}
  {\bibinfo  {journal} {Sci. Rep.}\ }\textbf {\bibinfo {volume} {7}},\ \bibinfo
  {eid} {44066} (\bibinfo {year} {2017})}\BibitemShut {NoStop}%
\bibitem [{\citenamefont {{Lent}}(2019)}]{lent2019quantum}%
  \BibitemOpen
  \bibfield  {author} {\bibinfo {author} {\bibfnamefont {C.~S.}\ \bibnamefont
  {{Lent}}},\ }\href {\doibase 10.1103/PhysRevE.100.012101} {\bibfield
  {journal} {\bibinfo  {journal} {Phys. Rev. E}\ }\textbf {\bibinfo {volume}
  {100}},\ \bibinfo {eid} {012101} (\bibinfo {year} {2019})}\BibitemShut
  {NoStop}%
\bibitem [{\citenamefont {{Yoshida}}(2020)}]{yoshida2020boltzmann}%
  \BibitemOpen
  \bibfield  {author} {\bibinfo {author} {\bibfnamefont {K.}~\bibnamefont
  {{Yoshida}}},\ }\href {\doibase 10.1103/PhysRevA.101.032110} {\bibfield
  {journal} {\bibinfo  {journal} {Phys. Rev. A}\ }\textbf {\bibinfo {volume}
  {101}},\ \bibinfo {eid} {032110} (\bibinfo {year} {2020})}\BibitemShut
  {NoStop}%
\bibitem [{\citenamefont {{Schmidt}}\ and\ \citenamefont
  {{Gemmer}}(2020)}]{schmidt2020framework}%
  \BibitemOpen
  \bibfield  {author} {\bibinfo {author} {\bibfnamefont {H.-J.}\ \bibnamefont
  {{Schmidt}}}\ and\ \bibinfo {author} {\bibfnamefont {J.}~\bibnamefont
  {{Gemmer}}},\ }\href {\doibase 10.1515/zna-2019-0272} {\bibfield  {journal}
  {\bibinfo  {journal} {Zeitschrift Naturforschung Teil A}\ }\textbf {\bibinfo
  {volume} {75}},\ \bibinfo {pages} {265} (\bibinfo {year} {2020})}\BibitemShut
  {NoStop}%
\bibitem [{\citenamefont {{Strasberg}}\ and\ \citenamefont
  {{Winter}}(2021)}]{strasberg2021first}%
  \BibitemOpen
  \bibfield  {author} {\bibinfo {author} {\bibfnamefont {P.}~\bibnamefont
  {{Strasberg}}}\ and\ \bibinfo {author} {\bibfnamefont {A.}~\bibnamefont
  {{Winter}}},\ }\href {\doibase 10.1103/PRXQuantum.2.030202} {\bibfield
  {journal} {\bibinfo  {journal} {PRX Quantum}\ }\textbf {\bibinfo {volume}
  {2}},\ \bibinfo {eid} {030202} (\bibinfo {year} {2021})}\BibitemShut
  {NoStop}%
\bibitem [{Note2()}]{Note2}%
  \BibitemOpen
  \bibinfo {note} {Here, we are connecting with well-known notion of POVM at
  the expense of being slightly imprecise. A POVM can uniquely identify the
  probability of a measurement outcome given the density matrix, but it does
  not identify the resulting state after the measurement. The generalized
  measurement, which we will describe in the next section, can do both, and
  thus also defines the corresponding POVM.}\BibitemShut {Stop}%
\bibitem [{\citenamefont {Braginsky}\ \emph {et~al.}(1980)\citenamefont
  {Braginsky}, \citenamefont {Vorontsov},\ and\ \citenamefont
  {Thorne}}]{braginsky1980quantum}%
  \BibitemOpen
  \bibfield  {author} {\bibinfo {author} {\bibfnamefont {V.~B.}\ \bibnamefont
  {Braginsky}}, \bibinfo {author} {\bibfnamefont {Y.~I.}\ \bibnamefont
  {Vorontsov}}, \ and\ \bibinfo {author} {\bibfnamefont {K.~S.}\ \bibnamefont
  {Thorne}},\ }\href {\doibase 10.1126/science.209.4456.547} {\bibfield
  {journal} {\bibinfo  {journal} {Science}\ }\textbf {\bibinfo {volume}
  {209}},\ \bibinfo {pages} {547} (\bibinfo {year} {1980})}\BibitemShut
  {NoStop}%
\bibitem [{\citenamefont {Nakajima}\ \emph {et~al.}(2019)\citenamefont
  {Nakajima}, \citenamefont {Noiri}, \citenamefont {Yoneda}, \citenamefont
  {Delbecq}, \citenamefont {Stano}, \citenamefont {Otsuka}, \citenamefont
  {Takeda}, \citenamefont {Amaha}, \citenamefont {Allison}, \citenamefont
  {Kawasaki} \emph {et~al.}}]{nakajima2019quantum}%
  \BibitemOpen
  \bibfield  {author} {\bibinfo {author} {\bibfnamefont {T.}~\bibnamefont
  {Nakajima}}, \bibinfo {author} {\bibfnamefont {A.}~\bibnamefont {Noiri}},
  \bibinfo {author} {\bibfnamefont {J.}~\bibnamefont {Yoneda}}, \bibinfo
  {author} {\bibfnamefont {M.~R.}\ \bibnamefont {Delbecq}}, \bibinfo {author}
  {\bibfnamefont {P.}~\bibnamefont {Stano}}, \bibinfo {author} {\bibfnamefont
  {T.}~\bibnamefont {Otsuka}}, \bibinfo {author} {\bibfnamefont
  {K.}~\bibnamefont {Takeda}}, \bibinfo {author} {\bibfnamefont
  {S.}~\bibnamefont {Amaha}}, \bibinfo {author} {\bibfnamefont
  {G.}~\bibnamefont {Allison}}, \bibinfo {author} {\bibfnamefont
  {K.}~\bibnamefont {Kawasaki}},  \emph {et~al.},\ }\href {\doibase
  10.1038/s41565-019-0426-x} {\bibfield  {journal} {\bibinfo  {journal} {Nat.
  Nanotech.}\ }\textbf {\bibinfo {volume} {14}},\ \bibinfo {pages} {555}
  (\bibinfo {year} {2019})}\BibitemShut {NoStop}%
\bibitem [{\citenamefont {Kraus}\ \emph {et~al.}(1983)\citenamefont {Kraus},
  \citenamefont {B{\"o}hm}, \citenamefont {Dollard},\ and\ \citenamefont
  {Wootters}}]{kraus1983states}%
  \BibitemOpen
  \bibfield  {author} {\bibinfo {author} {\bibfnamefont {K.}~\bibnamefont
  {Kraus}}, \bibinfo {author} {\bibfnamefont {A.}~\bibnamefont {B{\"o}hm}},
  \bibinfo {author} {\bibfnamefont {J.~D.}\ \bibnamefont {Dollard}}, \ and\
  \bibinfo {author} {\bibfnamefont {W.}~\bibnamefont {Wootters}},\ }\href
  {\doibase 10.1007/3-540-12732-1} {\bibfield  {journal} {\bibinfo  {journal}
  {Lecture notes in physics}\ }\textbf {\bibinfo {volume} {190}} (\bibinfo
  {year} {1983}),\ 10.1007/3-540-12732-1}\BibitemShut {NoStop}%
\bibitem [{\citenamefont {Thingna}\ and\ \citenamefont
  {Talkner}(2020)}]{thingna2020}%
  \BibitemOpen
  \bibfield  {author} {\bibinfo {author} {\bibfnamefont {J.}~\bibnamefont
  {Thingna}}\ and\ \bibinfo {author} {\bibfnamefont {P.}~\bibnamefont
  {Talkner}},\ }\href {\doibase 10.1103/PhysRevA.102.012213} {\bibfield
  {journal} {\bibinfo  {journal} {Phys. Rev. A}\ }\textbf {\bibinfo {volume}
  {102}},\ \bibinfo {pages} {012213} (\bibinfo {year} {2020})}\BibitemShut
  {NoStop}%
\bibitem [{\citenamefont {von Neumann}(1955)}]{von1955mathematical}%
  \BibitemOpen
  \bibfield  {author} {\bibinfo {author} {\bibfnamefont {J.}~\bibnamefont {von
  Neumann}},\ }\enquote {\bibinfo {title} {Mathematical foundations of quantum
  mechanics},}\ \ (\bibinfo  {publisher}
  {\href{http://press.princeton.edu/titles/2113.html}{Princeton university
  press}},\ \bibinfo {year} {1955})\ p.\ \bibinfo {pages} {410}\BibitemShut
  {NoStop}%
\bibitem [{\citenamefont {{Huttner}}\ \emph {et~al.}(1996)\citenamefont
  {{Huttner}}, \citenamefont {{Muller}}, \citenamefont {{Gautier}},
  \citenamefont {{Zbinden}},\ and\ \citenamefont
  {{Gisin}}}]{huttner1996unambigious}%
  \BibitemOpen
  \bibfield  {author} {\bibinfo {author} {\bibfnamefont {B.}~\bibnamefont
  {{Huttner}}}, \bibinfo {author} {\bibfnamefont {A.}~\bibnamefont {{Muller}}},
  \bibinfo {author} {\bibfnamefont {J.~D.}\ \bibnamefont {{Gautier}}}, \bibinfo
  {author} {\bibfnamefont {H.}~\bibnamefont {{Zbinden}}}, \ and\ \bibinfo
  {author} {\bibfnamefont {N.}~\bibnamefont {{Gisin}}},\ }\href {\doibase
  10.1103/PhysRevA.54.3783} {\bibfield  {journal} {\bibinfo  {journal} {Phys.
  Rev. A}\ }\textbf {\bibinfo {volume} {54}},\ \bibinfo {pages} {3783}
  (\bibinfo {year} {1996})}\BibitemShut {NoStop}%
\bibitem [{\citenamefont {Clarke}\ \emph {et~al.}(2001)\citenamefont {Clarke},
  \citenamefont {Chefles}, \citenamefont {Barnett},\ and\ \citenamefont
  {Riis}}]{clarke2001experimental}%
  \BibitemOpen
  \bibfield  {author} {\bibinfo {author} {\bibfnamefont {R.~B.~M.}\
  \bibnamefont {Clarke}}, \bibinfo {author} {\bibfnamefont {A.}~\bibnamefont
  {Chefles}}, \bibinfo {author} {\bibfnamefont {S.~M.}\ \bibnamefont
  {Barnett}}, \ and\ \bibinfo {author} {\bibfnamefont {E.}~\bibnamefont
  {Riis}},\ }\href {\doibase 10.1103/PhysRevA.63.040305} {\bibfield  {journal}
  {\bibinfo  {journal} {Phys. Rev. A}\ }\textbf {\bibinfo {volume} {63}},\
  \bibinfo {pages} {040305(R)} (\bibinfo {year} {2001})}\BibitemShut {NoStop}%
\bibitem [{\citenamefont {Biggerstaff}(2009)}]{biggerstaff2009experiments}%
  \BibitemOpen
  \bibfield  {author} {\bibinfo {author} {\bibfnamefont {D.}~\bibnamefont
  {Biggerstaff}},\ }\emph {\bibinfo {title} {Experiments with generalized
  quantum measurements and entangled photon pairs}},\ \href
  {https://uwspace.uwaterloo.ca/handle/10012/4841?show=full} {Master's
  thesis},\ \bibinfo  {school} {University of Waterloo} (\bibinfo {year}
  {2009})\BibitemShut {NoStop}%
\bibitem [{\citenamefont {{Becerra}}\ \emph {et~al.}(2013)\citenamefont
  {{Becerra}}, \citenamefont {{Fan}},\ and\ \citenamefont
  {{Migdall}}}]{becerra2013implementation}%
  \BibitemOpen
  \bibfield  {author} {\bibinfo {author} {\bibfnamefont {F.~E.}\ \bibnamefont
  {{Becerra}}}, \bibinfo {author} {\bibfnamefont {J.}~\bibnamefont {{Fan}}}, \
  and\ \bibinfo {author} {\bibfnamefont {A.}~\bibnamefont {{Migdall}}},\ }\href
  {\doibase 10.1038/ncomms3028} {\bibfield  {journal} {\bibinfo  {journal}
  {Nat. Comm.}\ }\textbf {\bibinfo {volume} {4}},\ \bibinfo {eid} {2028}
  (\bibinfo {year} {2013})}\BibitemShut {NoStop}%
\bibitem [{\citenamefont {Zhao}\ \emph {et~al.}(2015)\citenamefont {Zhao},
  \citenamefont {Yu}, \citenamefont {Kurzy\ifmmode~\acute{n}\else
  \'{n}\fi{}ski}, \citenamefont {Xiang}, \citenamefont {Li},\ and\
  \citenamefont {Guo}}]{zhao2015experimental}%
  \BibitemOpen
  \bibfield  {author} {\bibinfo {author} {\bibfnamefont {Y.-Y.}\ \bibnamefont
  {Zhao}}, \bibinfo {author} {\bibfnamefont {N.-K.}\ \bibnamefont {Yu}},
  \bibinfo {author} {\bibfnamefont {P.}~\bibnamefont
  {Kurzy\ifmmode~\acute{n}\else \'{n}\fi{}ski}}, \bibinfo {author}
  {\bibfnamefont {G.-Y.}\ \bibnamefont {Xiang}}, \bibinfo {author}
  {\bibfnamefont {C.-F.}\ \bibnamefont {Li}}, \ and\ \bibinfo {author}
  {\bibfnamefont {G.-C.}\ \bibnamefont {Guo}},\ }\href {\doibase
  10.1103/PhysRevA.91.042101} {\bibfield  {journal} {\bibinfo  {journal} {Phys.
  Rev. A}\ }\textbf {\bibinfo {volume} {91}},\ \bibinfo {pages} {042101}
  (\bibinfo {year} {2015})}\BibitemShut {NoStop}%
\bibitem [{\citenamefont {Chen}\ \emph {et~al.}(2019)\citenamefont {Chen},
  \citenamefont {Hu}, \citenamefont {Hu}, \citenamefont {Liu}, \citenamefont
  {Huang}, \citenamefont {Li}, \citenamefont {Guo},\ and\ \citenamefont
  {Zhang}}]{chen2019experimental}%
  \BibitemOpen
  \bibfield  {author} {\bibinfo {author} {\bibfnamefont {J.-S.}\ \bibnamefont
  {Chen}}, \bibinfo {author} {\bibfnamefont {M.-J.}\ \bibnamefont {Hu}},
  \bibinfo {author} {\bibfnamefont {X.-M.}\ \bibnamefont {Hu}}, \bibinfo
  {author} {\bibfnamefont {B.-H.}\ \bibnamefont {Liu}}, \bibinfo {author}
  {\bibfnamefont {Y.-F.}\ \bibnamefont {Huang}}, \bibinfo {author}
  {\bibfnamefont {C.-F.}\ \bibnamefont {Li}}, \bibinfo {author} {\bibfnamefont
  {C.-G.}\ \bibnamefont {Guo}}, \ and\ \bibinfo {author} {\bibfnamefont
  {Y.-S.}\ \bibnamefont {Zhang}},\ }\href {\doibase 10.1364/OE.27.006089}
  {\bibfield  {journal} {\bibinfo  {journal} {Opt. Express}\ }\textbf {\bibinfo
  {volume} {27}},\ \bibinfo {pages} {6089} (\bibinfo {year}
  {2019})}\BibitemShut {NoStop}%
\bibitem [{\citenamefont {Tavakoli}\ \emph {et~al.}(2020)\citenamefont
  {Tavakoli}, \citenamefont {Smania}, \citenamefont {Vértesi}, \citenamefont
  {Brunner},\ and\ \citenamefont {Bourennane}}]{tavakoli2020self}%
  \BibitemOpen
  \bibfield  {author} {\bibinfo {author} {\bibfnamefont {A.}~\bibnamefont
  {Tavakoli}}, \bibinfo {author} {\bibfnamefont {M.}~\bibnamefont {Smania}},
  \bibinfo {author} {\bibfnamefont {T.}~\bibnamefont {Vértesi}}, \bibinfo
  {author} {\bibfnamefont {N.}~\bibnamefont {Brunner}}, \ and\ \bibinfo
  {author} {\bibfnamefont {M.}~\bibnamefont {Bourennane}},\ }\href {\doibase
  10.1126/sciadv.aaw6664} {\bibfield  {journal} {\bibinfo  {journal} {Science
  Advances}\ }\textbf {\bibinfo {volume} {6}},\ \bibinfo {pages} {eaaw6664}
  (\bibinfo {year} {2020})},\ \Eprint
  {http://arxiv.org/abs/https://www.science.org/doi/pdf/10.1126/sciadv.aaw6664}
  {https://www.science.org/doi/pdf/10.1126/sciadv.aaw6664} \BibitemShut
  {NoStop}%
\bibitem [{\citenamefont {Smania}\ \emph {et~al.}(2020)\citenamefont {Smania},
  \citenamefont {Mironowicz}, \citenamefont {Nawareg}, \citenamefont
  {Paw{\l}owski}, \citenamefont {Cabello},\ and\ \citenamefont
  {Bourennane}}]{smania2020experimental}%
  \BibitemOpen
  \bibfield  {author} {\bibinfo {author} {\bibfnamefont {M.}~\bibnamefont
  {Smania}}, \bibinfo {author} {\bibfnamefont {P.}~\bibnamefont {Mironowicz}},
  \bibinfo {author} {\bibfnamefont {M.}~\bibnamefont {Nawareg}}, \bibinfo
  {author} {\bibfnamefont {M.}~\bibnamefont {Paw{\l}owski}}, \bibinfo {author}
  {\bibfnamefont {A.}~\bibnamefont {Cabello}}, \ and\ \bibinfo {author}
  {\bibfnamefont {M.}~\bibnamefont {Bourennane}},\ }\href {\doibase
  10.1364/OPTICA.377959} {\bibfield  {journal} {\bibinfo  {journal} {Optica}\
  }\textbf {\bibinfo {volume} {7}},\ \bibinfo {pages} {123} (\bibinfo {year}
  {2020})}\BibitemShut {NoStop}%
\bibitem [{\citenamefont {Lu}\ \emph {et~al.}(2014)\citenamefont {Lu},
  \citenamefont {Brodutch}, \citenamefont {Li}, \citenamefont {Li},\ and\
  \citenamefont {Laflamme}}]{lu2014experimental}%
  \BibitemOpen
  \bibfield  {author} {\bibinfo {author} {\bibfnamefont {D.}~\bibnamefont
  {Lu}}, \bibinfo {author} {\bibfnamefont {A.}~\bibnamefont {Brodutch}},
  \bibinfo {author} {\bibfnamefont {J.}~\bibnamefont {Li}}, \bibinfo {author}
  {\bibfnamefont {H.}~\bibnamefont {Li}}, \ and\ \bibinfo {author}
  {\bibfnamefont {R.}~\bibnamefont {Laflamme}},\ }\href {\doibase
  10.1088/1367-2630/16/5/053015} {\bibfield  {journal} {\bibinfo  {journal}
  {New J. Phys.}\ }\textbf {\bibinfo {volume} {16}},\ \bibinfo {pages} {053015}
  (\bibinfo {year} {2014})}\BibitemShut {NoStop}%
\bibitem [{\citenamefont {Sponar}\ \emph {et~al.}(2021)\citenamefont {Sponar},
  \citenamefont {Danner}, \citenamefont {Pecile}, \citenamefont {Einsidler},
  \citenamefont {Demirel},\ and\ \citenamefont
  {Hasegawa}}]{sponar2021experimental}%
  \BibitemOpen
  \bibfield  {author} {\bibinfo {author} {\bibfnamefont {S.}~\bibnamefont
  {Sponar}}, \bibinfo {author} {\bibfnamefont {A.}~\bibnamefont {Danner}},
  \bibinfo {author} {\bibfnamefont {V.}~\bibnamefont {Pecile}}, \bibinfo
  {author} {\bibfnamefont {N.}~\bibnamefont {Einsidler}}, \bibinfo {author}
  {\bibfnamefont {B.}~\bibnamefont {Demirel}}, \ and\ \bibinfo {author}
  {\bibfnamefont {Y.}~\bibnamefont {Hasegawa}},\ }\href {\doibase
  10.1103/PhysRevResearch.3.023175} {\bibfield  {journal} {\bibinfo  {journal}
  {Phys. Rev. Res.}\ }\textbf {\bibinfo {volume} {3}},\ \bibinfo {pages}
  {023175} (\bibinfo {year} {2021})}\BibitemShut {NoStop}%
\bibitem [{\citenamefont {Schindler}\ \emph {et~al.}(2020)\citenamefont
  {Schindler}, \citenamefont {\ifmmode~\check{S}\else \v{S}\fi{}afr\'anek},\
  and\ \citenamefont {Aguirre}}]{schindler2020entanglement}%
  \BibitemOpen
  \bibfield  {author} {\bibinfo {author} {\bibfnamefont {J.}~\bibnamefont
  {Schindler}}, \bibinfo {author} {\bibfnamefont {D.}~\bibnamefont
  {\ifmmode~\check{S}\else \v{S}\fi{}afr\'anek}}, \ and\ \bibinfo {author}
  {\bibfnamefont {A.}~\bibnamefont {Aguirre}},\ }\href {\doibase
  10.1103/PhysRevA.102.052407} {\bibfield  {journal} {\bibinfo  {journal}
  {Phys. Rev. A}\ }\textbf {\bibinfo {volume} {102}},\ \bibinfo {pages}
  {052407} (\bibinfo {year} {2020})}\BibitemShut {NoStop}%
\bibitem [{Note3()}]{Note3}%
  \BibitemOpen
  \bibinfo {note} {Operators ${\protect \hat {\Pi }}_i=\DOTSB \sum@ \slimits@
  _m\protect \hat {K}_{im}^\protect \dag \protect \hat {K}_{im}$ are called
  POVM elements. Their collection $\{{\protect \hat {\Pi }}_i\}$ a POVM
  (\protect \emph {positive operator-valued measure}) in the literature. Unlike
  the general measurement given by $\{{\protect \mathcal {A}}_i\}$, POVM by
  itself cannot determine the post-measurement state~\cite
  {kraus1983states}.}\BibitemShut {Stop}%
\bibitem [{Note4()}]{Note4}%
  \BibitemOpen
  \bibinfo {note} {Number of $m$ for each $i$ in decomposition~\protect \textup
  {\hbox {\mathsurround \z@ \protect \normalfont (\ignorespaces \ref
  {eq:kraus_dec}\unskip \@@italiccorr )}} is not unique, and the number of $m$
  for each $i$ can vary. Minimal number of $m$ such that the decomposition
  holds is called the \protect \emph {Kraus rank} of ${\protect \mathcal
  {A}}_i$. Projective measurements are a special type of Kraus rank-1
  measurements defined by ${\protect \mathcal {A}}_i(\protect \hat {X})\equiv
  \protect \hat {P}_i\protect \hat {X}\protect \hat {P}_i$.}\BibitemShut
  {Stop}%
\bibitem [{Note5()}]{Note5}%
  \BibitemOpen
  \bibinfo {note} {For better visibility, we dropped excessive parentheses, and
  superoperator on the left is understood as acting on the superoperator on the
  right, such as in ${\protect \mathcal {A}}_{i_2}{\protect \mathcal
  {A}}_{i_1}({\protect \hat {\rho }})\equiv {\protect \mathcal
  {A}}_{i_2}({\protect \mathcal {A}}_{i_1}({\protect \hat {\rho
  }}))$.}\BibitemShut {Stop}%
\bibitem [{Note6()}]{Note6}%
  \BibitemOpen
  \bibinfo {note} {By ${\protect \mathcal {C}}=\{\protect \hat {P}_i(\bullet
  )\protect \hat {P}_i\}$ we mean that ${\protect \mathcal {C}}=\{{\protect
  \mathcal {A}}_i\}$, where ${\protect \mathcal {A}}_i(\protect \hat
  {X})=\protect \hat {P}_i\protect \hat {X}\protect \hat {P}_i$ for an operator
  $\protect \hat {X}$.}\BibitemShut {Stop}%
\bibitem [{Note7()}]{Note7}%
  \BibitemOpen
  \bibinfo {note} {Alternative version: ${{\protect \boldsymbol {\protect
  \mathcal {C}}}}\DOTSB \leftarrow \protect \joinrel \rhook {\setbox \z@ \hbox
  {\mathsurround \z@ $\textstyle {\protect \boldsymbol {\protect \mathcal
  {C}}}$}\mathaccent "0365{{\protect \boldsymbol {\protect \mathcal {C}}}}}$
  iff there exists a function $\protect \mathcal {J}:{\protect \boldsymbol
  {i}}\rightarrow {\protect \boldsymbol {j}}$ such that for all ${\protect
  \boldsymbol {j}}$, ${\protect \hat {\Pi }}_{\protect \boldsymbol {j}}=\DOTSB
  \sum@ \slimits@ _{\protect \boldsymbol {i}}\delta _{{\protect \boldsymbol
  {j}}\protect \mathcal {J}({\protect \boldsymbol {i}})}{\protect \hat {\Pi
  }}_{\protect \boldsymbol {i}}$, where $\delta _{{\protect \boldsymbol
  {j}}\protect \tilde {{\protect \boldsymbol {j}}}}$ denotes the Kronecker
  delta.}\BibitemShut {Stop}%
\bibitem [{\citenamefont {Pellonpää}\ and\ \citenamefont
  {Tukiainen}(2017)}]{PELLONPAA2017}%
  \BibitemOpen
  \bibfield  {author} {\bibinfo {author} {\bibfnamefont {J.-P.}\ \bibnamefont
  {Pellonpää}}\ and\ \bibinfo {author} {\bibfnamefont {M.}~\bibnamefont
  {Tukiainen}},\ }\href {\doibase
  https://doi.org/10.1016/S0034-4877(17)30040-X} {\bibfield  {journal}
  {\bibinfo  {journal} {Rep. Math. Phys.}\ }\textbf {\bibinfo {volume} {79}},\
  \bibinfo {pages} {261} (\bibinfo {year} {2017})}\BibitemShut {NoStop}%
\bibitem [{Note8()}]{Note8}%
  \BibitemOpen
  \bibinfo {note} {This corresponds to the system coarse-graining ${\protect
  \mathcal {C}}=\{\protect \mathrm {tr}_B[({\protect \hat {I}}\otimes |m\rangle
  \langle m|) {\protect \hat {U}}(\bullet \otimes {\protect \hat {\sigma }})
  {\protect \hat {U}}^\protect \dag ({\protect \hat {I}}\otimes |m\rangle
  \langle m|)]\}$.}\BibitemShut {Stop}%
\bibitem [{Note9()}]{Note9}%
  \BibitemOpen
  \bibinfo {note} {The action of the unitary on the position-represented probe
  state is to translate it, i.e., $e^{-ix_0\protect \hat {p}}|\varphi _x\rangle
  =|\varphi _{x-x_0}\rangle $.}\BibitemShut {Stop}%
\bibitem [{Note10()}]{Note10}%
  \BibitemOpen
  \bibinfo {note} {Abusing the notation, what we mean by the present
  expressions is $|\varphi _x\rangle \equiv \DOTSI \intop \ilimits@ _{-\infty
  }^{\infty }dx\protect \tmspace +\thinmuskip {.1667em} \varphi _x|x\rangle $
  and $|\varphi _{x-\kappa \mu _ m}\rangle \equiv \DOTSI \intop \ilimits@
  _{-\infty }^{\infty }dx\protect \tmspace +\thinmuskip {.1667em} \varphi
  _{x-\kappa \mu m}|x\rangle $, when expressed in the position
  basis.}\BibitemShut {Stop}%
\bibitem [{\citenamefont {Ding}\ \emph {et~al.}(2018)\citenamefont {Ding},
  \citenamefont {Yi}, \citenamefont {Kim},\ and\ \citenamefont
  {Talkner}}]{talkner2018measurement}%
  \BibitemOpen
  \bibfield  {author} {\bibinfo {author} {\bibfnamefont {X.}~\bibnamefont
  {Ding}}, \bibinfo {author} {\bibfnamefont {J.}~\bibnamefont {Yi}}, \bibinfo
  {author} {\bibfnamefont {Y.~W.}\ \bibnamefont {Kim}}, \ and\ \bibinfo
  {author} {\bibfnamefont {P.}~\bibnamefont {Talkner}},\ }\href {\doibase
  10.1103/PhysRevE.98.042122} {\bibfield  {journal} {\bibinfo  {journal} {Phys.
  Rev. E}\ }\textbf {\bibinfo {volume} {98}},\ \bibinfo {pages} {042122}
  (\bibinfo {year} {2018})}\BibitemShut {NoStop}%
\bibitem [{\citenamefont {Son}\ \emph {et~al.}(2021)\citenamefont {Son},
  \citenamefont {Talkner},\ and\ \citenamefont {Thingna}}]{son2021monitoring}%
  \BibitemOpen
  \bibfield  {author} {\bibinfo {author} {\bibfnamefont {J.}~\bibnamefont
  {Son}}, \bibinfo {author} {\bibfnamefont {P.}~\bibnamefont {Talkner}}, \ and\
  \bibinfo {author} {\bibfnamefont {J.}~\bibnamefont {Thingna}},\ }\href
  {\doibase 10.1103/PRXQuantum.2.040328} {\bibfield  {journal} {\bibinfo
  {journal} {PRX Quantum}\ }\textbf {\bibinfo {volume} {2}},\ \bibinfo {pages}
  {040328} (\bibinfo {year} {2021})}\BibitemShut {NoStop}%
\bibitem [{\citenamefont {{Roncaglia}}\ \emph {et~al.}(2014)\citenamefont
  {{Roncaglia}}, \citenamefont {{Cerisola}},\ and\ \citenamefont
  {{Paz}}}]{roncaglia2014nonequilibrium}%
  \BibitemOpen
  \bibfield  {author} {\bibinfo {author} {\bibfnamefont {A.~J.}\ \bibnamefont
  {{Roncaglia}}}, \bibinfo {author} {\bibfnamefont {F.}~\bibnamefont
  {{Cerisola}}}, \ and\ \bibinfo {author} {\bibfnamefont {J.~P.}\ \bibnamefont
  {{Paz}}},\ }\href {\doibase 10.1103/PhysRevLett.113.250601} {\bibfield
  {journal} {\bibinfo  {journal} {Phys. Rev. Lett.}\ }\textbf {\bibinfo
  {volume} {113}},\ \bibinfo {eid} {250601} (\bibinfo {year}
  {2014})}\BibitemShut {NoStop}%
\bibitem [{\citenamefont {Nielsen}\ and\ \citenamefont
  {Chuang}(2010)}]{nielsen2010quantum}%
  \BibitemOpen
  \bibfield  {author} {\bibinfo {author} {\bibfnamefont {M.~A.}\ \bibnamefont
  {Nielsen}}\ and\ \bibinfo {author} {\bibfnamefont {I.~L.}\ \bibnamefont
  {Chuang}},\ }\href {\doibase 10.1017/CBO9780511976667} {\emph {\bibinfo
  {title} {Quantum Computation and Quantum Information: 10th Anniversary
  Edition}}}\ (\bibinfo  {publisher} {Cambridge University Press},\ \bibinfo
  {year} {2010})\BibitemShut {NoStop}%
\bibitem [{\citenamefont {Toninelli}\ \emph {et~al.}(2019)\citenamefont
  {Toninelli}, \citenamefont {Ndagano}, \citenamefont {Vall\'{e}s},
  \citenamefont {Sephton}, \citenamefont {Nape}, \citenamefont {Ambrosio},
  \citenamefont {Capasso}, \citenamefont {Padgett},\ and\ \citenamefont
  {Forbes}}]{Toninelli2019concepts}%
  \BibitemOpen
  \bibfield  {author} {\bibinfo {author} {\bibfnamefont {E.}~\bibnamefont
  {Toninelli}}, \bibinfo {author} {\bibfnamefont {B.}~\bibnamefont {Ndagano}},
  \bibinfo {author} {\bibfnamefont {A.}~\bibnamefont {Vall\'{e}s}}, \bibinfo
  {author} {\bibfnamefont {B.}~\bibnamefont {Sephton}}, \bibinfo {author}
  {\bibfnamefont {I.}~\bibnamefont {Nape}}, \bibinfo {author} {\bibfnamefont
  {A.}~\bibnamefont {Ambrosio}}, \bibinfo {author} {\bibfnamefont
  {F.}~\bibnamefont {Capasso}}, \bibinfo {author} {\bibfnamefont {M.~J.}\
  \bibnamefont {Padgett}}, \ and\ \bibinfo {author} {\bibfnamefont
  {A.}~\bibnamefont {Forbes}},\ }\href {\doibase 10.1364/AOP.11.000067}
  {\bibfield  {journal} {\bibinfo  {journal} {Adv. Opt. Phot.}\ }\textbf
  {\bibinfo {volume} {11}},\ \bibinfo {pages} {67} (\bibinfo {year}
  {2019})}\BibitemShut {NoStop}%
\bibitem [{\citenamefont {{Straupe}}(2016)}]{adaptive2016}%
  \BibitemOpen
  \bibfield  {author} {\bibinfo {author} {\bibfnamefont {S.~S.}\ \bibnamefont
  {{Straupe}}},\ }\href {\doibase 10.1134/S0021364016190024} {\bibfield
  {journal} {\bibinfo  {journal} {Soviet Journal of Experimental and
  Theoretical Physics Letters}\ }\textbf {\bibinfo {volume} {104}},\ \bibinfo
  {pages} {510} (\bibinfo {year} {2016})},\ \Eprint
  {http://arxiv.org/abs/1610.02840} {arXiv:1610.02840 [quant-ph]} \BibitemShut
  {NoStop}%
\bibitem [{\citenamefont {Granade}\ \emph {et~al.}(2017)\citenamefont
  {Granade}, \citenamefont {Ferrie},\ and\ \citenamefont
  {Flammia}}]{adaptive2017}%
  \BibitemOpen
  \bibfield  {author} {\bibinfo {author} {\bibfnamefont {C.}~\bibnamefont
  {Granade}}, \bibinfo {author} {\bibfnamefont {C.}~\bibnamefont {Ferrie}}, \
  and\ \bibinfo {author} {\bibfnamefont {S.~T.}\ \bibnamefont {Flammia}},\
  }\href {\doibase 10.1088/1367-2630/aa8fe6} {\bibfield  {journal} {\bibinfo
  {journal} {New Journal of Physics}\ }\textbf {\bibinfo {volume} {19}},\
  \bibinfo {pages} {113017} (\bibinfo {year} {2017})}\BibitemShut {NoStop}%
\bibitem [{\citenamefont {{Qi}}\ \emph {et~al.}(2017)\citenamefont {{Qi}},
  \citenamefont {{Hou}}, \citenamefont {{Wang}}, \citenamefont {{Dong}},
  \citenamefont {{Zhong}}, \citenamefont {{Li}}, \citenamefont {{Xiang}},
  \citenamefont {{Wiseman}}, \citenamefont {{Li}},\ and\ \citenamefont
  {{Guo}}}]{adaptive2017b}%
  \BibitemOpen
  \bibfield  {author} {\bibinfo {author} {\bibfnamefont {B.}~\bibnamefont
  {{Qi}}}, \bibinfo {author} {\bibfnamefont {Z.}~\bibnamefont {{Hou}}},
  \bibinfo {author} {\bibfnamefont {Y.}~\bibnamefont {{Wang}}}, \bibinfo
  {author} {\bibfnamefont {D.}~\bibnamefont {{Dong}}}, \bibinfo {author}
  {\bibfnamefont {H.-S.}\ \bibnamefont {{Zhong}}}, \bibinfo {author}
  {\bibfnamefont {L.}~\bibnamefont {{Li}}}, \bibinfo {author} {\bibfnamefont
  {G.-Y.}\ \bibnamefont {{Xiang}}}, \bibinfo {author} {\bibfnamefont {H.~M.}\
  \bibnamefont {{Wiseman}}}, \bibinfo {author} {\bibfnamefont {C.-F.}\
  \bibnamefont {{Li}}}, \ and\ \bibinfo {author} {\bibfnamefont {G.-C.}\
  \bibnamefont {{Guo}}},\ }\href {\doibase 10.1038/s41534-017-0016-4}
  {\bibfield  {journal} {\bibinfo  {journal} {npj Quantum Information}\
  }\textbf {\bibinfo {volume} {3}},\ \bibinfo {eid} {19} (\bibinfo {year}
  {2017})},\ \Eprint {http://arxiv.org/abs/1512.01634} {arXiv:1512.01634
  [quant-ph]} \BibitemShut {NoStop}%
\bibitem [{\citenamefont {Pereira}\ \emph {et~al.}(2018)\citenamefont
  {Pereira}, \citenamefont {Zambrano}, \citenamefont {Cort\'es-Vega},
  \citenamefont {Niklitschek},\ and\ \citenamefont {Delgado}}]{adaptive2018}%
  \BibitemOpen
  \bibfield  {author} {\bibinfo {author} {\bibfnamefont {L.}~\bibnamefont
  {Pereira}}, \bibinfo {author} {\bibfnamefont {L.}~\bibnamefont {Zambrano}},
  \bibinfo {author} {\bibfnamefont {J.}~\bibnamefont {Cort\'es-Vega}}, \bibinfo
  {author} {\bibfnamefont {S.}~\bibnamefont {Niklitschek}}, \ and\ \bibinfo
  {author} {\bibfnamefont {A.}~\bibnamefont {Delgado}},\ }\href {\doibase
  10.1103/PhysRevA.98.012339} {\bibfield  {journal} {\bibinfo  {journal} {Phys.
  Rev. A}\ }\textbf {\bibinfo {volume} {98}},\ \bibinfo {pages} {012339}
  (\bibinfo {year} {2018})}\BibitemShut {NoStop}%
\bibitem [{\citenamefont {{Quek}}\ \emph {et~al.}(2021)\citenamefont {{Quek}},
  \citenamefont {{Fort}},\ and\ \citenamefont {{Ng}}}]{adaptive2021}%
  \BibitemOpen
  \bibfield  {author} {\bibinfo {author} {\bibfnamefont {Y.}~\bibnamefont
  {{Quek}}}, \bibinfo {author} {\bibfnamefont {S.}~\bibnamefont {{Fort}}}, \
  and\ \bibinfo {author} {\bibfnamefont {H.~K.}\ \bibnamefont {{Ng}}},\ }\href
  {\doibase 10.1038/s41534-021-00436-9} {\bibfield  {journal} {\bibinfo
  {journal} {npj Quantum Information}\ }\textbf {\bibinfo {volume} {7}},\
  \bibinfo {eid} {105} (\bibinfo {year} {2021})}\BibitemShut {NoStop}%
\bibitem [{\citenamefont {Ladd}\ \emph {et~al.}(2010)\citenamefont {Ladd},
  \citenamefont {Jelezko}, \citenamefont {Laflamme}, \citenamefont {Nakamura},
  \citenamefont {Monroe},\ and\ \citenamefont {O’Brien}}]{ladd2010quantum}%
  \BibitemOpen
  \bibfield  {author} {\bibinfo {author} {\bibfnamefont {T.~D.}\ \bibnamefont
  {Ladd}}, \bibinfo {author} {\bibfnamefont {F.}~\bibnamefont {Jelezko}},
  \bibinfo {author} {\bibfnamefont {R.}~\bibnamefont {Laflamme}}, \bibinfo
  {author} {\bibfnamefont {Y.}~\bibnamefont {Nakamura}}, \bibinfo {author}
  {\bibfnamefont {C.}~\bibnamefont {Monroe}}, \ and\ \bibinfo {author}
  {\bibfnamefont {J.~L.}\ \bibnamefont {O’Brien}},\ }\href {\doibase
  10.1038/nature08812} {\bibfield  {journal} {\bibinfo  {journal} {Nature}\
  }\textbf {\bibinfo {volume} {464}},\ \bibinfo {pages} {45} (\bibinfo {year}
  {2010})}\BibitemShut {NoStop}%
\bibitem [{\citenamefont {{National Academies of Sciences, Engineering, and
  Medicine and others}}(2019)}]{national2019quantum}%
  \BibitemOpen
  \bibfield  {author} {\bibinfo {author} {\bibnamefont {{National Academies of
  Sciences, Engineering, and Medicine and others}}},\ }\href {\doibase
  10.17226/25196} {\emph {\bibinfo {title} {Quantum computing: progress and
  prospects}}}\ (\bibinfo  {publisher} {National Academies Press},\ \bibinfo
  {year} {2019})\BibitemShut {NoStop}%
\bibitem [{\citenamefont {Goold}\ \emph {et~al.}(2016)\citenamefont {Goold},
  \citenamefont {Huber}, \citenamefont {Riera}, \citenamefont {del Rio},\ and\
  \citenamefont {Skrzypczyk}}]{Goold2016role}%
  \BibitemOpen
  \bibfield  {author} {\bibinfo {author} {\bibfnamefont {J.}~\bibnamefont
  {Goold}}, \bibinfo {author} {\bibfnamefont {M.}~\bibnamefont {Huber}},
  \bibinfo {author} {\bibfnamefont {A.}~\bibnamefont {Riera}}, \bibinfo
  {author} {\bibfnamefont {L.}~\bibnamefont {del Rio}}, \ and\ \bibinfo
  {author} {\bibfnamefont {P.}~\bibnamefont {Skrzypczyk}},\ }\href {\doibase
  10.1088/1751-8113/49/14/143001} {\bibfield  {journal} {\bibinfo  {journal}
  {J. Phys. A: Math. Th.}\ }\textbf {\bibinfo {volume} {49}},\ \bibinfo {pages}
  {143001} (\bibinfo {year} {2016})}\BibitemShut {NoStop}%
\bibitem [{\citenamefont {{Vinjanampathy}}\ and\ \citenamefont
  {{Anders}}(2016)}]{Vinjanampathy2016quantum}%
  \BibitemOpen
  \bibfield  {author} {\bibinfo {author} {\bibfnamefont {S.}~\bibnamefont
  {{Vinjanampathy}}}\ and\ \bibinfo {author} {\bibfnamefont {J.}~\bibnamefont
  {{Anders}}},\ }\href {\doibase 10.1080/00107514.2016.1201896} {\bibfield
  {journal} {\bibinfo  {journal} {Cont. Phys.}\ }\textbf {\bibinfo {volume}
  {57}},\ \bibinfo {pages} {545} (\bibinfo {year} {2016})}\BibitemShut
  {NoStop}%
\bibitem [{\citenamefont {Binder}\ \emph {et~al.}(2019)\citenamefont {Binder},
  \citenamefont {Correa}, \citenamefont {Gogolin}, \citenamefont {Anders},\
  and\ \citenamefont {Adesso}}]{binder2019thermodynamics}%
  \BibitemOpen
  \bibfield  {author} {\bibinfo {author} {\bibfnamefont {F.}~\bibnamefont
  {Binder}}, \bibinfo {author} {\bibfnamefont {L.~A.}\ \bibnamefont {Correa}},
  \bibinfo {author} {\bibfnamefont {C.}~\bibnamefont {Gogolin}}, \bibinfo
  {author} {\bibfnamefont {J.}~\bibnamefont {Anders}}, \ and\ \bibinfo {author}
  {\bibfnamefont {G.}~\bibnamefont {Adesso}},\ }\href {\doibase
  10.1007/978-3-319-99046-0} {\emph {\bibinfo {title} {Thermodynamics in the
  Quantum Regime. Fundamental Theories of Physics}}},\ Vol.\ \bibinfo {volume}
  {195}\ (\bibinfo  {publisher} {Springer},\ \bibinfo {year}
  {2019})\BibitemShut {NoStop}%
\bibitem [{\citenamefont {Thingna}\ \emph {et~al.}(2017)\citenamefont
  {Thingna}, \citenamefont {Barra},\ and\ \citenamefont
  {Esposito}}]{Thingna17}%
  \BibitemOpen
  \bibfield  {author} {\bibinfo {author} {\bibfnamefont {J.}~\bibnamefont
  {Thingna}}, \bibinfo {author} {\bibfnamefont {F.}~\bibnamefont {Barra}}, \
  and\ \bibinfo {author} {\bibfnamefont {M.}~\bibnamefont {Esposito}},\ }\href
  {\doibase 10.1103/PhysRevE.96.052132} {\bibfield  {journal} {\bibinfo
  {journal} {Phys. Rev. E}\ }\textbf {\bibinfo {volume} {96}},\ \bibinfo
  {pages} {052132} (\bibinfo {year} {2017})}\BibitemShut {NoStop}%
\bibitem [{\citenamefont {Marcantoni}\ \emph {et~al.}(2017)\citenamefont
  {Marcantoni}, \citenamefont {Alipour}, \citenamefont {Benatti}, \citenamefont
  {Floreanini},\ and\ \citenamefont {Rezakhani}}]{marcantoni2017entropy}%
  \BibitemOpen
  \bibfield  {author} {\bibinfo {author} {\bibfnamefont {S.}~\bibnamefont
  {Marcantoni}}, \bibinfo {author} {\bibfnamefont {S.}~\bibnamefont {Alipour}},
  \bibinfo {author} {\bibfnamefont {F.}~\bibnamefont {Benatti}}, \bibinfo
  {author} {\bibfnamefont {R.}~\bibnamefont {Floreanini}}, \ and\ \bibinfo
  {author} {\bibfnamefont {A.}~\bibnamefont {Rezakhani}},\ }\href {\doibase
  10.1038/s41598-017-12595-x} {\bibfield  {journal} {\bibinfo  {journal} {Sci.
  Rep.}\ }\textbf {\bibinfo {volume} {7}},\ \bibinfo {pages} {1} (\bibinfo
  {year} {2017})}\BibitemShut {NoStop}%
\bibitem [{\citenamefont {Thingna}\ \emph {et~al.}(2019)\citenamefont
  {Thingna}, \citenamefont {Esposito},\ and\ \citenamefont
  {Barra}}]{Thingna19}%
  \BibitemOpen
  \bibfield  {author} {\bibinfo {author} {\bibfnamefont {J.}~\bibnamefont
  {Thingna}}, \bibinfo {author} {\bibfnamefont {M.}~\bibnamefont {Esposito}}, \
  and\ \bibinfo {author} {\bibfnamefont {F.}~\bibnamefont {Barra}},\ }\href
  {\doibase 10.1103/PhysRevE.99.042142} {\bibfield  {journal} {\bibinfo
  {journal} {Phys. Rev. E}\ }\textbf {\bibinfo {volume} {99}},\ \bibinfo
  {pages} {042142} (\bibinfo {year} {2019})}\BibitemShut {NoStop}%
\bibitem [{\citenamefont {{Martins}}\ \emph {et~al.}(2021)\citenamefont
  {{Martins}}, \citenamefont {{Defaveri}}, \citenamefont {{Soares-Pinto}},
  \citenamefont {{Queir{\'o}s}},\ and\ \citenamefont
  {{Morgado}}}]{martins2021nonmarkovianity}%
  \BibitemOpen
  \bibfield  {author} {\bibinfo {author} {\bibfnamefont {J.}~\bibnamefont
  {{Martins}}}, \bibinfo {author} {\bibfnamefont {L.}~\bibnamefont
  {{Defaveri}}}, \bibinfo {author} {\bibfnamefont {D.~O.}\ \bibnamefont
  {{Soares-Pinto}}}, \bibinfo {author} {\bibfnamefont {S.~M.~D.}\ \bibnamefont
  {{Queir{\'o}s}}}, \ and\ \bibinfo {author} {\bibfnamefont {W.~A.~M.}\
  \bibnamefont {{Morgado}}},\ }\href {\doibase 10.1103/PhysRevE.103.022108}
  {\bibfield  {journal} {\bibinfo  {journal} {Phys. Rev. E}\ }\textbf {\bibinfo
  {volume} {103}},\ \bibinfo {eid} {022108} (\bibinfo {year}
  {2021})}\BibitemShut {NoStop}%
\bibitem [{\citenamefont {von Neumann}(2010{\natexlab{b}})}]{vonNeumann1929}%
  \BibitemOpen
  \bibfield  {author} {\bibinfo {author} {\bibfnamefont {J.}~\bibnamefont {von
  Neumann}},\ }\href {\doibase 10.1140/epjh/e2010-00008-5} {\bibfield
  {journal} {\bibinfo  {journal} {Eur. Phys. J. H}\ }\textbf {\bibinfo {volume}
  {35}},\ \bibinfo {pages} {201} (\bibinfo {year}
  {2010}{\natexlab{b}})}\BibitemShut {NoStop}%
\bibitem [{\citenamefont {von
  Neumann}(1927{\natexlab{b}})}]{Neumann1927thermodynamik}%
  \BibitemOpen
  \bibfield  {author} {\bibinfo {author} {\bibfnamefont {J.}~\bibnamefont {von
  Neumann}},\ }\href {http://eudml.org/doc/59231} {\bibfield  {journal}
  {\bibinfo  {journal} {Nachrichten von der Gesellschaft der Wissenschaften zu
  Göttingen, Mathematisch-Physikalische Klasse}\ }\textbf {\bibinfo {volume}
  {1927}},\ \bibinfo {pages} {273} (\bibinfo {year}
  {1927}{\natexlab{b}})}\BibitemShut {NoStop}%
\bibitem [{\citenamefont {Sheridan}(2020)}]{sheridan2020man}%
  \BibitemOpen
  \bibfield  {author} {\bibinfo {author} {\bibfnamefont {E.}~\bibnamefont
  {Sheridan}},\ }\href {https://arxiv.org/abs/2007.06673} {\bibfield  {journal}
  {\bibinfo  {journal} {arXiv:2007.06673}\ } (\bibinfo {year}
  {2020})}\BibitemShut {NoStop}%
\bibitem [{Note11()}]{Note11}%
  \BibitemOpen
  \bibinfo {note} {Von Neumann mentions that this definition was E. Wigner's
  idea, while saying that there is no need to go into a general theory. We
  searched for a follow-up paper by E. Wigner, where this general theory would
  be discussed, but could not find any---it seems that his interests at the
  time took him elsewhere---to develop the theory of symmetries in quantum
  mechanics, and then apply it to derive essential properties of the nuclei,
  for which he received 1963 Nobel Prize in Physics~\cite
  {wigner1963nobel}.}\BibitemShut {Stop}%
\bibitem [{Note12()}]{Note12}%
  \BibitemOpen
  \bibinfo {note} {Consider a point in phase-space that belongs to an energy
  surface $E$. For this phase-space point, the original Boltzmann definition
  (which we call microcanonical/surface entropy here) reads $S_{\protect
  \mathrm {B}}=\protect \qopname \relax o{ln}V_E$. Also the von Neumann's
  definition [Eq.~\protect \textup {\hbox {\mathsurround \z@ \protect
  \normalfont (\ignorespaces \ref {eq:vNOE}\unskip \@@italiccorr )}}], if a
  quantum state belongs into an energy subspace $E$, will have $p_E=1$ and
  therefore it gives the same value $S(\psi )=\protect \qopname \relax
  o{ln}V_E$.}\BibitemShut {Stop}%
\bibitem [{\citenamefont {Percival}(1961)}]{percival1961almost}%
  \BibitemOpen
  \bibfield  {author} {\bibinfo {author} {\bibfnamefont {I.~C.}\ \bibnamefont
  {Percival}},\ }\href {\doibase 10.1063/1.1703705} {\bibfield  {journal}
  {\bibinfo  {journal} {J. Math. Phys.}\ }\textbf {\bibinfo {volume} {2}},\
  \bibinfo {pages} {235} (\bibinfo {year} {1961})}\BibitemShut {NoStop}%
\bibitem [{\citenamefont {Tolman}(1979)}]{tolman1979principles}%
  \BibitemOpen
  \bibfield  {author} {\bibinfo {author} {\bibfnamefont {R.~C.}\ \bibnamefont
  {Tolman}},\ }\href@noop {} {\emph {\bibinfo {title} {The principles of
  statistical mechanics}}}\ (\bibinfo  {publisher} {Courier Corporation},\
  \bibinfo {year} {1979})\BibitemShut {NoStop}%
\bibitem [{\citenamefont {Penrose}(1979)}]{penrose1979foundations}%
  \BibitemOpen
  \bibfield  {author} {\bibinfo {author} {\bibfnamefont {O.}~\bibnamefont
  {Penrose}},\ }\href {\doibase 10.1088/0034-4885/42/12/002} {\bibfield
  {journal} {\bibinfo  {journal} {Rep. Prog. Phys.}\ }\textbf {\bibinfo
  {volume} {42}},\ \bibinfo {pages} {1937} (\bibinfo {year}
  {1979})}\BibitemShut {NoStop}%
\bibitem [{\citenamefont {Zubarev}\ \emph {et~al.}(1996)\citenamefont
  {Zubarev}, \citenamefont {Morozov},\ and\ \citenamefont
  {Ropke}}]{zubarev1996statistical}%
  \BibitemOpen
  \bibfield  {author} {\bibinfo {author} {\bibfnamefont {D.~N.}\ \bibnamefont
  {Zubarev}}, \bibinfo {author} {\bibfnamefont {V.}~\bibnamefont {Morozov}}, \
  and\ \bibinfo {author} {\bibfnamefont {G.}~\bibnamefont {Ropke}},\
  }\href@noop {} {\emph {\bibinfo {title} {Statistical Mechanics of
  Nonequilibrium Processes, Volume 1 (See 3527400834): Basic Concepts, Kinetic
  Theory}}},\ Vol.~\bibinfo {volume} {1}\ (\bibinfo  {publisher} {Wiley-VCH},\
  \bibinfo {year} {1996})\BibitemShut {NoStop}%
\bibitem [{\citenamefont {{Nauenberg}}(2004)}]{nauenberg2004evolution}%
  \BibitemOpen
  \bibfield  {author} {\bibinfo {author} {\bibfnamefont {M.}~\bibnamefont
  {{Nauenberg}}},\ }\href {\doibase 10.1119/1.1632488} {\bibfield  {journal}
  {\bibinfo  {journal} {Am. J. Phys.}\ }\textbf {\bibinfo {volume} {72}},\
  \bibinfo {pages} {313} (\bibinfo {year} {2004})}\BibitemShut {NoStop}%
\bibitem [{\citenamefont {Ohya}\ and\ \citenamefont
  {Petz}(2004)}]{ohya2004quantum}%
  \BibitemOpen
  \bibfield  {author} {\bibinfo {author} {\bibfnamefont {M.}~\bibnamefont
  {Ohya}}\ and\ \bibinfo {author} {\bibfnamefont {D.}~\bibnamefont {Petz}},\
  }\href@noop {} {\emph {\bibinfo {title} {Quantum entropy and its use}}}\
  (\bibinfo  {publisher} {Springer Science \& Business Media},\ \bibinfo {year}
  {2004})\BibitemShut {NoStop}%
\bibitem [{\citenamefont {Jensen}(1906)}]{jensen1906fonctions}%
  \BibitemOpen
  \bibfield  {author} {\bibinfo {author} {\bibfnamefont {J.~L. W.~V.}\
  \bibnamefont {Jensen}},\ }\href@noop {} {\bibfield  {journal} {\bibinfo
  {journal} {Acta mathematica}\ }\textbf {\bibinfo {volume} {30}},\ \bibinfo
  {pages} {175} (\bibinfo {year} {1906})}\BibitemShut {NoStop}%
\bibitem [{\citenamefont {Chandler}(1987)}]{chandler1987introduction}%
  \BibitemOpen
  \bibfield  {author} {\bibinfo {author} {\bibfnamefont {D.}~\bibnamefont
  {Chandler}},\ }\href@noop {} {\bibfield  {journal} {\bibinfo  {journal}
  {Mechanics. Oxford University Press, Oxford, UK}\ }\textbf {\bibinfo {volume}
  {5}},\ \bibinfo {pages} {449} (\bibinfo {year} {1987})}\BibitemShut {NoStop}%
\bibitem [{\citenamefont {Needham}(1993)}]{needham1993visual}%
  \BibitemOpen
  \bibfield  {author} {\bibinfo {author} {\bibfnamefont {T.}~\bibnamefont
  {Needham}},\ }\href@noop {} {\bibfield  {journal} {\bibinfo  {journal} {The
  American mathematical monthly}\ }\textbf {\bibinfo {volume} {100}},\ \bibinfo
  {pages} {768} (\bibinfo {year} {1993})}\BibitemShut {NoStop}%
\bibitem [{\citenamefont {Trefethen}\ and\ \citenamefont
  {Weideman}(2014)}]{trefethen2014exponentially}%
  \BibitemOpen
  \bibfield  {author} {\bibinfo {author} {\bibfnamefont {L.~N.}\ \bibnamefont
  {Trefethen}}\ and\ \bibinfo {author} {\bibfnamefont {J.}~\bibnamefont
  {Weideman}},\ }\href {https://epubs.siam.org/doi/pdf/10.1137/130932132}
  {\bibfield  {journal} {\bibinfo  {journal} {siam REVIEW}\ }\textbf {\bibinfo
  {volume} {56}},\ \bibinfo {pages} {385} (\bibinfo {year} {2014})}\BibitemShut
  {NoStop}%
\bibitem [{\citenamefont {Mayer}\ \emph {et~al.}(1963)\citenamefont {Mayer},
  \citenamefont {Jensen},\ and\ \citenamefont {Wigner}}]{wigner1963nobel}%
  \BibitemOpen
  \bibfield  {author} {\bibinfo {author} {\bibfnamefont {M.~G.}\ \bibnamefont
  {Mayer}}, \bibinfo {author} {\bibfnamefont {J.~H.~D.}\ \bibnamefont
  {Jensen}}, \ and\ \bibinfo {author} {\bibfnamefont {E.}~\bibnamefont
  {Wigner}},\ }\href {https://www.nobelprize.org/prizes/physics/1963/summary/}
  {\enquote {\bibinfo {title} {{The Nobel Prize in Physics 1963}},}\ }
  (\bibinfo {year} {1963})\BibitemShut {NoStop}%
\end{thebibliography}%

\end{document}